\newcommand{\commentout}[1]{}
\newcommand{\CC}{\mathbb{C}}
\newcommand{\RR}{\mathbb{R}}
\newcommand{\EE}{\mathbb{E}}
\newcommand{\be}{\boldsymbol{e}}
\newcommand{\bx}{\boldsymbol{x}}
\newcommand{\hY}{\widehat{Y}}
\newcommand{\hU}{\widehat{U}}
\newcommand{\calR}{\mathcal{R}}
\newcommand{\calJ}{\mathcal{J}}
\newcommand{\calE}{\mathcal{E}}
\newcommand{\calH}{\mathcal{H}}
\newcommand{\md}{{\rm md}}
\renewcommand{\subset}{\subseteq}
\renewcommand{\hat}{\widehat}
\renewcommand{\epsilon}{\varepsilon}
\renewcommand{\Re}{\text{Re}}
\def\SRF{{\rm SRF}}
\def\<{\big\langle}
\def\>{\big\rangle}
\def\({\Big(}
\def\){\Big)}
\def\C{\mathbb{C}}
\def\e{\boldsymbol{e}}
\def\E{\mathbb{E}}
\def\calH{\mathcal{H}}
\def\P{\mathbb{P}}
\def\R{\mathbb{R}}
\def\calR{\mathcal{R}}
\def\x{\boldsymbol{x}}
\def\y{\boldsymbol{y}}
\def\T{\mathbb{T}}
\def\Z{\mathbb{Z}}
\DeclareMathOperator{\dist}{dist}
\DeclareMathOperator{\diag}{diag}
\DeclareMathOperator{\rank}{rank}
\newtheorem{theorem}{Theorem}[section]
\newtheorem{lemma}[theorem]{Lemma}
\theoremstyle{definition}
\newtheorem{definition}[theorem]{Definition}
\newtheorem{remark}[theorem]{Remark}
\newtheorem{assumption}[theorem]{Assumption}
\numberwithin{equation}{section}
\begin{document}

\title{Stability and Super-resolution of MUSIC and ESPRIT for Multi-snapshot Spectral Estimation}

\author{Weilin~Li, 
        ~Zengying~Zhu, ~Weiguo~Gao,        and~Wenjing~Liao
\thanks{W. Li is with the City College of New York, e-mail: wli6@ccny.cuny.edu. 
This research is supported by the AMS-Simons Travel Grant.
}
\thanks{Z. Zhu is with School of Mathematical Sciences, Fudan University, Shanghai, China, e-mail: zengyingzhu@fudan.edu.cn.}
\thanks{W. Gao is with School of Mathematical Sciences, Fudan University, Shanghai, China, e-mail: wggao@fudan.edu.cn. 
This research is partially supported by National Key R$\&$D Program of China under Grant No. 2021YFA1003305.
}
\thanks{W. Liao is with School of Mathematics, Georgia Institute of Technology, Atlanta, GA, USA, e-mail: wliao60@gatech.edu. This research is supported by NSF DMS 1818751, DMS 2012652 and NSF CAREER DMS 2145167.
}
}

\maketitle

\begin{abstract}
This paper studies the spectral estimation problem of estimating the locations of a fixed number of point sources given multiple snapshots of Fourier measurements collected by a uniform array of sensors. We prove novel stability bounds for MUSIC and ESPRIT as a function of the noise standard deviation, number of snapshots, source amplitudes, and support. Our most general result is a perturbation bound of the signal space in terms of the minimum singular value of Fourier matrices. When the point sources are located in several separated clumps, we provide an explicit upper bound of the noise-space correlation perturbation error in MUSIC and the support error in ESPRIT in terms of a Super-Resolution Factor (SRF). The upper bound for ESPRIT is then compared with a new Cram\'er-Rao lower bound for the clumps model. As a result, we show that ESPRIT is comparable to that of the optimal unbiased estimator(s) in terms of the dependence on noise, number of snapshots and SRF. As a byproduct of our analysis, we discover several fundamental differences between the single-snapshot and multi-snapshot problems. Our theory is validated by numerical experiments. 
\end{abstract}

\begin{IEEEkeywords}
Multi-snapshot spectral estimation, stability, super-resolution, MUSIC algorithm, optimality of ESPRIT algorithm, Cram\'er-Rao lower bound, array imaging.
\end{IEEEkeywords}
\IEEEpeerreviewmaketitle

\section{Introduction}

\subsection{Problem formulation and motivation}

This paper studies the spectral estimation problem of estimating the locations of a fixed number of point sources given multiple time snapshots of Fourier measurements collected by a uniform array of sensors. Let $S$ be the number of point sources which we assume are located in the set $\Omega:=\{\omega_k\}_{k=1}^S\subseteq \mathbb{T}$, where $\mathbb{T}:=[0,1)$ is the torus.  We denote the source amplitudes at time $t>0$ by the complex-valued vector $\boldsymbol{x}(t):=\{x_j(t)\}_{j=1}^S\in \mathbb{C}^S$. At time $t>0$, a uniform array of $M\geq S$ sensors collects a noisy measurement vector 
	\begin{equation}
		\y(t)
		:=\Phi \x(t)+\e(t) \in \CC^{M},
		\label{eqlinearsystem}
	\end{equation}
	where $\Phi:=\Phi_M(\Omega) \in \mathbb{C}^{M \times S}$ is the Fourier sensing matrix with entries
	\[
	\Phi_{k,j}
	=e^{-2\pi i k\omega_j}, 
	\]
	for $ k=0,\dots,M-1$ and $j=1,\dots,S,$
	and $\be(t) \in \C^{M}$ represents a noise vector at time $t$. 
	The sensing matrix $\Phi_M(\Omega)$ only depends on $\Omega$ and $M$, and is independent of time $t$. It does not necessarily contain orthogonal columns unless $\Omega$ is a subset of $\{k/M\}_{k=0}^{M-1}$.
	
	{Typically in practice, samples of $\y(t)$ are collected at $L$ times $t_1<\cdots<t_L$ and each $\y(t_\ell)$ is called a {\it snapshot}. This paper studies the {\it multi-snapshot spectral estimation} problem of estimating the source locations $\Omega$ from these $L$ snapshots: $\{\y(t_\ell)\}_{\ell =1}^L$. } 
	This problem appears in many interesting imaging and signal processing applications, including 
	inverse scattering \cite{fannjiang2011music}, Direction-Of-Arrival (DOA) estimation \cite{krim1996array,schmidt1986multiple,ottersten1993exact} and spectral analysis \cite{stoica1997introduction}. 

	Various methods have been developed by the imaging and signal processing communities for multi-snapshot spectral estimation \cite{prony1795essai,schmidt1986multiple,kailath1989esprit,hua1990matrixpencil,barabell1983improving}. {A class of algorithms commonly referred to as subspace methods are widely used in applications due to their impressive empirical performance, especially in the context of super-resolution imaging where some point sources are close together}. 
	In particular, MUSIC \cite{schmidt1986multiple} and ESPRIT \cite{kailath1989esprit} are among the most popular subspace methods. {MUSIC and ESPRIT are applicable in multi-dimensions as well \cite{roemer2014analytical}.} {The success of MUSIC and ESPRIT has been demonstrated in many simulations and applications \cite{tschudin1999comparison}. }
	
	Despite great developments in numerical methods {and extensive empirical studies}, there are many open theoretical questions regarding the stability and super-resolution of subspace methods. This paper focuses on the fundamental {performance analysis} question: \textit{What is the stability of MUSIC and ESPRIT, as a function of the noise standard deviation, the number of snapshots, source amplitudes, and the support set?}

	Prior theoretical work on related super-resolution problems \cite{donoho1992superresolution,li2020super,li2021stable} suggest that the stability of the multi-snapshot spectral estimation problem crucially depends on the locations of the point sources. The {\it minimum separation} of $\Omega$ is 
\[
\Delta
:=\Delta(\Omega)
:=\min_{1\leq j<k\leq S} |\omega_j-\omega_k|_\T,
\]
where $|\omega|_\T:=\min_{n\in\Z} |\omega-n|$. Since $M$ consecutive Fourier coefficients are collected during each time snapshot, the standard imaging resolution is $1/(M-1)$, which is referred to as the {\it Rayleigh Length (RL)} in optics \cite{den1997resolution}.

We expect the stability of this problem to be different depending on the relationship between $\Delta$ and $1/(M-1)$. In this article, we are particularly interested in is the {\it super-resolution regime} where $\Delta < 1/(M-1)$, so that there exist $\omega_j$ and $\omega_k$ whose distance is smaller than 1 RL. When this is the case, we define the {\it super-resolution factor} (SRF) by 
\begin{equation}
\SRF := \frac{1}{(M-1)\Delta}
\geq 1. 
\label{eqsrf}
 \end{equation}

	\subsection{Main results}
	
	The main contribution of this paper is a detailed performance analysis of MUSIC and ESPRIT in terms of the fundamental model quantities: the number of snapshots, noise standard deviation, support set, and source amplitudes.
		\begin{enumerate}[(a)]
	
	\item 
	Our most general upper bounds for the perturbation of MUSIC (Theorem \ref{thmmusicstability}) and ESPRIT (Theorem \ref{thmespritstability}) show that their average errors can be controlled by a term that is linear in 
	\begin{equation}
	\frac{1}{{\sigma}_S(\Phi)\sqrt{\lambda_S(X)}} \times \frac{\text{Noise}}{\sqrt L} ,
	\label{eqpropbound}
	\end{equation}
	where Noise represents the noise standard deviation to be specified in Assumption \eqref{assump:main}, ${\sigma}_S(\Phi)$ denotes the $S$-th largest singular value of $\Phi$, and $\lambda_S(X)$ is the minimum eigenvalue of the amplitude covariance matrix $X$ to be defined in \eqref{eqX}. This inequality not only captures the correct dependence on the number of snapshots and noise, but it also highlights how the stability of both algorithms implicitly depends on the configuration of the support set $\Omega$ through the quantity $\sigma_S(\Phi)$.
	
	\item
	To give a more transparent upper bound in the super-resolution regime, we consider a specific scenario where $\Omega$ consists of separated clumps and the point sources in each clump can be close together. Our upper bounds for the stability of MUSIC (Theorem \ref{thm:musicsuper1}) and ESPRIT (Theorem \ref{thm:espritsuper1}) are proportional to 
	$$\frac{{\SRF}^{\lambda-1} }{\sqrt{{\lambda}_S(X)}} \times \frac{\text{Noise}}{\sqrt L},$$
	where $\lambda$ is the cardinality of the largest clump. This indicates that, for challenging super-resolution problems where $\SRF^{\lambda-1}$ is large, additional snapshots or higher quality samples must be taken for compensation.  
	
	\item 
	We prove a new Cram\'er-Rao lower bound (Theorem \ref{thm:CR}) under a specific separated clumps model in Theorem \ref{thm:CR}. This lower bound matches our upper bound for ESPRIT in terms of the dependence on noise, $L$ and $\SRF$, thereby certifying that the performance of ESPRIT is comparable to that of the optimal unbiased estimator(s) for this model.

	\end{enumerate}

	\subsection{Comparison and connection to other works}
	
	Prior resolution analysis of subspace methods for multi-snapshot spectral estimation \cite{zhang1995probability,ferreol2010statistical,rao1989performance}  focused on a special case where there are only two closely spaced point sources. The papers \cite{zhang1995probability,ferreol2010statistical} analyzed the probability that two sources are correctly detected instead of being misspecified as a single source. It was shown in  \cite{rao1989performance} that the ESPRIT support error is upper bounded by a term on the order of ${\SRF}/({\sqrt{M L}} \times  \text{SNR)}$ for certain Signal-to-Noise Ratio (SNR) defined in the referenced article.
			 
	To our knowledge, there are no other theoretical works that address more complicated situations beyond two closely spaced point sources, e.g., when the support contains multiple clumps of point sources and the point sources in each clump can be closely spaced. A theoretical analysis for more complicated situations could be valuable. For example, a recent article \cite{liu2020joint} empirically compared the performance of several numerical methods, including MUSIC, for DOA estimation of several point sources arranged in complicated ways. {Other models have been considered, see \cite{de2004target, xu1997new}, but this direction is beyond the scope of this work.}

	In literature, many existing works have addressed  the sensitivity of multi-snapshot MUSIC \cite{friedlander1990sensitivity,swindlehurst1992performance,li1992sensitivity,friedlander1994effects} and ESPRIT \cite{swindlehurst1990sensitivity,li1992sensitivity,nekrutkin2010perturbation}.
Many of these works focus on sensitivity to model errors, which arise from antenna
array perturbations, sensor gain and phase errors, diagonal
noise covariance perturbations, etc.	
	{A first-order perturbation analysis is given in \cite{friedlander1990sensitivity,swindlehurst1992performance,li1992sensitivity,li1993performance,friedlander1994effects,rao1989performance} for MUSIC, in \cite{swindlehurst1990sensitivity,li1992sensitivity,nekrutkin2010perturbation} for ESPRIT, in \cite{roemer2014analytical} for tensor-ESPRIT in the multidimensional case, and in \cite{djermoune2009perturbation} for the estimation of a damped complex exponential.
	 The results in \cite{li1993performance,rao1989performance,roemer2014analytical} and many others  prove that, the sensitivity of MUSIC and ESPRIT is proportional to Noise$/\sqrt L$. 	 However, they are
	  implicit for a super-resolution analysis since the dependence on SRF is often hidden in some matrix eigenvalue.    }
	  
	  {When MUSIC and ESPRIT are used in applications, many interesting techniques have been developed to improve its performance. Spatial smoothing \cite{pillai1989performance,rao1990effect,steinwandt2017performance} is widely used when some sources are coherent, or if only a small number of snapshots is available. A first-order perturbation analysis with spatial smoothing is given in \cite{pillai1989performance,rao1990effect} and in \cite{steinwandt2017performance} in the multi-dimensional case. When the source amplitudes are non-circular, some analysis can be found in \cite{steinwandt2017performance}. Mathematical theories on the sensitivity of MUSIC and ESPRIT usually assume that the number of sources $S$ is known. In practice, an accurate estimation of $S$ is an interesting problem. 
	  We refer to \cite{zhao1986detection,wax1989detection,kritchman2009non,liavas2001behavior,xu1994detection} for some interesting techniques and statistical analysis on the estimation of $S$.
	  }

	The classical Cram\'er-Rao Bound (CRB) \cite{cramer1999mathematical} provided a lower bound on the variance of any unbiased estimator of the support, 	and was extensively investigated in  \cite{stoica1989music,stoica1990music,lee1992cramer,koochakzadeh2016cramer}. The paper that is most relevant to us is \cite{lee1992cramer}, which gave a Cram\'er-Rao lower bound when all $S$ point sources are almost equally spaced in a small interval. This result is summarized in detail in Section \ref{secCRBA}.
	
	In recent years, spectral estimation has been extensively studied in the single-snapshot scenario where $L=1$, and primarily from a deterministic viewpoint. 
	Performance guarantees of a convex optimization algorithm were established in \cite{candes2013super, 
	duval2015exact} for sufficiently separated point sources and in \cite{morgenshtern2016super,benedetto2020super} for super-resolution. On the other hand, MUSIC and ESPRIT are non-convex methods, and their stability and super-resolution limits were addressed in \cite{li2021stable,li2019conditioning,li2020super}. A detailed review for the single-snapshot setting can be found in \cite{li2021stable}. There are fundamental differences between the single-snapshot and multi-snapshot problems, which we discuss in Section \ref{sec:singlevsmultiple}. Due to these differences, the results in this paper can not be directly deduced from the ones in \cite{li2021stable,li2019conditioning,li2020super}. 
	
	A key quantity in the mathematics of super-resolution is $\sigma_S(\Phi)$, which crucially depends on the support geometry. Explicit lower bounds for $\sigma_S(\Phi)$ were derived in \cite{liao2016music,moitra2015matrixpencil,batenkov2020conditioning,kunis2020condition,diederichs2019well,kunis2020smallest,batenkov2020spectral,li2021stable} for various support models. This paper uses the lower bound under the separated clumps model in \cite{li2021stable}.

	\subsection{Organization}
	     This paper is organized as follows. We introduce assumptions and define important terminology in Section \ref{secassumption}. We define the signal space associated with this problem and derive an error between the true and empirical signal spaces in Section \ref{secsignal}. We present our stability analysis for MUSIC and ESPRIT in Section \ref{secstability}, and then consider a separated clumps model in Section \ref{secsuperresolution}. To understand the fundamental limits of spectral estimation, we deduce a new Cram\'er-Rao lower bound under a specific separated clumps model in Section \ref{seccramerrao} and compare it to our upper bound for ESPRIT. We present numerical experiments in Section \ref{section:num_exp} and discuss the differences between the multi-snapshot versus single-snapshot problems in Section \ref{sec:singlevsmultiple}. All proofs are contained in Section \ref{sec:proofs}.

\subsection{Notation}
We use $\RR$ and $\CC$ to denote the set of real and complex numbers respectively.
	For $z \in \CC$, we denote its angle by $\arg z \in [0,2\pi)$. For a vector $\x \in \CC^{S} $, we denote  the diagonal matrix with $\x$ on the diagonal by $\diag(\bx) \in \CC^{S \times S}$. We let $\|\x\|$ denote the Euclidean norm of $\x$ {and $\x^*$ be its conjugate transpose. Let $I_k$ be the $k\times k$ identity matrix and $\delta_{t,s}$ be the Kronecker delta; that is, $\delta_{t,s}= 1$ if $t=s$, and $\delta_{t,s}= 0$ if $t\neq s$. }

	For a matrix $A$, we use $\sigma_1(A)\geq \sigma_2(A)\geq \cdots$ to denote the singular values of $A$ listed in non-increasing order, and each singular value appears according to its multiplicity. If $A$ has real eigenvalues, we use $\lambda_1(A)\geq \lambda_2(A)\geq \cdots$ to denote its eigenvalues of $A$. {We use $\|A\|_2$ and $\|A\|_F$ for the spectral and Frobenius norms of $A$, respectively. Let $A^*$ be the conjugate transpose of $A$.} We use $\odot$ to denote the Hadamard (pointwise) product of two matrices of equal size.
	For a linear operator $A$, we denote its range by $R(A)$. For any subspace $U$, we let $P_U$ be the orthogonal projection onto $U$. Slightly abusing notation, we let $P_A$ be the orthogonal projection onto $R(A)$. 
	
	We use $Z\sim \mathcal{G}_{\nu,\tau}$ to specify that $Z\in \R$ is a random variable such that 
	$\E Z=0$, $\text{var}(Z)=\nu^2$, and $\E \exp(u Z)\leq \exp(\tau \nu^2  u^2)$ for all $u\in\R$. For a complex $Z$, we write $Z\sim \mathcal{G}_{\nu,\tau}$ to mean that the real and imaginary parts are independent $\mathcal{G}_{\nu/\sqrt 2,\tau}$ random variables. If $Q$ is an event, then $Q^c$ denotes its complement and $1_Q$ is the indicator function of $Q$.

    {We use the notation $A\lesssim_{m,n,p} B$ to mean that there exists a $C>0$ depending only on $m,n,p$ such that $A\leq CB$.}

Throughout the paper, we assume that the noise vectors $\be(t)$ are random, and the expectation $\E$ is taken over the probability distribution of the noise $\be(t)$. 

	\section{Assumptions and covariance matrices}
	\label{secassumption}

	 Our performance guarantees for MUSIC and ESPRIT require the following standard assumptions.

	\begin{assumption}
		\label{assump:main}
		(Model assumptions) Fix positive integers $L,M,S$ such that $S\leq M$ and $S\leq L$.
		\begin{enumerate}[(a)]
			\item
			$\Omega\subset\T$ has cardinality $S$ and does not depend on $t$. 
		    \item
			The amplitude covariance matrix, 
			\begin{equation}
			\label{eqX}
			X:=\frac{1}{L}\sum_{\ell=1}^L \x(t_\ell)\x(t_\ell)^*,
			\end{equation}
			is strictly positive-definite. 
			\item
			There exist $\nu,\tau>0$ such that for any $t>0$, the entries of $\e(t)\in\C^M$ are independent complex $\mathcal{G}_{\nu,\tau}$ random variables. For $t\not=s$, assume $\e(t)$ and $\e(s)$ are independent. These assumptions imply that
			\begin{align*}
			\E \e(t)\e(s)^* = \nu^2 \delta_{t,s} I_{M}. 
			\end{align*}
			
			\item
			For each $\ell = 1,\ldots,L$, we are given 
			\[
			\y(t_\ell) := \Phi \x(t_\ell) + \e(t_\ell).
			\]
			\item
			The noise level satisfies the following assumption
			\begin{equation}
				\label{eq:nu}
				\nu\leq \sigma_S(\Phi)\sqrt{\lambda_S(X)}.
			\end{equation}
			
		\end{enumerate}
	\end{assumption}
	
	Assumption \ref{assump:main} is standard in spectral and DOA estimations \cite{schmidt1986multiple,krim1996array,stoica1989music}. This assumption can be justified as follows:

	Assumption \ref{assump:main}(a) is necessary from a statistical viewpoint. If $\Omega$ is allowed to change over time, then there might be no relationship between $\y(t_k)$ and $\y(t_\ell)$ for $k\not=\ell$, so collecting additional snapshots might not be beneficial. Under this assumption, $\Phi:=\Phi(\Omega)\in\C^{M\times S}$ is a special type of matrix, often referred to as a non-harmonic Fourier matrix or Vandermonde matrix with nodes on the complex unit circle, and by the celebrated Vandermonde determinant theorem, $\Phi$ has rank $S$ and $\sigma_S(\Phi)>0$.

	Assumption \ref{assump:main}(b) on the strictly positive-definiteness of the amplitude covariance matrix is standard in array imaging and DOA estimation \cite{schmidt1986multiple,krim1996array,stoica1989music}. This assumption is equivalent to the requirement that $\x(t_1),\dots,\x(t_L)\in\C^S$ span $\C^S$. If this assumption is void, both MUSIC and ESPRIT can be modified as the single-snapshot scenario by utilizing a Hankel structure. This is a technical point to be discussed in Section \ref{sec:singlevsmultiple}. 
	
	For example, Assumption \ref{assump:main}(b) can be used in models where $\x(t)$ evolves over time according to a physical law. Many stochastic models of $\x(t)$ fulfill Assumption \ref{assump:main}(b). For instance, if $\x(t_1),\dots,\x(t_L)$ are independent Gaussian vectors with $N(0,\nu^2)$ i.i.d. entries, then $X$ has full rank with probability one. More generally, if $\x(t)$ is sub-Gaussian and the population covariance of $\x(t)$ is strictly positive-definite, then $X$ has full rank with high probability, {see  \cite{vershynin2018high}[Theorem 4.71 and Exercise 4.7.3].}

	When $L$ snapshots of measurements are taken, we define the matrices 
		\begin{equation}
		\begin{aligned}
		\label{eqXEYL}
		X_L &:=
		\frac{1}{\sqrt L}\begin{bmatrix}
			\x(t_1) &\cdots &\x(t_L) 
		\end{bmatrix}, \\
		E_L &:=
		\frac{1}{\sqrt L}\begin{bmatrix}
			\e(t_1) &\cdots &\e(t_L)
		\end{bmatrix}, \\
		Y_L &:=
		\frac{1}{\sqrt L}\begin{bmatrix}
			\y(t_1) &\cdots &\y(t_L) 
		\end{bmatrix}.
		\end{aligned}
		\end{equation}
		The empirical covariance matrices are
		\begin{align}
		X 
		&:=\frac{1}{L} \sum_{\ell=1}^L \x(t_\ell) \x(t_\ell)^*
		=X_L X_L^* \in \CC^{S \times S}, 
		\label{eq:hatX}
		\\
		\hat Y
		&:=\frac{1}{L} \sum_{\ell=1}^L \y(t_\ell) \y(t_\ell)^*
		=Y_L Y_L^*\in \CC^{M \times M}. 
		\label{eq:hatY}
	\end{align}
	Defining $Y:=\Phi X\Phi^*$, one can verify that  
	\begin{equation}
			\mathbb{E}\hat Y 
			= Y+\nu^2 I_M.
			\label{eqhatY}
		\end{equation}
		
\section{Signal  space and its empirical estimation}
\label{secsignal}

\subsection{Signal and noise space}
A signal space plays an important role in a class of subspace methods \cite{krim1996array}, including MUSIC and ESPRIT. These algorithms first compute an empirical version of the signal space by a truncated eigen-decomposition, and then extract the source locations based on this signal space. 
	
\begin{definition}[Signal and noise spaces]
	The {\it signal space} is defined to be the column space of $\Phi$, and the {\it noise space} is defined to be the orthogonal complement of the signal space.
\end{definition}
	
Although the signal and noise spaces do not depend on the specific choice of orthonormal basis, we pick particular bases purely for convenience. Throughout this paper, we let $U \in \CC^{M \times S}$ and $W \in \CC^{M \times (M-S)}$ such that $U^* U =I $ and $ W^* W= I$ be matrices whose columns form an orthonormal basis for the signal space and the noise space, respectively. Due to identity \eqref{eqhatY} and Assumption \ref{assump:main}(b), the signal space is the eigenspace associated with the $S$ largest eigenvalues of $Y:=\Phi X\Phi^*$. 
	 
With $L$ snapshots of \textit{noisy} measurements, the signal and noise spaces can be estimated from the empirical covariance matrix $\hat Y$ as follows. Let $[\hat U,\hat W]\in\C^{M\times M}$ be a unitary matrix whose columns are eigenvectors of $\hat Y$, and the columns of $\hat U \in \CC^{M \times S}$ are the eigenvectors corresponding to the $S$ largest eigenvalues of $\hat Y$.

\begin{definition}
    We refer to $\hat U$ and $\hat W$ as the {\it empirical signal space} and the {\it empirical noise space}, respectively.
\end{definition}
     
\subsection{An accurate perturbation bound for the signal space}  
    
	In this paper, we establish an accurate perturbation bound for $\hat U$. Roughly speaking, if the number of snapshots is sufficiently large, and everything else is fixed, the empirical signal space $\hat U$ is a good approximation of the signal space $U$. We first quantify the distance between $\hat U$ and $U$ by their canonical angles.
	\begin{definition}
	Suppose $U , \hat U \in \CC^{M \times S}$ and $U^* U = \hat U^* \hat U = I$. The canonical angles between $U$ and $\hat U$ are defined to be
	$\theta_j = \arccos\ \sigma_j(\hat U^* U)$, for $j=1,\dots,S$. Let
	\[
	\sin\theta(\hat U, U)
	:=\diag\big( \sin(\theta_1), \cdots, \sin(\theta_S) \big). 
	\]
	The (Euclidean) $\sin\theta$ distance between $U$ and $\hat U$ is
	\begin{equation}
	\dist(\hat U,U):=\|\sin\theta(\hat U,U)\|_2.
	\end{equation}
	\end{definition} 
	This definition extends the notion of an angle between two vectors to subspaces, and it is invariant to the particular choice of basis. It follows from {\cite[Chapter I, Theorem 5.5]{Stewart90}  and \cite[Lemma 2.1.2]{chen2020spectral}} that
	\begin{equation}
	\begin{aligned}
	\dist(\hat U,U) &= \|(I-P_{\hat U})P_U\|_2 \\
	&=\|(I-P_U)P_{\hat U}\|_2
	=  \|P_U-P_{\hat U}\|_2.
	\label{eqdistuhatuproj}
	\end{aligned}
	\end{equation}
 	
	{It is a consequence of \cite[Theorem 3]{zhang1995probability} that the $\sin\theta$ distance between $U$ and $\hat U$ satisfies the following expectation bound. See Section \ref{secproofthm:Uperturb} for the details.
	\begin{theorem}
		\label{thm:Uperturb}
		Under Assumption \ref{assump:main}, there exists a constant $C_\tau>0$ depending only on $\tau$ such that
		\[
		\E \big( \dist(\hat U,U)^2 \big)
		\leq \frac{C_\tau M}{ \lambda_S(X) \sigma_S^2(\Phi)} \frac{\nu^2}{L}. 
		\]
	\end{theorem}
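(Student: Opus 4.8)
The plan is to regard the empirical covariance $\hat Y=Y_LY_L^*$ as a perturbation of its mean $\E\hat Y=Y+\nu^2I_M$ (identity \eqref{eqhatY}), to reduce $\dist(\hat U,U)$ to a $\sin\theta$ perturbation bound for $\hat Y$ versus $\E\hat Y$ via \cite[Theorem 3]{zhang1995probability}, and then to estimate the expected size of the resulting random perturbation term. First I would fix the spectral picture. Factoring $X=X^{1/2}X^{1/2}$ with $X^{1/2}$ invertible (Assumption \ref{assump:main}(b)) gives $Y=(\Phi X^{1/2})(\Phi X^{1/2})^*$, so $Y$ has rank exactly $S$, its top-$S$ eigenspace is the signal space $R(\Phi)=R(U)$, and $\lambda_S(Y)=\sigma_S(\Phi X^{1/2})^2\geq\sigma_S(\Phi)^2\lambda_S(X)\geq\nu^2$, the last step by Assumption \ref{assump:main}(e). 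Adding $\nu^2I_M$ only shifts eigenvalues, so $\E\hat Y$ has the same top-$S$ eigenspace $R(U)$ with spectral gap $\lambda_S(\E\hat Y)-\lambda_{S+1}(\E\hat Y)=\lambda_S(Y)$; hence $\dist(\hat U,U)$ is exactly the $\sin\theta$ distance between the top-$S$ eigenspaces of $\hat Y$ and $\E\hat Y$. Equivalently, since $Y_L=\Phi X_L+E_L$ and $X_LX_L^*=X$, the \emph{deterministic} matrix $\Phi X_L$ has rank $S$, left singular space $R(U)$, least nonzero singular value $\sqrt{\lambda_S(Y)}$, and some right singular space spanned by $V\in\C^{L\times S}$ with $V^*V=I_S$, which is deterministic because $\Omega$, hence $\Phi$, does not depend on $t$ (Assumption \ref{assump:main}(a)).

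Next comes the perturbation bound. I would first dispose of the large-noise case: if $M\nu^2$ exceeds a fixed multiple of $L\,\sigma_S^2(\Phi)\lambda_S(X)$, the asserted bound is $\gtrsim 1$ and follows from $\dist(\hat U,U)\leq 1$, so one may assume $\nu$ is small relative to the signal. In that regime \cite[Theorem 3]{zhang1995probability} --- a refinement of the first-order expansion of the signal-subspace projector that respects the full eigenvalue structure of $\hat Y=Y_LY_L^*$, not merely its spectral gap --- yields, up to lower-order terms that are subordinate in this regime,
\[
\dist(\hat U,U)\lesssim \frac{\|W^*E_LV\|_2}{\sqrt{\lambda_S(Y)}}+\frac{\|W^*(E_LE_L^*-\nu^2I_M)U\|_2}{\lambda_S(Y)}.
\]
The crucial feature --- and the reason the large singular values $\sigma_1(\Phi)$ and $\lambda_1(X)$ never appear --- is that only the \emph{doubly-projected} pieces of $\hat Y-\E\hat Y$ enter, and that the factor of the signal magnitude carried by the cross term $\Phi X_LE_L^*+E_LX_L^*\Phi^*$ is exactly cancelled by the eigenvalue denominator (one never separates $\sqrt{\lambda_1(Y)}$ from $1/\lambda_S(Y)$).

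It then remains to bound two expectations. Because $W$ and $V$ have orthonormal columns and the rows of $E_L$ are independent, $W^*E_LV=W^*(E_LV)$ where $E_LV\in\C^{M\times S}$ has independent, mean-zero, sub-Gaussian rows with covariance $(\nu^2/L)I_S$; the standard non-asymptotic operator-norm bound for matrices with independent sub-Gaussian rows \cite{vershynin2018high} gives $\E\|W^*E_LV\|_2^2\leq\E\|E_LV\|_2^2\lesssim_\tau \nu^2(M+S)/L\lesssim \nu^2M/L$. For the second term, $E_LE_L^*$ is the sample covariance of the $L$ i.i.d.\ sub-Gaussian noise vectors, so $\E\|E_LE_L^*-\nu^2I_M\|_2^2\lesssim_\tau \nu^4\big(M/L+(M/L)^2\big)$; combining this with $\nu^2\leq\lambda_S(Y)$ and with the bound $\nu^2M/L\lesssim\sigma_S^2(\Phi)\lambda_S(X)\leq\lambda_S(Y)$ that holds in the small-noise regime shows that this term also contributes $\lesssim_\tau \nu^2M/(L\lambda_S(Y))$. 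Adding the two estimates and using $\lambda_S(Y)\geq\sigma_S^2(\Phi)\lambda_S(X)$ gives the claimed bound.

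The step I expect to be the main obstacle is making the perturbation estimate rigorous and non-asymptotic: the cross term $\Phi X_LE_L^*+E_LX_L^*\Phi^*$ in $\hat Y-\E\hat Y$ has operator norm on the order of $\sqrt{\lambda_1(Y)}\,\|E_L\|_2$, which can dwarf the gap $\lambda_S(Y)$, so a naive Taylor expansion --- or a generic Davis--Kahan estimate built on a single gap --- leaves a remainder polluted by $\lambda_1(Y)$, and would in any case be too weak in the many-snapshot regime $L>M$ where the $\nu^2/L$ averaging must be used. What rescues the argument is the interplay between the large-noise case split, which restricts attention to the regime where $\|E_L\|_2^2$ is itself controlled by $\lambda_S(Y)$ so that all higher-order terms become subordinate, and the balanced, doubly-projected structure furnished by \cite[Theorem 3]{zhang1995probability} together with the sample-covariance concentration estimate; verifying that these fit together to give precisely the stated dependence on $M$, $L$, $\nu$, $\sigma_S(\Phi)$, and $\lambda_S(X)$ is the heart of the argument.
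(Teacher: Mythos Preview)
Your approach differs substantially from the paper's, and has a genuine gap at its core.

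The paper's proof is a one-step application of \cite[Theorem 3]{cai2018rate} (not \cite{zhang1995probability}, despite the misleading sentence preceding the theorem statement --- note the acknowledgement to Zai~Yang for pointing out an error and bringing \cite{cai2018rate} to the authors' attention). That result is a rate-optimal expectation bound for singular subspace estimation under additive sub-Gaussian noise: applied directly to the rescaled data matrix $(\sqrt L/\nu)Y_L=(\sqrt L/\nu)\Phi X_L+(\sqrt L/\nu)E_L$, whose noise has i.i.d.\ $\mathcal{G}_{1,\tau}$ entries, it gives
\[
\E\big(\dist(\hat U,U)^2\big)\leq \frac{C_\tau M\big(\nu^2\sigma_S^2(\Phi X_L)+\nu^4\big)}{L\,\sigma_S^4(\Phi X_L)},
\]
and then $\sigma_S(\Phi X_L)\geq\sigma_S(\Phi)\sqrt{\lambda_S(X)}\geq\nu$ (Assumption~\ref{assump:main}(e)) finishes the argument. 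No case split, no hand-crafted expansion, no separate treatment of the quadratic noise term.

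Your route --- passing through the covariance $\hat Y$ versus $\E\hat Y$, writing a first-order expansion with doubly-projected noise pieces, and controlling higher-order terms by a regime split --- is essentially a sketch of the proof of the Cai--Zhang theorem itself rather than an invocation of it. The gap is the central inequality you attribute to \cite[Theorem 3]{zhang1995probability}: that reference (a 1995 paper on the probability that MUSIC resolves two sources) does not supply a non-asymptotic bound of the form you wrote, and you yourself flag your inequality as holding only ``up to lower-order terms that are subordinate in this regime''. The delicate issue you correctly identify --- that naive Davis--Kahan on $\hat Y$ is polluted by $\lambda_1(Y)$ through the cross term $\Phi X_L E_L^*+E_LX_L^*\Phi^*$ --- is precisely what \cite{cai2018rate} resolves via a deterministic doubly-projected bound (their Proposition~1, restated as Lemma~\ref{lem:anru2} in this paper) together with careful sub-Gaussian tail analysis (restated as Lemma~\ref{lem:anru1}). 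Re-deriving this machinery is possible but non-trivial, and your proposal does not carry it out; as written, the key perturbation inequality is asserted rather than proved.
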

	\begin{remark}
		Note that $X$ defined in \eqref{eqX} is normalized by $L$, so we can interpret $\lambda_S(X)$ has a normalized quantity. Since $\Phi$ has $S$ columns each with Euclidean norm $\sqrt M$, we can also interpret $\sqrt M/\sigma_S(\Phi)$ as a normalized quantity. Hence, Theorem \ref{thm:Uperturb} tells us that the average squared $\sin\theta$ distance between $U$ and $\hat U$ tends to zero proportional (at least) linearly in $\nu^2/L$. 
	\end{remark}
	\begin{remark}
		Throughout this paper, we will provide upper bounds for the average squared $\sin\theta$ distance. We can use Jensen's inequality, 
		$
		\big(\E \dist(\hat U,U) \big)^2
		\leq \E\big( \dist(\hat U,U)^2 \big),
		$
		to convert them into an upper bound for the expected error if desired.  
	\end{remark}
	}

	\section{Stability of MUSIC and ESPRIT}
	\label{secstability}

	\subsection{Review of MUSIC and ESPRIT}
	
	In signal processing, MUSIC \cite{schmidt1986multiple} and ESPRIT \cite{kailath1989esprit} have been widely used in applications due to their impressive numerical performance and super-resolution phenomenon. 

We first introduce some important notations.
	We define the signal vector at $\omega\in [0,1)$ as
	\[
	\phi(\omega) 
	:=\begin{bmatrix} 1 &e^{2\pi i \omega} &e^{4\pi i \omega} &\cdots &e^{2\pi i (M-1) \omega} \end{bmatrix}^* \in \CC^{M}.
	\]
	The columns of $\Phi$ in \eqref{eqlinearsystem} exactly consist of the signal vectors at the sources locations:
	\[
	\Phi 
	= \begin{bmatrix}
		\phi(\omega_1) &\phi(\omega_2) &\cdots &\phi(\omega_S)
	\end{bmatrix}.
	\]

		In the noiseless situation, MUSIC amounts to finding the noise space, forming the noise-space correlation function and identifying the zero set of the noise-space correlation, which is necessarily the support set.
When $M \ge S+1$, we have the following observations regarding the Vandermonde matrix $\Phi$:
	\begin{itemize}
		\item $\rank(\Phi) = S$ 
		\item $\operatorname{rank}\left(\left[\Phi,\phi(\omega)\right]\right)=S+1$ if and only if $\omega \notin \Omega$.
	\end{itemize}
	These observations imply that, $\omega \in \Omega$ if and only if $\phi(\omega) \in R(\Phi)$. Since the signal space $U$ is exactly $R(\Phi)$, the condition of $\phi(\omega) \in R(\Phi)$ can be quantified by  a noise-space correlation function or an imaging function.

	\begin{definition}[Noise-space correlation (NSC) and imaging functions]
	For any $\omega \in [0,1)$,
		the noise-space correlation function is defined as
		\[
		\calR(\omega)
		:= \frac{\|(I-P_{U}) \phi(\omega)\| }{\|\phi(\omega)\|}
		= \frac{\big\|(I-P_{U}) \phi(\omega) \big\|}{\sqrt M}.
		\]
		The imaging function is defined as 
		$$\calJ(\omega) := {1}/{\calR(\omega)}.$$
		\end{definition}
	The MUSIC algorithm is based on the following lemma:	
	\begin{lemma}
	\label{lemmamusicnoisefree}
		Let $M \ge S+1$. Then
		$$
		\omega \in \Omega \Longleftrightarrow \mathcal{R}(\omega)=0  \Longleftrightarrow \calJ(\omega) = \infty.
		$$	
	\end{lemma}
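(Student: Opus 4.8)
The plan is to establish the two equivalences in Lemma \ref{lemmamusicnoisefree} by leveraging the two bulleted observations about the Vandermonde matrix $\Phi$ that immediately precede the statement. First I would verify the second equivalence, $\calR(\omega) = 0 \iff \calJ(\omega) = \infty$, which is essentially a definitional matter: since $\calJ(\omega) = 1/\calR(\omega)$, the imaging function diverges to $\infty$ precisely when $\calR(\omega) = 0$ (with the convention that $1/0 = \infty$). There is nothing to prove here beyond noting that $\calR(\omega) \geq 0$ always, so $\calR(\omega) = 0$ is the only way for $\calJ$ to blow up.

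The substantive step is the first equivalence, $\omega \in \Omega \iff \calR(\omega) = 0$. I would argue through the chain $\omega \in \Omega \iff \phi(\omega) \in R(\Phi) \iff \calR(\omega) = 0$. For the second of these, recall that $U \in \CC^{M\times S}$ has columns forming an orthonormal basis of the signal space $R(\Phi)$, so $P_U$ is the orthogonal projection onto $R(\Phi)$; hence $\|(I - P_U)\phi(\omega)\| = 0$ if and only if $\phi(\omega) \in R(\Phi)$, and since $\|\phi(\omega)\| = \sqrt{M} > 0$, this is equivalent to $\calR(\omega) = 0$. For the first equivalence $\omega \in \Omega \iff \phi(\omega) \in R(\Phi)$: the forward direction is trivial since $\phi(\omega_j)$ is literally a column of $\Phi$. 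For the reverse direction, suppose $\omega \notin \Omega$; then by the second bulleted observation, $\rank([\Phi, \phi(\omega)]) = S + 1 > S = \rank(\Phi)$, which forces $\phi(\omega) \notin R(\Phi)$. Taking the contrapositive gives $\phi(\omega) \in R(\Phi) \implies \omega \in \Omega$.

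The only thing requiring genuine justification is the rank claim in the second bullet, namely that $\rank([\Phi, \phi(\omega)]) = S+1$ whenever $\omega \notin \Omega$, together with $\rank(\Phi) = S$. Both follow from the standard fact that any $S+1$ distinct rows of a square Vandermonde matrix with distinct nodes yield an invertible matrix: here $[\Phi, \phi(\omega)] = \Phi_M(\Omega \cup \{\omega\})$ is an $M \times (S+1)$ Vandermonde matrix whose nodes $\{\omega_1, \dots, \omega_S, \omega\}$ are pairwise distinct, and since $M \geq S+1$ it contains an $(S+1)\times(S+1)$ Vandermonde submatrix with nonzero determinant $\prod_{j<k}(z_k - z_j) \neq 0$; the same reasoning with $S$ nodes gives $\rank(\Phi) = S$. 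I do not anticipate a real obstacle — the lemma is a clean consequence of Vandermonde non-degeneracy and the definition of orthogonal projection — so the ``hard part'' is merely being careful to state the Vandermonde rank fact correctly and to handle the $1/0 = \infty$ convention cleanly.
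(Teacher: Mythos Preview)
Your proposal is correct and follows exactly the approach the paper outlines: the paper itself does not give a formal proof of this lemma, but simply states it as an immediate consequence of the two bulleted rank observations about $\Phi$ and $[\Phi,\phi(\omega)]$ that precede it, together with the identification of the signal space $U$ with $R(\Phi)$. Your write-up is more detailed than the paper's treatment (you actually justify the Vandermonde rank facts and spell out the projection argument), but the underlying reasoning is identical.
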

	Lemma \ref{lemmamusicnoisefree} implies that, the source locations can be exactly identified through the roots of the noise-space correlation function, or the peaks of the imaging function. 

		When $L$ snapshots of noisy measurements are taken, we can compute the empirical signal space $\hat U$.  
		
		\begin{definition}[Empirical NSC and imaging functions]
		For any $\omega\in[0,1)$, the empirical noise-space correlation (NSC) function is defined to be 
		$$\widehat{\mathcal{R}}(\omega) := \frac{\| (I-P_{\hat U})\phi(\omega)\|}{\| \phi(\omega) \|}.$$
	The empirical imaging function is defined to be $$\widehat{\mathcal{J}}(\omega)  := {1}/{\widehat{\mathcal{R}}(\omega)}.$$	
		\end{definition}
		An estimate for the support set is obtained by extracting the $S$ largest local maxima of $\widehat{\mathcal{J}}$, or equivalently, the $S$ smallest local minima of $\widehat{\mathcal{R}}$. MUSIC is summarized
		 in Algorithm \ref{algmusicesprit}.

	\begin{algorithm}[t]
		\begin{algorithmic}
			\Require Measurements $\{\boldsymbol{y}(t_\ell)\}_{\ell=1}^{L}$ and sparsity $S$.
			
			\State 1. Form the empirical covariance matrix $\hY$ according to equation \eqref{eq:hatY}.
			\State 2. Compute the eigen-decomposition of $\widehat{Y}$ to obtain $\hU$, the empirical signal space.
			\State {\bf MUSIC:}  Compute  the empirical imaging function
			$$\widehat{\mathcal{J}}(\omega) = \frac{\| \phi(\omega) \|}{\| (I-P_{\hat U})\phi(\omega)\|},\ \omega \in [0,1),$$
			and identify its $S$ largest local maxima  as $\left\{\widehat{\omega}_{j}\right\}_{j=1}^{S}$.
			\vspace{0.05cm}
			\State {\bf ESPRIT:} Let $\widehat{U}_{0}$ and $\widehat{U}_{1}$ be two submatrices of $\widehat{U}$ containing the first and the last $M-1$ rows respectively. Compute
			$$
			\widehat{\Psi}=\widehat{U}_{0}^{\dagger} \widehat{U}_{1}
			$$
			and its $S$ eigenvalues $\widehat{\lambda}_{1}, \ldots, \widehat{\lambda}_{S}$. Set $\widehat{\omega}_{j}:=-\frac{\arg\widehat{\lambda}_{j}}{2 \pi}$.
			\Ensure $\widehat{\Omega}=\left\{\widehat{\omega}_{j}\right\}_{j=1}^{S}$. 			
					\end{algorithmic}
		\caption{MUSIC and ESPRIT}
		\label{algmusicesprit}
	\end{algorithm}

	ESPRIT reformulates the support estimation step as an eigenvalue problem. We first discuss the noiseless situation. By definition, the signal space $U$ is the column space of $\Phi$, and so there exists an invertible matrix  $Q \in \CC^{S\times S}$ such that $U = \Phi Q.$ Let $U_0$ and $U_1$ be two submatrices of $U$ containing the first and the last $M-1$ rows respectively. Denote $\Phi_{M-1} \in \CC^{(M-1)\times S} $ as the submatrix containing the first $M-1$ rows of $\Phi$. Letting $D_{\Omega} := \diag(e^{-2\pi i \omega_1},\dots e^{-2\pi i\omega_S}) \in \CC^{S \times S}$, we have
	\[
		U_0 = \Phi_{M-1} Q, \quad 
		U_1 = \Phi_{M-1} D_{\Omega} Q.
	\]
	Setting $M-1 \ge S$ guarantees that $U_0$ and $U_1$ have rank $S$. It follows that
	\begin{equation*}
		\Psi := U_0^{\dagger} U_1 = Q^{-1} D_{\Omega} Q\in\C^{S\times S}. 
	\end{equation*}
	Hence, the eigenvalues of $\Psi$ are exactly $\{e^{-2\pi i \omega_j}\}_{j=1}^S$. The ESPRIT algorithm amounts to finding the support set $\Omega$ through the eigenvalues of $\Psi$.

	When $L$ snapshots of noisy measurements are taken,  ESPRIT follows the same steps but with the empirical signal space $\hat U$ instead of $U$. It is summarized in Algorithm \ref{algmusicesprit}.
	
	To quantify the error between $\Omega$ and $\hat \Omega$, we introduce the support matching distance. 
	
	\begin{definition}
		The matching distance between $\Omega$ and $\widehat{\Omega}$ is 
		$$
		\operatorname{md}(\Omega, \widehat{\Omega}):= \min _{\psi} \max _{j}\left|\widehat{\omega}_{\psi(j)}-\omega_{j}\right|_{\mathbb{T}}
		$$
		where $\psi$ is a permutation on $\{1,\dots,S\}$.
	\end{definition}

	\subsection{Stability of MUSIC and ESPRIT}
	\label{secmusicespritstability}
	 The stability of MUSIC depends on the perturbation of the NSC function from $\calR$ to $\hat \calR$, which can be measured as 
	\[
	\|\widehat\calR - \calR\|_{\infty} := \sup_{\omega \in [0,1)} |\widehat\calR(\omega) - \calR(\omega)|.
	\]

	{
	\begin{theorem}
		\label{thmmusicstability}
		Let $M \ge S+1$.
		Under Assumption \ref{assump:main} and with $C_\tau>0$ being the constant in Theorem \ref{thm:Uperturb}, we have
		\begin{equation}
			\E \big( \|\hat\calR -\calR\|_\infty^2 \big)
			\leq \frac{C_\tau M}{ \lambda_S(X) \sigma_S^2(\Phi)} \frac{\nu^2}{L}.
		\label{thmmusicstabilityeq}
		\end{equation}
	\end{theorem}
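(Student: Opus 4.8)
The plan is to reduce the $L^\infty$ perturbation of the NSC function to the $\sin\theta$ distance between $\hat U$ and $U$, and then invoke Theorem \ref{thm:Uperturb} directly. The key point is that $\calR$ and $\hat\calR$ are built from the same vector $\phi(\omega)$ but with the two projections $I-P_U$ and $I-P_{\hat U}$, so their difference at a fixed $\omega$ is controlled pointwise by how far apart these projections are in operator norm.

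First I would fix $\omega\in[0,1)$ and apply the reverse triangle inequality in $\CC^M$ to the numerators, writing
\[
\big|\,\|(I-P_{\hat U})\phi(\omega)\| - \|(I-P_U)\phi(\omega)\|\,\big|
\leq \big\|(P_U - P_{\hat U})\phi(\omega)\big\|
\leq \|P_U - P_{\hat U}\|_2\,\|\phi(\omega)\|.
\]
Dividing by $\|\phi(\omega)\| = \sqrt M > 0$ gives $|\hat\calR(\omega)-\calR(\omega)| \leq \|P_U-P_{\hat U}\|_2$ for every $\omega$, hence, taking the supremum over $\omega$,
\[
\|\hat\calR-\calR\|_\infty \leq \|P_U-P_{\hat U}\|_2 = \dist(\hat U,U),
\]
where the last equality is the identity \eqref{eqdistuhatuproj}. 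Squaring this deterministic inequality and taking expectation over the noise, the claimed bound \eqref{thmmusicstabilityeq} follows immediately from Theorem \ref{thm:Uperturb}, with the very same constant $C_\tau$.

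There is essentially no hard step here: the entire content is the pointwise reverse-triangle-inequality estimate together with the known relation between the canonical-angle distance and the difference of orthogonal projections, so the theorem is a clean corollary of the already-established signal-space perturbation bound. The only minor thing to be careful about is that the bound on $|\hat\calR(\omega)-\calR(\omega)|$ is uniform in $\omega$ (the right-hand side $\|P_U-P_{\hat U}\|_2$ does not depend on $\omega$), which is exactly what licenses passing to the $L^\infty$ norm before taking expectations; this is immediate from the computation above.
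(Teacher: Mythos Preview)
Your proposal is correct and follows essentially the same approach as the paper: reduce $|\hat\calR(\omega)-\calR(\omega)|$ to $\|P_U-P_{\hat U}\|_2=\dist(\hat U,U)$ via the reverse triangle inequality, then square, take expectation, and apply Theorem~\ref{thm:Uperturb}. The paper's proof is slightly terser but the content is identical.
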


	Theorem \ref{thmmusicstability} is proved in Section \ref{proofthmmusicstability}.

 	Unlike MUSIC, ESPRIT is an explicit algorithm since it reformulates the main question as an eigenvalue problem. For ESPRIT, we have the following estimate for the matching distance between the true and recovered support.
	 	\begin{theorem}
		\label{thmespritstability}
		Let $M \ge S+1$. Under Assumption \ref{assump:main}, let $C_\tau>0$ be the constant in Theorem \ref{thm:Uperturb} and $\hat\Omega$ be the output of ESPRIT.  
		\begin{enumerate}[(a)]
			\item 
			\underline {Moderate SNR regime.} We have
			\begin{equation*}
				\E \big( {\rm md}(\hat\Omega,\Omega)^2 \big) \leq  \frac{C_\tau 16^{S+2}  S^{3} M^2}{\lambda_S(X)\sigma_{S}^4(\Phi)} \frac{\nu^2}{L}.
			\end{equation*}			
			\item 
			\underline{Large SNR regime.} Define 
    	    \begin{align*}
    	    \xi:=\frac{\sigma_S^2(\Phi) \lambda_S(X)}{ M} \frac{L}{\nu^2}, \quad 
    	    \rho:=\frac{1}{\sqrt 6} \frac{4^{S+2}\sigma_{S}^2(\Phi)\Delta}{S^2 M}.
    	    \end{align*}
    	    There is a sufficiently large $D_\tau\geq 1$ depending only on $\tau$ such that if
        	\begin{equation}
        		\label{eq:SNRbound}
        		 \xi \geq D_\tau \max\Big(1,\frac{1}{\rho},\frac{1}{\rho^2}\Big),
        	\end{equation}
        	then it holds that 
        	$$
        	\E\big( \md(\hat \Omega,\Omega)^2 \big) 
        	\lesssim_\tau \frac{M}{\sigma_S^2(\Phi) \lambda_S(X)} \frac{\nu^2}{L}. 
        	$$
		\end{enumerate}		
	\end{theorem}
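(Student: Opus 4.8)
\emph{Proof proposal.} The plan for both parts is to convert the subspace error of Theorem~\ref{thm:Uperturb} into an error bound for the ESPRIT eigenvalues, and hence for $\md(\hat\Omega,\Omega)$, by matrix perturbation theory, so that the randomness enters only through Theorem~\ref{thm:Uperturb}. First I would replace $\hat U$ by $\hat U O$ with $O\in\C^{S\times S}$ the unitary factor aligning $\hat U$ to $U$; this leaves the eigenvalues of $\hat\Psi=\hat U_0^\dagger\hat U_1$ unchanged and gives $\|\hat U-U\|_2\le\sqrt2\,\dist(\hat U,U)$. Put $E:=\hat U-U$ with $E_0,E_1$ its first and last $M-1$ rows, take $U$ from the thin SVD $\Phi=U\Sigma V^*$, and set $Q:=V\Sigma^{-1}$, so that $U=\Phi Q$, $U_0=\Phi_{M-1}Q$, $U_1=\Phi_{M-1}D_\Omega Q$, $\Psi=Q^{-1}D_\Omega Q$, $\|Q^{-1}\|_2=\sigma_1(\Phi)\le\sqrt{MS}$, and $\|Q\|_2=1/\sigma_S(\Phi)$; thus $Q$ diagonalizes $\Psi$, whose eigenvalues are $\{e^{-2\pi i\omega_j}\}_{j=1}^{S}$, with eigenvector condition number $\kappa(Q)=\sigma_1(\Phi)/\sigma_S(\Phi)$. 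Two identities drive the argument: $\hat\Psi-\Psi=\hat U_0^\dagger(E_1-E_0\Psi)$, valid once $\|E_0\|_2<\sigma_S(U_0)$ so that $\hat U_0$ has full column rank, and $Q\,U_0^\dagger=\Phi_{M-1}^\dagger$.

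Passing to eigenvalues, those of $\hat\Psi$ equal those of $D_\Omega+F$ with $F:=Q(\hat\Psi-\Psi)Q^{-1}=\Phi_{M-1}^\dagger(E_1-E_0\Psi)Q^{-1}$ and $\|F\|_2\le\kappa(Q)\|\hat\Psi-\Psi\|_2$. A Gershgorin bound places each eigenvalue of $D_\Omega+F$ within $O(S\|F\|_2)$ of some $e^{-2\pi i\omega_j}$; once $S\|F\|_2$ is a small enough fraction of the minimal gap among these points, which is comparable to $\Delta$, the corresponding disks are disjoint, which forces a bijection $\psi$ with $|\hat\lambda_{\psi(j)}-e^{-2\pi i\omega_j}|\lesssim S\|F\|_2$; an elementary estimate on the unit circle (using $|\hat\lambda_{\psi(j)}|\ge\frac12$) then upgrades this to $|\hat\omega_{\psi(j)}-\omega_j|_\T\lesssim S\|F\|_2$. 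For part (a) I would use the crude bound $\|\hat\Psi-\Psi\|_2\lesssim\|E\|_2/\sigma_S(U_0)^2$, together with $\sigma_S(U_0)\ge\sigma_S(\Phi_{M-1})/\sigma_1(\Phi)$ and a Vandermonde truncation lemma lower bounding $\sigma_S(\Phi_{M-1})$ by a constant (depending on $S$) times $\sigma_S(\Phi)$, to obtain a Lipschitz-type estimate $\md(\hat\Omega,\Omega)\le\Lambda_\star\,\dist(\hat U,U)$ with an explicit $\Lambda_\star=\Lambda_\star(S,M,\Phi,\Delta)$, valid when $\dist(\hat U,U)$ is below a threshold $t_\star$; combining this with the trivial bound $\md(\hat\Omega,\Omega)\le\frac12$ on the complementary event (where $\dist(\hat U,U)\ge t_\star$, so $\frac12\le\frac{1}{2t_\star}\dist(\hat U,U)$) gives an \emph{unconditional} deterministic inequality, in which the factor $1/\Delta$ is absorbed using $1/\Delta\le M$ when $\Delta\ge1/M$ and the Dirichlet-kernel bound $\sigma_S(\Phi)\lesssim M^{3/2}\Delta$ (from the near-collinearity of the two closest columns of $\Phi$) when $\Delta<1/M$. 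Squaring and applying Theorem~\ref{thm:Uperturb} then yields the stated bound, with the factor $16^{S+2}S^{3}M^{2}/\sigma_S^{4}(\Phi)$ emerging after a careful bookkeeping of constants.

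For part (b), the hypothesis \eqref{eq:SNRbound} is calibrated so that, by Theorem~\ref{thm:Uperturb} and Markov's inequality, the two smallness conditions used above — that $\hat U_0$ has full column rank ($\dist(\hat U,U)<\sigma_S(U_0)$) and that the Gershgorin disks are disjoint ($S\|F\|_2\lesssim\Delta$), corresponding to the $1/\rho$ and $1/\rho^2$ terms in the $\max$ of \eqref{eq:SNRbound} — hold outside an event $Q$ with $\P(Q)$ at most a constant times $\E(\dist(\hat U,U)^2)/t_\star^2\lesssim_\tau \frac{M}{\sigma_S^2(\Phi)\lambda_S(X)}\frac{\nu^2}{L}$; on $Q$ one uses $\md(\hat\Omega,\Omega)\le\frac12$, so its contribution to $\E(\md(\hat\Omega,\Omega)^2)$ is already of the claimed order. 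On $Q^c$ I would expand $\hat\lambda_{\psi(j)}-e^{-2\pi i\omega_j}$ to first order in $E$: the leading term is $(\Phi_{M-1}^\dagger(E_1-E_0\Psi)Q^{-1})_{jj}$, and since the eigenvalues of $\hat\Psi$ are invariant under within-signal-block rotations of $\hat U$, one may take $E\approx W W^*(\hat Y-\E\hat Y)U\,\Lambda^{-1}$, where $\Lambda=\diag(\lambda_1(Y),\dots,\lambda_S(Y))$ and $\lambda_j(Y)\ge\sigma_S^2(\Phi)\lambda_S(X)$. The crucial structural point is that the orthogonality $\Phi^*W=0$ makes $\Phi_{M-1}^\dagger E_0$ and $\Phi_{M-1}^\dagger E_1$ rank-one matrices of the form $(\Phi_{M-1}^*\Phi_{M-1})^{-1}\,v\,w^*$, with $v$ having unit-modulus entries and $w$ an extremal row of $W$, which removes the gross powers of $1/\sigma_S(\Phi)$ and $\kappa(Q)$ a naive composition through $\dist(\hat U,U)$ would incur; computing the second moment of the resulting linear functional of the noise via Assumption~\ref{assump:main}(c) should then give $\E\big(|\hat\lambda_{\psi(j)}-e^{-2\pi i\omega_j}|^2\,1_{Q^c}\big)\lesssim_\tau \frac{M}{\sigma_S^2(\Phi)\lambda_S(X)}\frac{\nu^2}{L}$ — the Cram\'er--Rao scaling — while the higher-order remainder, bounded by a power of $\dist(\hat U,U)$ times the same scale, is lower order precisely because \eqref{eq:SNRbound} is strong. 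Summing the $Q$ and $Q^c$ contributions gives the claim.

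I expect the main obstacle to be this refined first-order analysis in part (b): one must use the structural facts ($Q\,U_0^\dagger=\Phi_{M-1}^\dagger$, the rank-one form of $\Phi_{M-1}^\dagger E_0$ and $\Phi_{M-1}^\dagger E_1$ from $\Phi^*W=0$, and $\lambda_j(Y)\ge\sigma_S^2(\Phi)\lambda_S(X)$) sharply enough that the variance of the leading term matches Theorem~\ref{thm:Uperturb} up to an absolute constant, rather than the $16^{S}\,\mathrm{poly}(S,M)/\sigma_S^2(\Phi)$-inflated bound a crude argument gives, and one must make the remainder estimate uniform enough that \eqref{eq:SNRbound} is exactly the hypothesis needed; a secondary difficulty is the Vandermonde truncation lemma relating $\sigma_S(\Phi_{M-1})$ to $\sigma_S(\Phi)$, which contributes the exponential-in-$S$ constants to part (a).
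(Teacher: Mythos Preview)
Your outline for part~(a) is on the right track and mirrors the paper's structure: a deterministic Lipschitz bound $\md(\hat\Omega,\Omega)\le\Lambda_\star\,\dist(\hat U,U)$ on a good event, the trivial bound on its complement, then Theorem~\ref{thm:Uperturb}. The paper streamlines the two intermediate steps by quoting lemmas from \cite{li2020super}: a direct lower bound $\sigma_S(U_0)\ge 2^{-S}$ (so no Vandermonde truncation lemma is needed) gives $\|\hat\Psi-\Psi\|_2\le 4^{S+2}\dist(\hat U,U)$ whenever $\dist(\hat U,U)\le 2^{-(S+2)}$; and an \emph{unconditional} matching-distance bound $\md(\hat\Omega,\Omega)\le \frac{S^{3/2}\sqrt M}{\sigma_S(\Phi)}\|\hat\Psi-\Psi\|_2$ that does not rely on Gershgorin-disk disjointness. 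Because $\Delta$ never appears, your absorption argument (the two cases $\Delta\gtrless 1/M$) is unnecessary.

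Part~(b), however, has a genuine gap in your treatment of the bad event. You write $\P(Q)\le \E(\dist(\hat U,U)^2)/t_\star^2\lesssim_\tau \tfrac{M}{\sigma_S^2(\Phi)\lambda_S(X)}\tfrac{\nu^2}{L}=1/\xi$, but the Gershgorin-disjoint threshold is $t_\star\asymp\rho$, so Markov combined with Theorem~\ref{thm:Uperturb} only yields $\P(Q)\lesssim_\tau 1/(\xi\rho^2)$. Under hypothesis~\eqref{eq:SNRbound} this is at best $\lesssim 1/D_\tau$, a constant, not $1/\xi$; since on $Q$ you can only use $\md\le\tfrac12$, the bad-event contribution to $\E(\md^2)$ is then $\Theta(1)$ rather than the target $O(1/\xi)$. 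A second-moment bound on $\dist(\hat U,U)$ is simply too weak here. This is precisely why the paper abandons Theorem~\ref{thm:Uperturb} for the bad event and instead invokes \emph{exponential} concentration inequalities from \cite{cai2018rate} (stated as Lemmas~\ref{lem:anru1} and~\ref{lem:anru2}), obtaining tails of the form $\exp(-c\min(1,\rho,\rho^2)M\xi)$ which, under~\eqref{eq:SNRbound}, convert via $e^{-bu^2}\le 1/u^2$ to bounds of order $1/\xi$.

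The paper's good-event argument in~(b) is also different and simpler than your first-order expansion: it uses the small-perturbation variant of the Bauer--Fike bound (Lemma~\ref{lemmamatching1}(b)), namely $\md(\hat\Omega,\Omega)\le\|\hat\Psi-\Psi\|_2$ once $\|\hat\Psi-\Psi\|_2\le\sigma_S^2(\Phi)\Delta/(S^2M)$, so that on the good event $\md\le 4^{S+2}\dist(\hat U,U)$ directly; squaring and applying Theorem~\ref{thm:Uperturb} finishes. Your rank-one observation about $\Phi_{M-1}^\dagger E_0$ is correct and interesting, but the paper never needs it; note also that its $\lesssim_\tau$ conclusion in~(b) in fact carries a hidden $16^{S+2}$ from this step.
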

}

    {Due to the considerable length of the proof of Theorem \ref{thmespritstability}, we split the proof into two parts. Parts (a) and (b) are proved in Sections \ref{proofthmespritstabilitya} and \ref{proofthmespritstabilityb} respectively. Part (a) holds for all possible values of $L$ and $\nu$. On the other hand, part (b) requires that $\xi$ (which can be interpreted as a scaled and squared signal-to-noise-ratio) is sufficiently large depending on the model parameters.}

	\section{Super-resolution of multi-snapshot MUSIC and ESPRIT}
	\label{secsuperresolution}

	\subsection{Minimum singular value of Fourier matrices }
	
	In this section, we consider a general model of $\Omega$
 where the point sources are clustered into  separated clumps.
	\begin{assumption}\label{separatedclumpsmodel}
		(Separated clumps model). Let $M$ and $R$ be a positive integers and $\Omega \subseteq \mathbb{T}$ have cardinality $S .$ We say that $\Omega$ consists of $R$ separated clumps with parameters $(R,M, S, \alpha, \beta)$ if the following hold.	
		\begin{enumerate}[(a)]
		    \item 
		    $\Omega$ can be written as the union of $R$ disjoint sets $\left\{\Lambda_{r}\right\}_{r=1}^{R},$ where each $\operatorname{clump} \Lambda_{r}$ is contained in
	    	an interval of length $1 /(M-1)$. 
		    \item 
		    $\Delta \geq \alpha / (M-1)$ with $\max _{1 \leq r\leq R}\left(\lambda_{r}-1\right)<1 / \alpha$
		    where $\lambda_{r}$ is the cardinality of $\Lambda_{r}$.
		
		    \item
		    If $R>1,$ then the distance between any two clumps is at least $\beta / (M-1)$.
		\end{enumerate}
	\end{assumption}

\begin{figure}[h]
	\centering
	\begin{tikzpicture}[xscale = 0.6,yscale = 0.6]
	\draw[thick] (-6,0) -- (-0.5,0);
	\filldraw[red] (-5,0) circle (0.1cm);		
	\filldraw[red] (-4.7,0) circle (0.1cm);		
	\filldraw[red] (-4.4,0) circle (0.1cm);		
	\draw[blue,thick,<->] (-4.7,-0.2) -- (-4.4,-0.2);
	\node[blue,below] at (-4.4,-0.3) {$\frac{\alpha}{M-1}$};
	
	
	\node[above] at (-4.7,0.2) {$\Lambda_1$};
	\filldraw[red] (-2,0) circle (0.1cm);		
	\filldraw[red] (-1.7,0) circle (0.1cm);		
	\filldraw[red] (-1.4,0) circle (0.1cm);		
	\draw[blue,thick,<->] (-1.7,-0.2) -- (-1.4,-0.2);
	\node[blue,below] at (-1.4,-0.3) {$\frac{\alpha}{M-1}$};
	\node[above] at (-1.7,0.2) {$\Lambda_2$};
	\draw[dotted,thick] (-0.5,0) -- (1,0);
	\draw[thick] (1,0) -- (6.4,0);
	\filldraw[red] (2,0) circle (0.1cm);		
	\filldraw[red] (2.3,0) circle (0.1cm);		
	\filldraw[red] (2.6,0) circle (0.1cm);		
	\draw[blue,thick,<->] (2.3,-0.2) -- (2.6,-0.2);
	\node[blue,below] at (2.6,-0.3) {$\frac{\alpha}{M-1}$};
	\node[above] at (2.3,0.2) {$\Lambda_{R-1}$};
	\filldraw[red] (5,0) circle (0.1cm);		
	\filldraw[red] (5.3,0) circle (0.1cm);		
	\filldraw[red] (5.6,0) circle (0.1cm);		
	\draw[blue,thick,<->] (5.3,-0.2) -- (5.6,-0.2);
	\node[blue,below] at (5.6,-0.3) {$\frac{\alpha}{M-1}$};
	\node[above] at (5.3,0.2) {$\Lambda_R$};
	\draw[blue,thick,<->] (2.6,0.2) -- (5,0.2);
	\node[blue,above] at (3.8,0.2) {$\frac{\beta}{M-1}$};
	\end{tikzpicture}
	\caption{$\Omega = \cup_r \Lambda_r$ where each $\Lambda_r$ contains 3 equally spaced points with spacing $\alpha/(M-1)$. The clumps are separated at least by $\beta/(M-1)$. 
	}
	\label{FigDemoClumps1}
\end{figure}
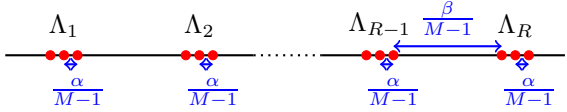
	
	An example of separated clumps is given in Figure \ref{FigDemoClumps1}.
	There are many types of discrete sets that consist of separated clumps. Extreme examples include when $\Omega$ is a single clump containing all $S$ points, and when $\Omega$ consists of $S$ clumps containing a single point. While our theory applies to both extremes, the in-between case where $\Omega$ consists of several clumps each of modest size is most interesting. A super-resolution theory of \textit{single-snapshot} MUSIC and ESPRIT for this  separated clumps model is developed in \cite{li2021stable,li2020super,li2019conditioning}. This paper focuses on the multi-snapshot scenario. The difference between the single-snapshot and multi-snapshot cases is discussed in Section \ref{sec:singlevsmultiple}. 

Under this separated clumps model, $\sigma_{S}(\Phi)$ can be estimated as an $\ell^2$ aggregate of $R$ terms, where each term only depends on the ``geometry" of each clump \cite[Theorem 2.7]{li2021stable}.

	\begin{theorem}
	\label{lemmasingular}
		Let $M \geq S^{2}+1$. Assume $\Omega$ satisfies Assumption \ref{separatedclumpsmodel} with parameters $(R,M, S, \alpha, \beta)$ for some $\alpha>0$ and
		\begin{equation}
		\label{betaeq}
		\beta \geq \max _{1 \leq r \leq R} \frac{20 S^{1 / 2} \lambda_{r}^{5 / 2}}{\alpha^{1 / 2}}.
		\end{equation}
		Then there exist explicit constants $B_1,\dots,B_R>0$ where $B_r$ only depends on $M$ and $\lambda_r$ such that 
		$$
		\sigma_{S }(\Phi) \geq \sqrt{M-1}\Big(\sum_{r=1}^{R}\big(B_{r} \alpha^{-\lambda_{r}+1}\big)^{2}\Big)^{-\frac{1}{2}}. 
		$$
				\end{theorem}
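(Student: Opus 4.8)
The plan is to deduce this statement directly from \cite[Theorem 2.7]{li2021stable}: Assumption \ref{separatedclumpsmodel} together with the hypotheses $M\ge S^2+1$ and \eqref{betaeq} are exactly the ones under which that result is stated, and the constants $B_r$ here are the explicit ones produced there. For the reader's benefit, and so that the argument can be reconstructed if needed, I outline the three-step structure underlying such a bound.

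\emph{Step 1: reduction to the pseudoinverse.} Since $M-1\ge S$ and the nodes $\{\omega_j\}$ are distinct, $\Phi$ has full column rank, so $\sigma_S(\Phi)=1/\sigma_1(\Phi^\dagger)$ with $\Phi^\dagger=(\Phi^*\Phi)^{-1}\Phi^*$. Bounding the spectral norm by the Frobenius norm gives $\sigma_S(\Phi)^{-2}\le\|\Phi^\dagger\|_F^2=\sum_{j=1}^S\|q_j\|^2$, where $q_j\in\CC^M$ is the conjugate of the $j$-th row of $\Phi^\dagger$; equivalently $q_j$ is the \emph{minimal-norm} vector satisfying the biorthogonality relations $\langle\phi(\omega_k),q_j\rangle=\delta_{j,k}$ for every $k$. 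Consequently $\|q_j\|\le\|\tilde q_j\|$ for \emph{any} $\tilde q_j\in\CC^M$ with $\langle\phi(\omega_k),\tilde q_j\rangle=\delta_{j,k}$, so it suffices to exhibit good explicit dual certificates $\tilde q_j$.

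\emph{Step 2: a localized certificate per clump.} Fix a source $\omega_j$ lying in a clump $\Lambda_r$, and seek $\tilde q_j=\{\hat g(m)\}_{m=0}^{M-1}$ so that the trigonometric polynomial $g(\omega)=\sum_m \hat g(m)e^{2\pi i m\omega}$ of degree $\le M-1$ equals $1$ at $\omega_j$ and $0$ at every other $\omega_k$. This splits into (i) killing the remaining nodes inside $\Lambda_r$, which lie within $1/(M-1)$ of $\omega_j$ with pairwise gaps $\ge\alpha/(M-1)$: a Lagrange-type interpolation against such clustered nodes forces coefficients that grow like $\alpha^{-(\lambda_r-1)}$, and this is precisely the origin of the factor $\alpha^{-\lambda_r+1}$; and (ii) killing the far clumps, which sit at distance $\ge\beta/(M-1)$: a Fej\'er/Jackson-type localization achieves the required vanishing at a controlled cost in degree. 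Condition \eqref{betaeq} is exactly what guarantees that the degree budget $M-1$ suffices for all clumps simultaneously and that the cross-interference between pieces is negligible. Since $\|\phi(\omega)\|=\sqrt M$, after normalization this construction yields $\|\tilde q_j\|^2\lesssim (B_r\alpha^{-\lambda_r+1})^2/(M-1)$ for the appropriate explicit constant $B_r$.

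\emph{Step 3: assembly.} Summing over $j$ and grouping by clump, $\sigma_S(\Phi)^{-2}\le\sum_{r=1}^R\sum_{j:\,\omega_j\in\Lambda_r}\|\tilde q_j\|^2\lesssim\frac{1}{M-1}\sum_{r=1}^R(B_r\alpha^{-\lambda_r+1})^2$ after absorbing the harmless multiplicity factor $\lambda_r$ into $B_r$; inverting gives the claimed $\ell^2$-aggregate lower bound on $\sigma_S(\Phi)$. The main obstacle is Step 2: producing dual certificates with the \emph{sharp} $\alpha^{-\lambda_r+1}$ scaling — rather than a cruder bound — while respecting the degree constraint $M-1$ across all $R$ clumps at once. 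This is the technical heart of \cite[Theorem 2.7]{li2021stable}, and it is why the conditions $M\ge S^2+1$ and \eqref{betaeq} are imposed.
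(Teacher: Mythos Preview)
Your proposal is correct and takes essentially the same approach as the paper: the paper does not prove Theorem \ref{lemmasingular} at all but simply quotes it as \cite[Theorem 2.7]{li2021stable}, exactly as you do in your first paragraph. Your additional three-step sketch of the duality/certificate argument behind that result is accurate and goes beyond what the paper provides, but since the paper offers no proof of its own there is nothing further to compare.
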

		
		An explicit formula for $B_r$ is given in  \cite[Theorem 2.7, Eq. (2.5)]{li2021stable}. In particular, $B_r$ only depends on $\lambda_r$ and $M$ (although $B_r$ can be further upper bounded in terms of only $\lambda_r$ if desired) and importantly, $B_r$ does not depend on $\alpha$.
	
The main feature of this theorem is the exponent on $\alpha$, which depends on the cardinality of each clump as opposed to the total number of points $S$. Let $\lambda$ be the cardinality of the largest clump: \begin{equation}
\label{eqlambdamax}
\lambda = \max_{1\leq r\leq R} \lambda_r.
\end{equation}
Since the inequality holds for any $\Omega$ that is a $(R,M,S,\alpha,\beta)$ set, then it holds for $\alpha=\Delta (M-1)$.
{Defining the super-resolution factor $\SRF$ as equation \eqref{eqsrf}, Lemma \ref{lemmasingular} implies
\begin{equation}
\label{eqlowert1}
\sigma_{\min}(\Phi) \ge
C \sqrt{M-1}\ \SRF^{-\lambda+1}.
\end{equation}

\subsection{Super-resolution of MUSIC and ESPRIT}
In this section, we combine our stability analysis for multi-snapshot MUSIC and ESPRIT in Section \ref{secmusicespritstability} with the minimum singular value estimate in Theorem \ref{lemmasingular} to derive a super-resolution theory for multi-snapshot MUSIC and ESPRIT.

For MUSIC, we obtain an upper bound for the perturbation of the noise-space correlation function by combining Theorems \ref{thmmusicstability} and \ref{lemmasingular}. 

\begin{theorem}
		\label{thm:musicsuper1}
		Suppose $M \ge S+1$, $\Omega$ satisfies the separated clumps model in Assumption \ref{separatedclumpsmodel} with parameters $(R,M,S,\alpha,\beta)$ {such that \eqref{betaeq} holds. Under Assumption \ref{assump:main}, let  $C_\tau$ and $B_r$ be the constants in Theorems \ref{thm:Uperturb} and \ref{lemmasingular} respectively, and let $\hat\calR$ be the perturbed noise-space correlation function of MUSIC. Then we have 
		\begin{equation}
		\E \big( \|\hat\calR -\calR\|_\infty^2 \big) \leq \frac{C_\tau }{\lambda_S(X)} \sum_{r=1}^{R}\left(B_{r} \alpha^{-\lambda_{r}+1}\right)^{2}  \frac{\nu^2}{L}.
		\label{thm:musicsuper1eq1}
		\end{equation}
		}
	\end{theorem}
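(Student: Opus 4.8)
The plan is to simply compose the two bounds that have already been established. Theorem~\ref{thmmusicstability} gives the general perturbation estimate
\[
\E \big( \|\hat\calR -\calR\|_\infty^2 \big)
\leq \frac{C_\tau M}{ \lambda_S(X) \sigma_S^2(\Phi)} \frac{\nu^2}{L},
\]
which is valid under Assumption~\ref{assump:main} whenever $M \geq S+1$; this hypothesis is in force here. The only remaining task is to replace the factor $M/\sigma_S^2(\Phi)$ by something explicit under the separated clumps model. This is exactly what Theorem~\ref{lemmasingular} provides, and it applies because we are assuming $\Omega$ is a $(R,M,S,\alpha,\beta)$ set with \eqref{betaeq} satisfied. (One subtlety: Theorem~\ref{lemmasingular} requires $M \geq S^2+1$, which is a stronger hypothesis than the $M \geq S+1$ stated in Theorem~\ref{thm:musicsuper1}; I would either tacitly inherit the $M \geq S^2+1$ condition through the clumps hypotheses or note that the conclusion of Theorem~\ref{lemmasingular} is what is actually being invoked.)

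First I would invoke Theorem~\ref{lemmasingular} to get
\[
\sigma_S(\Phi) \geq \sqrt{M-1}\,\Big(\sum_{r=1}^R \big(B_r \alpha^{-\lambda_r+1}\big)^2\Big)^{-1/2},
\]
and hence, squaring and inverting,
\[
\frac{1}{\sigma_S^2(\Phi)} \leq \frac{1}{M-1}\sum_{r=1}^R \big(B_r \alpha^{-\lambda_r+1}\big)^2.
\]
Then I would substitute this into the bound of Theorem~\ref{thmmusicstability}. The factor $M$ in the numerator combines with the $1/(M-1)$ from the singular value estimate; since $M/(M-1) \leq 2$ for $M \geq 2$, one can absorb this constant into $C_\tau$ (relabelling the constant but keeping the same symbol, as the paper does elsewhere). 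This yields
\[
\E \big( \|\hat\calR -\calR\|_\infty^2 \big)
\leq \frac{C_\tau}{\lambda_S(X)}\sum_{r=1}^R \big(B_r \alpha^{-\lambda_r+1}\big)^2 \frac{\nu^2}{L},
\]
which is precisely \eqref{thm:musicsuper1eq1}.

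There is essentially no obstacle here: the proof is a one-line chaining of two previously stated theorems, modulo the bookkeeping of constants. The only thing requiring minor care is the handling of the $M/(M-1)$ factor and the implicit strengthening of the hypothesis on $M$ needed to invoke Theorem~\ref{lemmasingular}; I would mention both briefly but not belabor them. If one wanted the cleaner form stated in the introduction, $\E(\|\hat\calR-\calR\|_\infty^2) \lesssim \SRF^{2(\lambda-1)}/\lambda_S(X) \cdot \nu^2/L$, one additionally sets $\alpha = \Delta(M-1)$ so that $\alpha^{-\lambda_r+1} = \SRF^{\lambda_r-1} \leq \SRF^{\lambda-1}$, and bounds $B_r$ by a constant depending only on $\lambda$; but the theorem as stated keeps the sharper clump-by-clump form, so this last simplification is optional.
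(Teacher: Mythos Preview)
Your proposal is correct and matches the paper's approach exactly: the paper does not give a separate proof of this theorem but simply states it as the combination of Theorems~\ref{thmmusicstability} and~\ref{lemmasingular}. Your observation about the implicit $M \geq S^2+1$ hypothesis inherited from Theorem~\ref{lemmasingular} and the absorption of $M/(M-1)$ into $C_\tau$ are both accurate and, in fact, more carefully spelled out than in the paper itself.
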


Theorem \ref{thm:musicsuper1} provides a perturbation bound for the noise-space correlation function in multi-snapshot MUSIC under the separated clumps model. The terms that appear in the upper bound in Theorem \ref{thm:musicsuper1} can be interpreted as follows: {$\nu^2$ is the noise variance, $1/L$ is the usual stochastic factor, and the quantity in front of $\nu^2/L$ can be interpreted as the condition number of super-resolution recovery of multi-snapshot MUSIC.} Letting $\lambda$ be the cardinality of the largest clump defined in \eqref{eqlambdamax}, {Theorem \ref{thm:musicsuper1} shows that 
    \begin{equation}
    \E \big( \|\hat\calR -\calR\|_\infty^2 \big)
    \lesssim_{M,S,\lambda,\tau} \frac{ \SRF^{2\lambda -2}}{\lambda_S(X)} \frac{\nu^2}{L}.
    \label{eqmusicscaling}
    \end{equation}}
    Notice that the right hand side only depends on the cardinality of the largest clump {instead of the total number of point sources $S$}. This perturbation bound on MUSIC is verified by numerical experiments in Section \ref{section:num_exp}.

As for ESPRIT, we provide a bound for the support error by combining Theorem \ref{thmespritstability} and  Theorem \ref{lemmasingular}.

\begin{theorem}
		\label{thm:espritsuper1}
		Suppose $M \ge S+1$, $\Omega$ satisfies the separated clumps model in Assumption \ref{separatedclumpsmodel} with parameters $(R,M,S,\alpha,\beta)$ such that {\eqref{betaeq} holds. Under Assumption \ref{assump:main}, let $C_\tau$ and $B_r$ be the constants in Theorems \ref{thm:Uperturb} and \ref{lemmasingular} respectively, and let $\hat\Omega$ be the output of ESPRIT.
		\begin{enumerate}[(a)]
			\item 
			\underline{Moderate SNR regime.} We have
			\begin{equation*}
				\E \big( {\rm md}(\hat\Omega,\Omega)^2\big) 
				\leq \frac{C_\tau 16^{S+2}  S^{3}}{\lambda_S(X)} \Big(\sum_{r=1}^{R} \left(B_{r} \alpha^{-\lambda_{r}+1}\right)^{2}\Big)^2 \, \frac{\nu^2}{L}.
			\end{equation*}
			\item 
			\underline{Large SNR regime.} Define 
    	    \begin{align*}
    	    \rho
    	    :=\frac{1}{2\sqrt 6} \frac{4^{S+2}\Delta}{S^2 M}\sum_{r=1}^{R} \left(B_{r} \alpha^{-\lambda_{r}+1}\right)^{2}.
    	    \end{align*}
    	    There is a sufficiently large $D_\tau\geq 1$ depending only on $\tau$ such that if
        	\begin{equation*}
        		 \frac{L}{\nu^2} \geq \frac{D_\tau}{\lambda_S} \sum_{r=1}^{R}\big(B_{r} \alpha^{-\lambda_{r}+1}\big)^{2}  \max\Big(1,\frac{1}{\rho},\frac{1}{\rho^2}\Big),
        	\end{equation*}
        	then it holds that 
        	$$
        	\E\big( \md(\hat \Omega,\Omega)^2 \big) 
        	\lesssim_\tau \frac{1}{\lambda_S(X)}  \sum_{r=1}^{R}\big(B_{r} \alpha^{-\lambda_{r}+1}\big)^{2} \,  \frac{\nu^2}{L}. 
        	$$
		\end{enumerate}
	}

	\end{theorem}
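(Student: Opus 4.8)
The plan is to prove Theorem \ref{thm:espritsuper1} as a direct corollary of Theorem \ref{thmespritstability} by substituting the minimum singular value estimate of Theorem \ref{lemmasingular} into each of the two regimes. Since $\Omega$ satisfies the separated clumps model with parameters $(R,M,S,\alpha,\beta)$ and \eqref{betaeq} holds, Theorem \ref{lemmasingular} gives
\[
\sigma_S^2(\Phi) \geq (M-1)\Big(\sum_{r=1}^R \big(B_r\alpha^{-\lambda_r+1}\big)^2\Big)^{-1}.
\]
Write $T := \sum_{r=1}^R (B_r\alpha^{-\lambda_r+1})^2$ for brevity, so that $\sigma_S^2(\Phi)\geq (M-1)/T$ and hence $M/\sigma_S^2(\Phi) \leq MT/(M-1) \leq 2T$ (using $M\geq S+1\geq 2$, and this crude constant can be absorbed). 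Note $M\geq S^2+1$ is needed for Theorem \ref{lemmasingular}, which is implied here once we are in the separated clumps setting with the stated hypotheses; if not, one simply keeps $M\geq S+1$ and notes the singular value bound still requires $M\geq S^2+1$, so I would state this in passing or fold it into the hypotheses.

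For part (a), I would start from the moderate-SNR bound in Theorem \ref{thmespritstability}(a),
\[
\E\big({\rm md}(\hat\Omega,\Omega)^2\big) \leq \frac{C_\tau 16^{S+2}S^3 M^2}{\lambda_S(X)\sigma_S^4(\Phi)}\frac{\nu^2}{L},
\]
and replace $M^2/\sigma_S^4(\Phi) = (M/\sigma_S^2(\Phi))^2$ by $(M T/(M-1))^2 \leq (2T)^2$, or more cleanly bound $M/\sigma_S^2(\Phi)\le 2T$ and absorb the factor. This yields
\[
\E\big({\rm md}(\hat\Omega,\Omega)^2\big) \lesssim \frac{C_\tau 16^{S+2}S^3}{\lambda_S(X)}\, T^2 \,\frac{\nu^2}{L},
\]
which is the claimed bound (the statement as written has constant $C_\tau 16^{S+2}S^3$, so I would check whether the $M^2$ in Theorem \ref{thmespritstability}(a) is meant to cancel exactly against $(M-1)^2$ from the squared singular value bound — indeed $M^2/(M-1)^2 \le 4$ for $M\ge 2$, absorbed into the implied constant, or the authors may intend the substitution to be exact up to the stated constant; I would phrase it as "$\leq$" with the constant as stated, noting the elementary $M^2/(M-1)^2\le 4$ step).

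For part (b), I would unwind the definitions of $\xi$ and $\rho$ in Theorem \ref{thmespritstability}(b) under the same substitution. We have $\xi = \sigma_S^2(\Phi)\lambda_S(X) L /(M\nu^2) \geq \lambda_S(X) L/(T\nu^2)$ up to the $(M-1)/M$ factor, so the hypothesis $\xi\geq D_\tau\max(1,1/\rho,1/\rho^2)$ is implied by $L/\nu^2 \geq (D_\tau/\lambda_S)\, T\,\max(1,1/\rho,1/\rho^2)$, matching the stated condition once we check that the $\rho$ here agrees with the $\rho$ in Theorem \ref{thmespritstability}(b) after substitution: there $\rho = \tfrac{1}{\sqrt6}\cdot 4^{S+2}\sigma_S^2(\Phi)\Delta/(S^2 M)$, and plugging $\sigma_S^2(\Phi)\geq (M-1)/T$ gives $\rho \gtrsim \tfrac{1}{\sqrt 6} 4^{S+2}\Delta/(S^2 M T)$, i.e. $\tfrac{1}{2\sqrt6} 4^{S+2}\Delta T^{-1}/(S^2 M)$ up to constants — matching the new $\rho$ up to the replacement of $T^{-1}$ versus $T$; here I need to be careful with the direction: the new $\rho$ in the theorem statement has $T$ (not $T^{-1}$) multiplying, which is the reciprocal situation, so the monotonicity of $\max(1,1/\rho,1/\rho^2)$ works in our favor since a smaller $\rho$ makes the condition stronger, and we want the stated (weaker-looking) condition to imply the original. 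I would verify this comparison carefully. Finally the conclusion $\E(\md^2)\lesssim_\tau M/(\sigma_S^2(\Phi)\lambda_S(X))\cdot\nu^2/L$ becomes $\lesssim_\tau T/(\lambda_S(X))\cdot \nu^2/L$ after the substitution, which is exactly the claim.

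The main obstacle is bookkeeping rather than mathematics: getting the $\rho$-comparison in part (b) exactly right, since the definition of $\rho$ changes form (from one involving $\sigma_S^2(\Phi)$ to one involving $T$, which are inversely related), and making sure the direction of the implication between the SNR hypotheses is correct — a weaker hypothesis in the corollary must still imply the hypothesis of Theorem \ref{thmespritstability}(b). I would handle this by explicitly writing $1/\rho_{\text{new}} \gtrsim 1/\rho_{\text{old}}$ (or the reverse, as the computation dictates) and tracking constants so that $D_\tau$ can be enlarged if necessary to absorb them. A secondary point is confirming that the $M\geq S^2+1$ requirement of Theorem \ref{lemmasingular} is consistent with the hypothesis $M\geq S+1$ stated in Theorem \ref{thm:espritsuper1}; I would note that invoking Theorem \ref{lemmasingular} implicitly requires $M\geq S^2+1$, or that this should be listed among the hypotheses, and proceed under that assumption.
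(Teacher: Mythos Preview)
Your approach is correct and is exactly what the paper does: Theorem \ref{thm:espritsuper1} is stated as an immediate consequence of combining Theorem \ref{thmespritstability} with the singular value bound of Theorem \ref{lemmasingular}, and the paper gives no further proof details.

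Your bookkeeping instincts are also right. Two remarks. First, you correctly flag that Theorem \ref{lemmasingular} requires $M\geq S^2+1$, whereas the corollary only states $M\geq S+1$; invoking the singular value bound silently assumes the stronger hypothesis. Second, and more substantively, your suspicion about the $\rho$ comparison in part (b) is on target: substituting $\sigma_S^2(\Phi)\geq (M-1)/T$ into $\rho_{\rm old}=\tfrac{1}{\sqrt 6}\,4^{S+2}\sigma_S^2(\Phi)\Delta/(S^2 M)$ gives $\rho_{\rm old}\geq \tfrac{1}{2\sqrt 6}\,4^{S+2}\Delta/(S^2 T)$, with $T^{-1}$ rather than $T$. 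The $\rho$ written in the statement of Theorem \ref{thm:espritsuper1}(b) has the sum $T$ in the numerator, which is the reciprocal of what the substitution yields; this cannot be repaired by enlarging $D_\tau$. This appears to be a typo in the paper's stated $\rho$ (and there is no separate $M$ factor either, once the $(M-1)/M$ is absorbed). Your plan to write out the inequality $1/\rho_{\rm new}\gtrsim 1/\rho_{\rm old}$ explicitly would have exposed this, so proceed exactly as you describe and record the corrected $\rho$.
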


Theorem \ref{thm:espritsuper1} provides an estimate of the support error in multi-snapshot ESPRIT under the separated clumps model. Let $\lambda$ be the cardinality of the largest clump defined in \eqref{eqlambdamax}. 
{If the assumptions in part (b) hold, then we can provide a simpler estimate, 
\begin{equation}
\E\big( \md(\hat \Omega,\Omega)^2 \big)
\lesssim_{M,S,\lambda,\tau} \frac{ \SRF^{2\lambda -2}}{\lambda_S(X)}\frac{\nu^2}{L}. 
\label{eqespritscaling}
\end{equation}
}
This perturbation bound on ESPRIT is verified by numerical experiments in Section \ref{section:num_exp}.

{
In comparison, classical perturbation bounds on MUSIC and ESPRIT usually depend on some eigenvalues and eigenvectors of the noiseless covariance matrix $Y$, such as (29a, 29b) of \cite{rao1989performance}. If we know the true source locations and amplitudes, we can compute the eigenvalues and eigenvectors, and then plug them into (29a) of \cite{rao1989performance}. In real applications, we do not know the true source locations and amplitudes. Therefore, those bounds can not be directly evaluated.

In the special case of two sources, (33a) of \cite{rao1989performance} shows that the squared
source localization error is proportional to $(1/\Delta)^2 = {\SRF}^2$ ($\Delta$ is referred as the one defined in \cite{rao1989performance}). The two-source scenario is a special case of our clumps model where
the support consists of a single clump with two point sources: $R = 1$ and $\lambda= 2$.
Therefore, (33a) of \cite{rao1989performance} is a special case of our Theorem \ref{thm:musicsuper1} and Theorem \ref{thm:espritsuper1}. Our results
consider a more general class of support, and such generalization is highly nontrivial.
}

	\section{Cram\'er-Rao lower bound}
	\label{seccramerrao}
	
	\subsection{Background}
	\label{secCRBA}
	
	The classical Cram\'er-Rao bound (CRB) \cite{cramer1999mathematical} expresses a lower bound on the variance of any unbiased estimator of the support $\Omega$. In literature, the Cram\'er-Rao lower bound has been derived for various models of the support set, and we review some important contributions further below.

	Before we proceed, let us recall some old definitions and define some new ones that will be used in this section and Section \ref{sec:CRproof}. Let $\xi\in\T$. 
	\begin{itemize}
		\item 
		$\phi(\xi)\in\C^M$ denotes the vector whose $k$-th entry, for $0\leq k\leq M-1$, is $e^{-2\pi i k\xi}$.
		\item
		For each $\ell\geq 0$, we let $\phi^{(\ell)}\in\C^M$ be the vector whose entries are the $\ell$-th derivative of $\phi(\xi)$. Hence, $\phi^{(\ell)}(\xi)_k:=(-2\pi i k)^\ell e^{-2\pi ik\xi}$. 
		\item
		For convenience, we let $\psi(\xi):=\phi^{(1)}(\xi)$, which will play an important role below. 
		\item
		Given $\Omega$, we let $\Phi, \Psi\in \C^{M\times S}$ be the matrices whose columns are of the form $\phi(\theta)$ and $\psi(\theta)$, respectively, for each $\theta\in \Omega$.
	\end{itemize} 
	
	The CRB for spectral estimation has been extensively studied \cite{stoica1989music,stoica1990music,lee1992cramer,koochakzadeh2016cramer,ottersten1993exact}. We briefly review some important results. A fundamental result in \cite[Theorem 4.1]{stoica1989music} implies that, under appropriate assumptions, for any unbiased estimator $\hat\Omega:=\{\hat\omega_{j}\}_{j=1}^S$ of $\Omega:=\{\omega_{j}\}_{j=1}^S$, {not necessarily just MUSIC or ESPRIT}, we have
	\begin{equation}
	\begin{aligned}
		\label{eq:cov1}
		\E \Big( \big(\hat\Omega-\Omega\big)\big(\hat\Omega-\Omega\big)^*\Big) 
		\geq \frac{\nu^2}{2L} \( \text{Re}\big( \Psi^* (I-P_{\Phi}) \Psi \odot X \big)\)^{-1}.
	\end{aligned}
	\end{equation}
	To simplify the notation in this section, we always assume that the elements of $\hat\Omega$ have been re-indexed to minimize its matching distance to $\Omega$.

	While inequality \eqref{eq:cov1} already gives the optimal dependence on the noise variance $\nu^2$ and the number of snapshots $L$, the matrix that appears on the right hand side {implicitly depends on $\Omega$, so it is unclear how $( \text{Re}( \Psi^* (I-P_{\Phi}) \Psi \odot X \big))^{-1}$ behaves as a function of $\Omega$ and $M$.} One would hope for an explicit bound depending on the geometry of $\Omega$. 
	
	The paper \cite{lee1992cramer} considered a situation where $\Omega$ consists of $S$ points approximately spaced by $\epsilon$. The main result in the aforementioned paper provided an expansion of the right hand side of \eqref{eq:cov1} in the asymptotic limit $\epsilon\to0$. In particular, it was shown that 
	\begin{equation}
		\E \big( \md(\hat\Omega,\Omega)^2 \big) 
	    \geq \frac{C \nu^2 \epsilon^{-2S+2}}{L}  + O(\epsilon^{-2S+3}),
		\label{lee1992cramereq}
	\end{equation}
	where $C>0$ is independent of $\epsilon, \nu,L$ but is allowed to depend on the other parameters.

	\subsection{New CRB for clumps model}
	
	{The geometric results in Section \ref{secsuperresolution}, as well as prior work on super-resolution \cite{li2021stable}, strongly suggest that inequality \eqref{lee1992cramereq} is unnecessarily pessimistic and is achieved in the  worst-case scenario where all point sources are located in a single clump. 
	
	This section provides a new CRB for the separated clumps model, under the additional requirements that each clump has $\lambda$ elements that are equally spaced by a small parameter $\epsilon$. We will derive a CRB on the order of $\epsilon^{-2\lambda+2}$, which is much improved since it is possible that $\lambda\ll S$. 
	}
	
	We consider a situation where there are $R$ clumps that are far apart, and each clump contains $\lambda$ points separated by $\epsilon$. This is a more specific model than the one considered in Assumption \ref{separatedclumpsmodel}.
	
	\begin{definition}
		\label{def:Omegaep}
		We say $\Omega_\epsilon\subset\T$ is a $(\epsilon,R,\lambda,\{\theta_r\}_{r=1}^R)$ set if $\epsilon< \lambda^{-1}\min_{r\not=s} |\theta_r-\theta_s|_\T$ and $\Omega_\epsilon$ can be written as
		\[
		\Omega_\epsilon
		=\bigcup_{r=1}^R \{\theta_r,\theta_r+\epsilon,\dots,\theta_r+(\lambda-1)\epsilon\}.
		\]
		Note that this implies $\Omega_\epsilon$ consists of $R$ disjoint sets each supported in an interval of length $(\lambda-1)\epsilon$ and $\Omega_\epsilon$ has cardinality $S=R\lambda$. 
	\end{definition}
	
	We also require the following assumptions on the noise and amplitudes. 
	
	\begin{assumption}
		\label{assump:CR}
		Fix any positive integers $L,M,S$ such that $ M\ge S$ and $ L\ge S$.
		\begin{enumerate}[(a)]
			\item
			Fix $M,R,\lambda$ and distinct $\theta_1,\dots,\theta_R\subset\T$ and let $\Omega_\epsilon$ be a $(\epsilon,R,\lambda,\{\theta_r\}_{r=1}^R)$ set. 
			\item
			The amplitude covariance matrix $X$ is strictly positive-definite. 
			\item
			For each $t>0$ and $1\leq j\leq M$, $\e_j(t)$ is a Gaussian random vector with independent entries such that  $\e_j(t)\sim N(0,\nu^2)$. Also assume that $\e(s)$ and $\e(t)$ are independent for $s\not=t$.
			\item
			For each $1\leq \ell\leq L$, we are given
			\[
			\y(t_\ell)=\Phi(\Omega_\epsilon) \x(t_\ell) + \e(t_\ell).
			\]
		\end{enumerate}
	\end{assumption}

	\begin{theorem}
		\label{thm:CR}
		Under Assumption \ref{assump:CR}, there exists $\epsilon_0<(2\pi \lambda(M-1))^{-1}$ depending only on $M$, $\lambda$, and $\{\theta_r\}_{r=1}^R$ such that the following hold. For all $\epsilon\in (0,\epsilon_0)$,  any unbiased estimator $\hat\Omega_\epsilon$ of $\Omega_\epsilon$ satisfies the lower bound
		\begin{equation}
			\label{eq:CR}
			\E \big( \md(\hat\Omega_\epsilon,\Omega_\epsilon)^2\big)
		 	\geq \frac{C \, \big(\epsilon (M-1) \big)^{-2\lambda+2} \, \nu^2}{L(M-1)^3 \|X\|_2},
		 \end{equation}
		where the implicit constant $C>0$ does not depend on $L$, $M$, $X$, $\epsilon$, and $\nu$. 
	\end{theorem}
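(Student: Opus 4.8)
The plan is to start from the classical Cram\'er--Rao inequality \eqref{eq:cov1}, which is available under Assumption~\ref{assump:CR} because the noise is Gaussian, $X$ is strictly positive definite, and $\Phi=\Phi(\Omega_\epsilon)$ has full column rank $S$ as soon as $\epsilon$ is small enough that the $S$ nodes of $\Omega_\epsilon$ are distinct (if $\mathrm{Re}(\Psi^*(I-P_\Phi)\Psi\odot X)$ happens to be singular, the right-hand side of \eqref{eq:cov1} is unbounded in some direction, forcing $\E(\md^2)=\infty$ and making \eqref{eq:CR} trivial, so we may assume it is invertible). Write $G:=\Psi^*(I-P_\Phi)\Psi\succeq0$, so $G_{jj}=\|(I-P_\Phi)\psi(\omega_j)\|_2^2\in\RR_{\ge0}$. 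After the matching re-indexing one has $\md(\hat\Omega_\epsilon,\Omega_\epsilon)^2=\max_j|\hat\omega_j-\omega_j|_\T^2$, so for each source $\omega_j$ the bound \eqref{eq:cov1}, the elementary fact $[A^{-1}]_{jj}\ge1/A_{jj}$ for Hermitian positive definite $A$, and $X_{jj}\le\|X\|_2$ give
\[
\E\big(\md(\hat\Omega_\epsilon,\Omega_\epsilon)^2\big)
\ \ge\ \frac{\nu^2}{2L}\,\big[(\mathrm{Re}(G\odot X))^{-1}\big]_{jj}
\ \ge\ \frac{\nu^2}{2L\,\|X\|_2\,\|(I-P_\Phi)\psi(\omega_j)\|_2^2}.
\]
Since $(M-1)^{2\lambda+1}\epsilon^{2\lambda-2}=(M-1)^3\big(\epsilon(M-1)\big)^{2\lambda-2}$, it therefore suffices to establish the geometric estimate $\|(I-P_\Phi)\psi(\omega_j)\|_2^2\le C_\lambda (M-1)^{2\lambda+1}\epsilon^{2\lambda-2}$ for every $\omega_j\in\Omega_\epsilon$, with $C_\lambda$ depending only on $\lambda$.

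For this key estimate, fix $\omega_j=\theta_r+j_0\epsilon$ in clump $r$ (so $0\le j_0\le\lambda-1$). The $\lambda$ vectors $\phi(\theta_r),\phi(\theta_r+\epsilon),\dots,\phi(\theta_r+(\lambda-1)\epsilon)$ are columns of $\Phi$, hence any linear combination of them lies in $R(\Phi)$ and is killed by $I-P_\Phi$. The idea is to pick coefficients $c=(c_0,\dots,c_{\lambda-1})$ so that $\sum_{l=0}^{\lambda-1}c_l\,\phi(\theta_r+l\epsilon)$ matches $\psi(\omega_j)=\phi^{(1)}(\omega_j)$ up to the latter's degree-$(\lambda-1)$ Taylor polynomial about $\theta_r$. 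Expanding each coordinate (which is entire in $\xi$), the coefficient of $\phi^{(n)}(\theta_r)$ is $\tfrac{\epsilon^n}{n!}\sum_l c_l l^n$ on the left and $\tfrac{(j_0\epsilon)^{n-1}}{(n-1)!}$ (and $0$ for $n=0$) on the right; matching these for $n=0,1,\dots,\lambda-1$ is a $\lambda\times\lambda$ Vandermonde system on the fixed nodes $0,1,\dots,\lambda-1$, with unique solution, and since its right-hand side is $\epsilon^{-1}$ times a vector depending only on $(\lambda,j_0)$ one gets $\|c\|_2\le C_\lambda/\epsilon$. With this $c$, the residual $\psi(\omega_j)-\sum_l c_l\phi(\theta_r+l\epsilon)$ contains only modes $\phi^{(n)}(\theta_r)$ with $n\ge\lambda$, whose coefficients are at most $C_\lambda(\lambda\epsilon)^{n-1}/(n-1)!$; using $\|\phi^{(n)}(\theta)\|_2\le\sqrt M\,(2\pi(M-1))^n$ and summing the series (which converges once $2\pi\lambda\epsilon(M-1)\le1$) yields
\[
\big\|(I-P_\Phi)\psi(\omega_j)\big\|_2
\ \le\ \Big\|\psi(\omega_j)-\textstyle\sum_{l=0}^{\lambda-1}c_l\,\phi(\theta_r+l\epsilon)\Big\|_2
\ \le\ C_\lambda\,\sqrt M\,(M-1)^{\lambda}\,\epsilon^{\lambda-1}.
\]
Squaring and using $M\le2(M-1)$ gives the required bound; one takes $\epsilon_0$ to be the minimum of $(2\pi\lambda(M-1))^{-1}$ (for the series above) and $\lambda^{-1}\min_{r\ne s}|\theta_r-\theta_s|_\T$ (so $\Omega_\epsilon$ is a genuine $(\epsilon,R,\lambda,\{\theta_r\})$ set in the sense of Definition~\ref{def:Omegaep}), which depends only on $M,\lambda,\{\theta_r\}$. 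Substituting into the first display proves \eqref{eq:CR} with $C=1/(2C_\lambda)$, which depends only on $\lambda$ and in particular not on $L,M,X,\epsilon,\nu$.

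The only genuine work is the second step: controlling the confluent Vandermonde solve so that $\|c\|_2=O(1/\epsilon)$ with a constant independent of $M$ and $\epsilon$, and — the more delicate point — proving the Taylor-remainder estimate for the $\C^M$-valued map $\xi\mapsto\psi(\xi)$ minus its degree-$(\lambda-1)$ confluent approximant at the clump center, uniformly in $M$. This is exactly where the sharp exponents arise: the residual scales like $(M-1)^{\lambda}\epsilon^{\lambda-1}$, so its square contributes $(M-1)^{2\lambda+1}\epsilon^{2\lambda-2}=(M-1)^3\big(\epsilon(M-1)\big)^{2\lambda-2}$, matching the $\SRF^{2\lambda-2}$ behavior of the ESPRIT upper bound \eqref{eqespritscaling}, and it is also what forces the restriction $\epsilon<\epsilon_0$.
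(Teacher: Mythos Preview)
Your proposal is correct and follows the paper's overall strategy---start from the CRB \eqref{eq:cov1}, reduce to bounding $\|(I-P_\Phi)\psi(\omega_j)\|_2$, and control that by approximating $\psi(\theta_r+j_0\epsilon)$ with a linear combination of the $\lambda$ steering vectors $\phi(\theta_r+l\epsilon)$ in the same clump via Taylor matching. The differences are in the execution, and yours is cleaner in two places.

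First, in the reduction step the paper uses the chain $\text{Tr}\ge\|\cdot\|_2$ together with $\|A\odot B\|_2\le\|A\|_2\|B\|_2$ to pass to $\|(I-P_\Phi)\Psi\|_F^2$, picking up an extra factor of $S$. Your use of the elementary inequality $[A^{-1}]_{jj}\ge 1/A_{jj}$ for positive-definite $A$ (applied to $A=\mathrm{Re}(G\odot X)$, whose $(j,j)$ entry is exactly $G_{jj}X_{jj}$) is more direct and avoids that loss.

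Second, for the coefficient bound the paper sets up the Vandermonde system on the $\epsilon$-dependent nodes $0,\epsilon,\dots,(\lambda-1)\epsilon$ (Lemma~\ref{lem:A}) and then invokes Cauchy integral estimates (Lemma~\ref{lem:C}) to control the resulting partial-fraction coefficients and hence the residual constants $A_{\epsilon,j,\ell}$ (Lemma~\ref{lem:taylorpsi}). By factoring out $\epsilon$ and solving the fixed Vandermonde system on nodes $0,1,\dots,\lambda-1$, you read off $\|c\|_2\le C_\lambda/\epsilon$ immediately from the invertibility of a single $\lambda\times\lambda$ matrix independent of $\epsilon$, which bypasses all the complex-analytic machinery. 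Both approaches yield the same residual order $\sqrt{M}(M-1)^\lambda\epsilon^{\lambda-1}$ once one sums the tail $\sum_{n\ge\lambda}$ under the constraint $2\pi\lambda\epsilon(M-1)\le 1$; the paper's route gives more explicit numerical constants in the $A_{\epsilon,j,\ell}$, while yours is shorter and fully elementary.
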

    Theorem \ref{thm:CR} is proved in Section \ref{sec:CRproof}. 
    
    The right hand side of \eqref{eq:CR} can be made more explicit by using the super-resolution factor. Since $\epsilon$ is the minimum separation of $\Omega_\epsilon$ and $M$ is the total number of Fourier measurements, we see that $\epsilon (M-1)=\SRF^{-1}$. Note that the assumption on $\epsilon_0$ in Theorem \ref{thm:CR} necessarily implies $\SRF>1$, and \eqref{eq:CR} becomes
		\begin{equation}
			\label{eq:CR2}
			\begin{aligned}
			\E \big( \md(\hat\Omega_\epsilon,\Omega_\epsilon)^2\big)
			\geq \frac{C \, \SRF^{2\lambda-2} \, \nu^2}{L(M-1)^3 \|X\|_2} . 
			\end{aligned}
		\end{equation}

\subsection{Comparison of our CRB and the classical CRB}

    In order to explain the main improvement of Theorem \ref{thm:CR} over classical the CRBs, consider the support set shown in Figure \ref{Fig1}. For the $\Omega$ depicted in Figure \ref{Fig1}, the classical CRB, such as \eqref{lee1992cramereq}, yields the prediction that for any unbiased estimator $\hat\Omega$,
    $$
    \E\big(\md(\hat\Omega,\Omega)^2\big) \geq \frac{C \epsilon^{-10} \nu^2}{L} + \text{lower order terms}.
    $$
    On the other hand, 
    our CRB in Theorem \ref{thm:CR} yields
	\begin{equation}
			\label{eq:CR3}
			\EE \big( \md(\widehat\Omega,\Omega)^2\big)
		 	\geq \frac{C  \, \nu^2  \varepsilon^{-4} }{(M-1)^{7} L \|X\|}. 
		 \end{equation}
		 In the super-resolution regime, $\varepsilon$ can be much smaller than $1/M$, so our CRB provides a significant improvement over the classical CRB.
	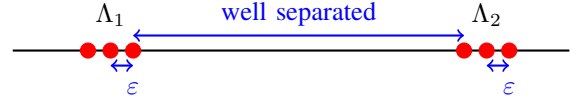
\begin{figure}[h]
	\centering
	
		\begin{tikzpicture}[xscale = 1,yscale = 1]
	\draw[thick] (1,0) -- (8.4,0);
	\filldraw[red] (2,0) circle (0.1cm);		
	\filldraw[red] (2.3,0) circle (0.1cm);		
	\filldraw[red] (2.6,0) circle (0.1cm);		
	\draw[blue,thick,<->] (2.3,-0.2) -- (2.6,-0.2);
	\node[blue,below] at (2.6,-0.3) {$\varepsilon$};
	\node[above] at (2.3,0.2) {$\Lambda_{1}$};
	\filldraw[red] (7,0) circle (0.1cm);		
	\filldraw[red] (7.3,0) circle (0.1cm);		
	\filldraw[red] (7.6,0) circle (0.1cm);		
	\draw[blue,thick,<->] (7.3,-0.2) -- (7.6,-0.2);
	\node[blue,below] at (7.6,-0.3) {$\varepsilon$};
	\node[above] at (7.3,0.2) {$\Lambda_2$};
	\draw[blue,thick,<->] (2.6,0.2) -- (7,0.2);
	\node[blue,above] at (4.8,0.2) {well separated};
	\end{tikzpicture}
	\caption{A comparison of the classical CRB and our new CRB for clumps. $\Omega$ consists of $2$ clumps where each clump contains $\lambda = 3$  equally spaced points with separation $\varepsilon$.}
	\label{Fig1}
\end{figure}

\subsection{Near optimality of ESPRIT}	

ESPRIT is one of the most popular subspace methods for spectral estimation. {To compare our upper bound for ESPRIT and our new CRB, we consider a situation where the assumptions to both Theorems \ref{thm:espritsuper1} and \ref{thm:CR} hold. 
\begin{itemize}
    \item 
    Suppose $\Omega$ consists of separated clumps with parameters $(R,M,S,\alpha,\beta)$ where $\beta$ satisfies the relationship \eqref{betaeq}. Moreover, we further impose that each clump has cardinality $\lambda$ and are equally spaced by $\alpha/(M-1)$. Then $\Omega$ is also a $(\epsilon,R,\lambda, \{\theta_r\}_{r=1}^R)$ set where $\epsilon=\alpha/(M-1)$ and each $\theta_r$ is the left most point in the $r$-th clump. An example is depicted in Figure \ref{FigDemoClumps1}.
    \item
    Assume that the amplitude covariance matrix $X$ is strictly positive-definite. 
    \item 
    Assume that for each $t>0$ and $1\leq j\leq M$, $\e_j(t)$ is a Gaussian random vector with independent entries such that  $\e_j(t)\sim N(0,\nu^2)$. Also assume that $\e(s)$ and $\e(t)$ are independent for $s\not=t$.
\end{itemize}
If $L/\nu^2$ is sufficiently large so that the assumptions in Theorem \ref{thm:espritsuper1} part (b) hold, then the output of ESPRIT satisfies 
$$
\E \big( \md(\hat\Omega, \Omega)^2\big)
\lesssim_{M,S,\lambda} \frac{\SRF^{2\lambda-2}}{\lambda_S(X)} \frac{\nu^2}{L}. 
$$
On the other hand, for any unbiased estimator $\hat\Omega$, Theorem \ref{thm:CR} and \eqref{eq:CR2} demonstrate that
$$
\E \big( \md(\hat\Omega, \Omega)^2\big)
\gtrsim_{M,S,\lambda} \frac{\SRF^{2\lambda-2}}{\|X\|_2} \frac{\nu^2}{L}.
$$
Notice that these statements are not contradictory since $\|X\|_2=\lambda_1(X)\geq \lambda_S(X).$

Thus, we observe that ESPRIT has the same dependence on the noise level $\nu$, the number of snapshots $L$, and the super-resolution factor $\SRF$ compared to the optimal unbiased estimator(s). 
}

\section{Numerical experiments}\label{section:num_exp}
 In this section, we perform systematic numerical simulations to validate our theory, under the separated clumps model. We consider the support $\Omega$ consisting of $2$ clumps of point sources, i.e., $R=2$. Each clump contains $\lambda$ equally spaced points separated by $\Delta$ for $\lambda =2$ and $\lambda = 3$ respectively.  Each amplitude is i.i.d. complex normal $CN(0,1)$ such that the real and imaginary parts  are independent normal, i.e.  $N(0,1/2)$. With such amplitudes, $\lambda_S(X) \approx 1$. Each noise is an i.i.d. $CN(0,\nu^2)$ random variable whose real and imaginary parts  are independent normal, i.e.  $N(0,\nu^2/2)$. We take $L$ snapshots of $M =100$ noisy Fourier measurements according to \eqref{eqlinearsystem}. Our upper bounds are proved for the noise-space correlation (NSC)  function perturbation in MUSIC and the support error in ESPRIT. When $\lambda=2$ and $\lambda =3$, we conduct $100$ and $2500$ trials of experiments respectively, and display the average NSC function perturbation in MUSIC and support error in ESPRIT, as well as the standard deviation in Figure \ref{FigErrorversusNoise} - \ref{FigErrorversusSRF}.
 The standard deviation is represented by the error bar.

Our numerical experiments in Figure \ref{FigErrorversusNoise} - \ref{FigErrorversusSRF} are to validate the NSC perturbation in MUSIC and the  support error in ESPRIT versus the noise standard deviation $\nu$, the number of snapshots $L$, $\sigma_S(\Phi)$ and $\SRF$. For MUSIC, the perturbation of the NSC is proved in Theorem \ref{thmmusicstability} and Theorem \ref{thm:musicsuper1} where the dependence on $\nu$, $L$, $\sigma_S(\Phi)$, $\SRF$, $M$ can be summarized as:
\begin{equation}
\begin{aligned}
	\E \big( \|\hat\calR -\calR\|_\infty \big)
	\lesssim \frac{\sqrt{ M} \nu}{ \sqrt{\lambda_S(X)} \sigma_S(\Phi)\sqrt{L}}
	\lesssim \frac{\sqrt{ M} \, \SRF^{\lambda-1}\nu}{ \sqrt{\lambda_S(X)}\sqrt{L}} .
	\end{aligned}
		\label{expmusic1}
		\end{equation}
For ESPRIT, the support error is proved in Theorem \ref{thmespritstability} and Theorem \ref{thm:espritsuper1} where the dependence on $\nu$, $L$, $\sigma_S(\Phi)$, $\SRF$, $M$ in the large SNR regime can be summarized as:  \begin{equation}
	\E \big( {\rm md}(\hat\Omega,\Omega)\big) 
			\lesssim \frac{\sqrt{ M} \nu}{ \sqrt{\lambda_S(X)} \sigma_S(\Phi)\sqrt{L}} 
	\lesssim \frac{\sqrt{ M} \, \SRF^{\lambda-1}\nu}{ \sqrt{\lambda_S(X)}\sqrt{L}} .
		\label{expesprit1}
		\end{equation} 
Finally, we show phase transition figures for ESPRIT that validate our perturbation analysis for ESPRIT.

\subsection{Error versus Noise}

We first test the NSC perturbation in MUSIC and the support error in ESPRIT versus Noise.   We set $\SRF =5$, $L=1000$ and $25000$ for $\lambda=2$ and $3$ respectively.
 Figure \ref{FigErrorversusNoise} (a) shows the NSC perturbation in MUSIC as a function of noise $\nu$ in $\log_{10}$ scale. Figure \ref{FigErrorversusNoise} (b) shows the support error in ESPRIT as a function of noise $\nu$ in $\log_{10}$ scale. 
 The slope for all curves is close to $1$, which demonstrates that the NSC perturbation in MUSIC and  the support error in ESPRIT is linear in noise as predicted in \eqref{expmusic1} and \eqref{expesprit1}.

\begin{figure}[h]
\centering
\subfigure[Sensitivity of MUSIC to Noise]{
\includegraphics[width=0.23\textwidth]{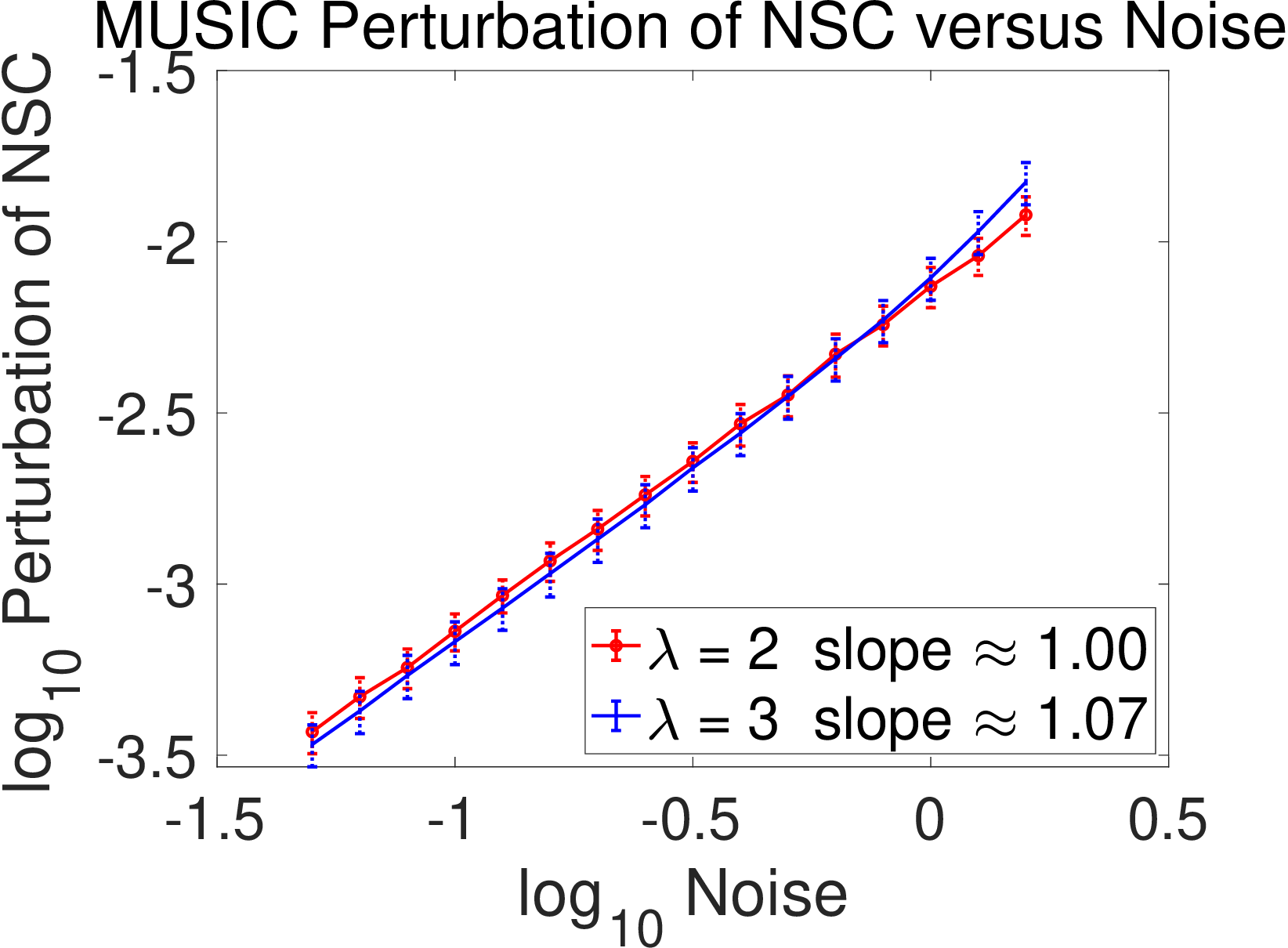}
}
\subfigure[Sensitivity of ESPRIT to Noise]{
\includegraphics[width=0.22\textwidth]{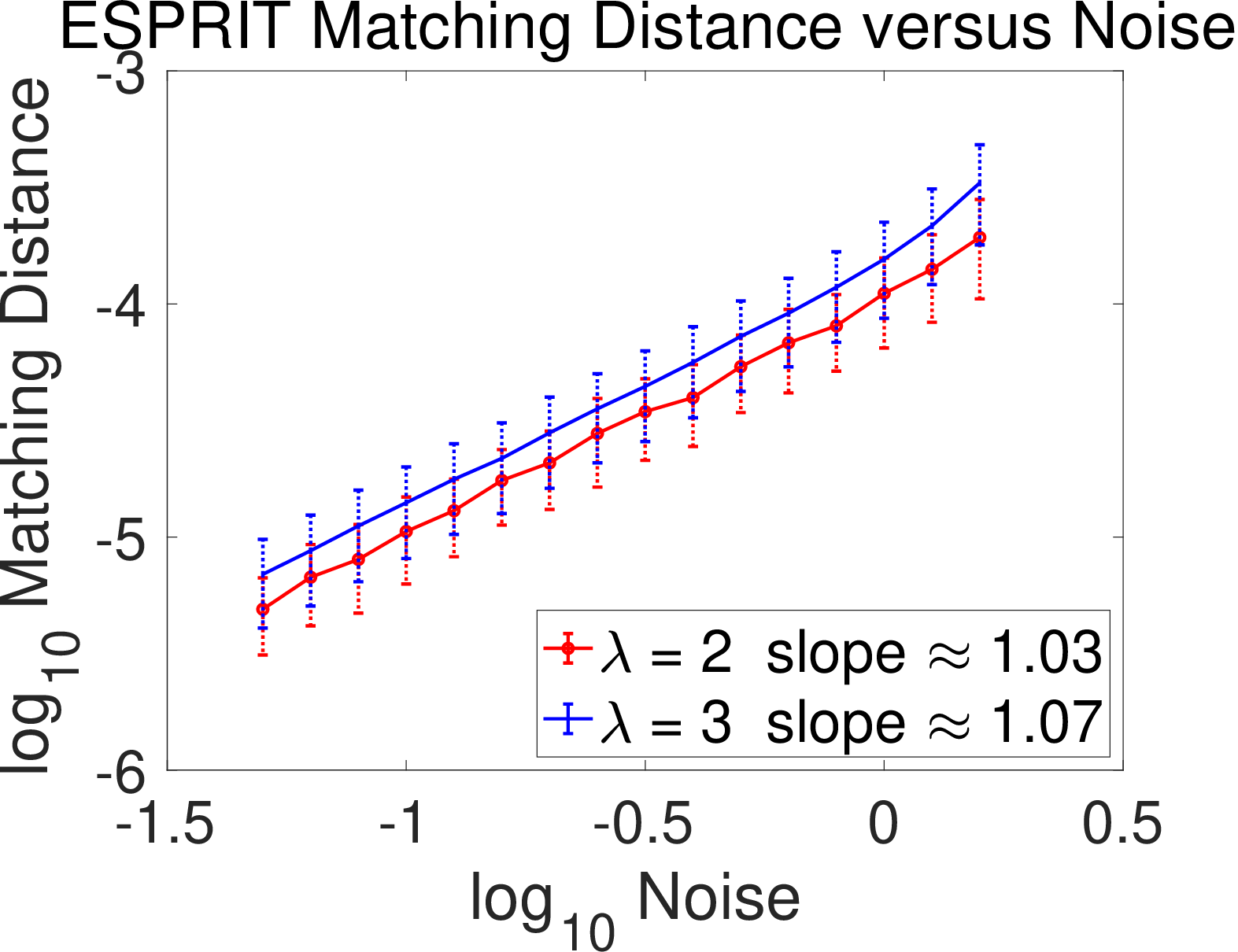}
}
\caption{(a) Noise-space correlation (NSC) function perturbation in MUSIC versus noise $\nu$ in $\log_{10}$ scale for $\lambda = 2$ and $\lambda = 3$; (b) Support error in ESPRIT versus noise $\nu$ in $\log_{10}$ scale for $\lambda = 2$ and $\lambda = 3$.  }
\label{FigErrorversusNoise}
\end{figure}

\subsection{Error versus the number of snapshots $L$}
We next test the NSC perturbation in MUSIC and  the support error in ESPRIT versus the number of snapshots $L$, when $\nu=0.1$ and $\SRF =5$. 
Figure \ref{FigErrorversusL} (a) shows the NSC perturbation in MUSIC as a function of $L$ in $\log_{10}$ scale. Figure \ref{FigErrorversusL} (b) shows the support error in ESPRIT as a function of $L$ in $\log_{10}$ scale. 
 The slope for all curves is close to $-0.5$, which is consistent with \eqref{expmusic1} and \eqref{expesprit1}.

\begin{figure}
\centering
\subfigure[Sensitivity of MUSIC to $L$]{
\includegraphics[width=0.23\textwidth]{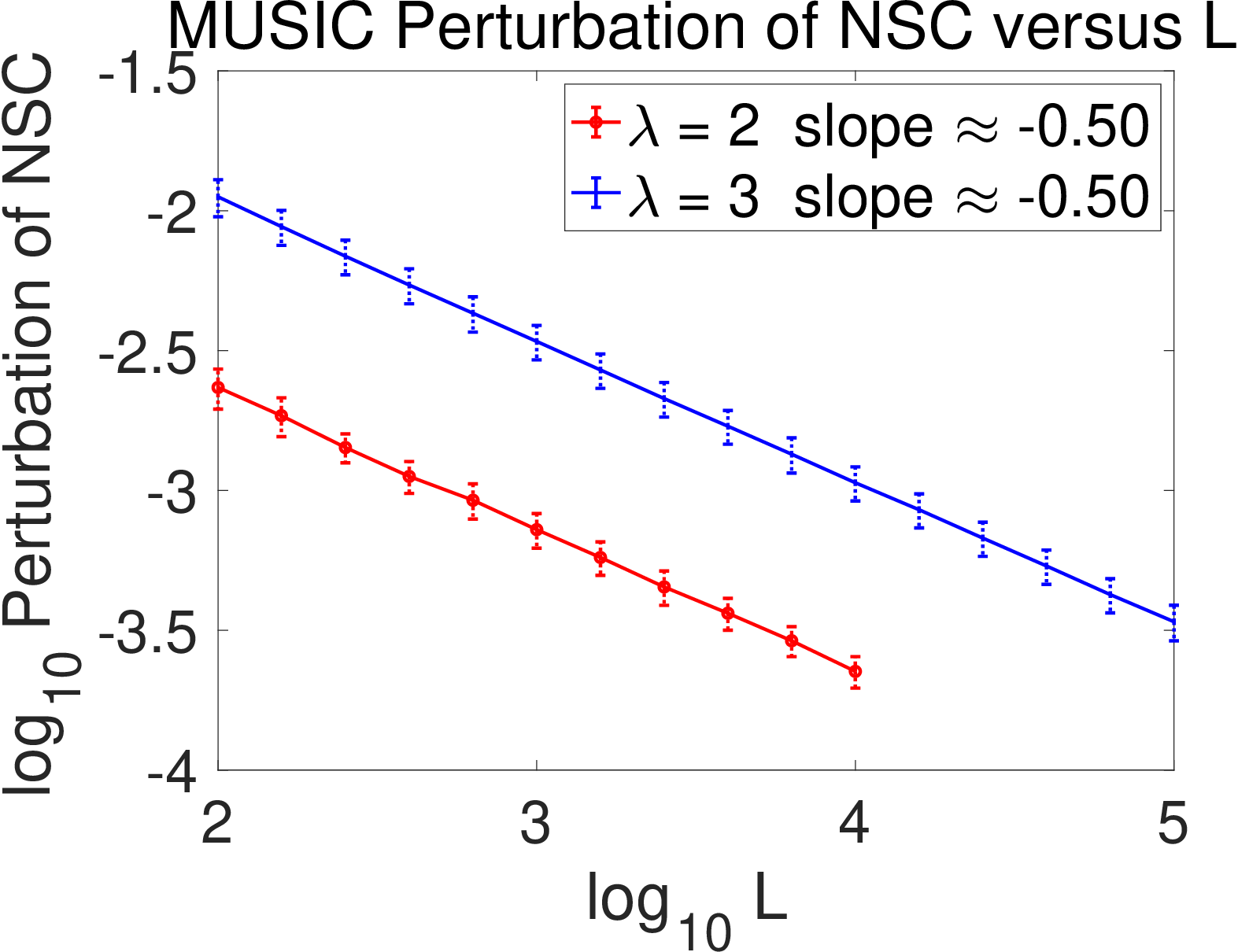}}
\subfigure[Sensitivity of ESPRIT to $L$]{\includegraphics[width=0.22\textwidth]{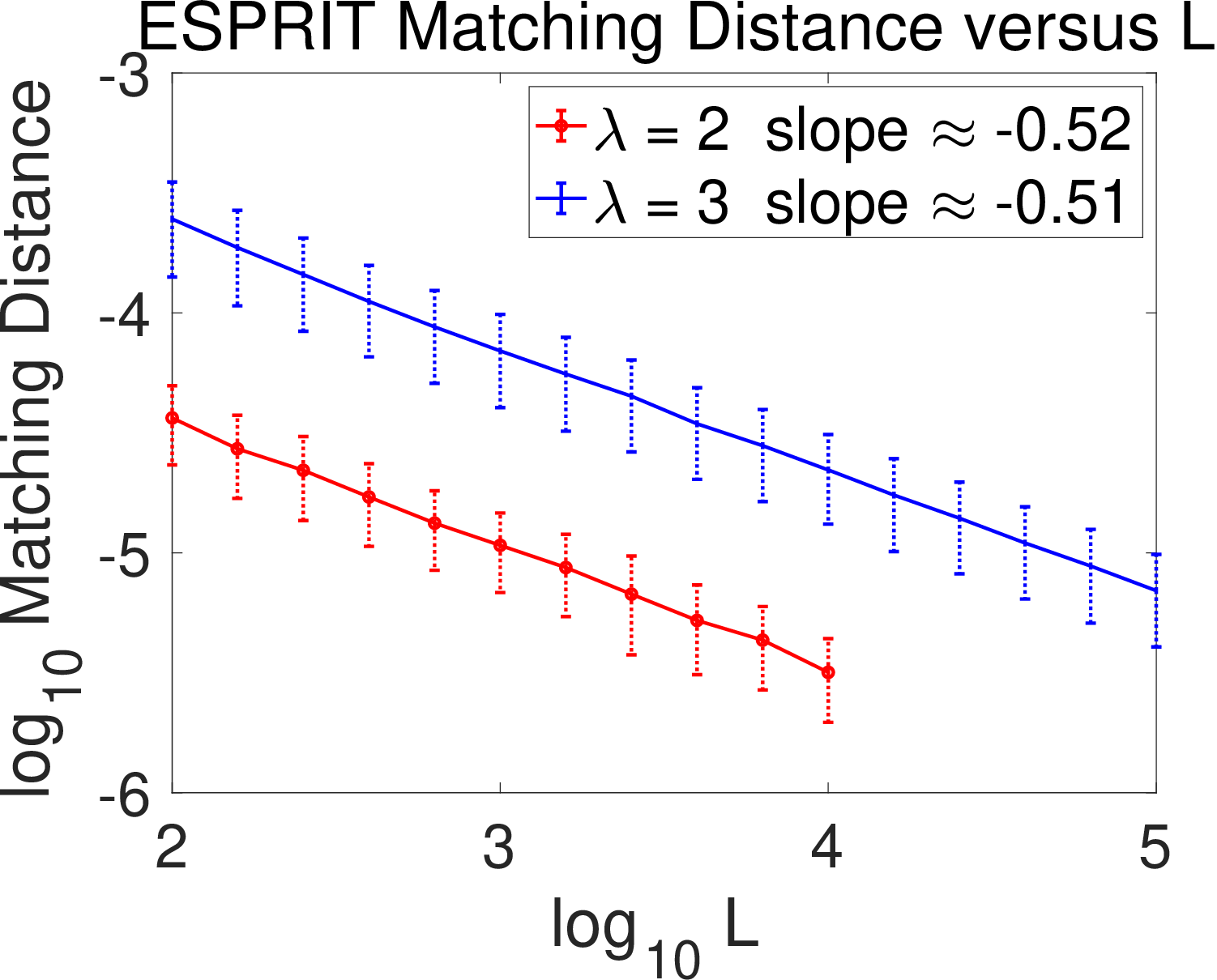}}
\caption{(a) NSC perturbation in MUSIC versus $L$ in $\log_{10}$ scale for $\lambda = 2$ and $\lambda = 3$; (b) Support error in ESPRIT versus $L$ in $\log_{10}$ scale for $\lambda = 2$ and $\lambda = 3$. }
\label{FigErrorversusL}
\end{figure}

\subsection{ Error versus $\sigma_S(\Phi)$ and super-resolution factor (SRF)}
Finally we test the NSC perturbation in MUSIC and  the support error in ESPRIT versus $\sigma_S(\Phi)$ and SRF respectively, when $L=1000$ and $\nu =0.1$. While $R$ and $\lambda$ are fixed, we vary $\Delta$ such that $\sigma_S(\Phi)$ and ${\rm SRF}$ vary.

Figure \ref{FigErrorversusSigmaSPhi} (a) shows the NSC perturbation in MUSIC as a function of $\sigma_S(\Phi)$ in $\log_{10}$ scale. Figure \ref{FigErrorversusSigmaSPhi} (b) shows the support error in ESPRIT as a function of $\sigma_S(\Phi)$ in $\log_{10}$ scale. 
The slope for all curves is around $-1$, which is consistent with \eqref{expmusic1} and \eqref{expesprit1}.

Figure \ref{FigErrorversusSRF} (a) shows the NSC perturbation in MUSIC as a function of ${\rm SRF}$ in $\log_{10}$ scale. Figure \ref{FigErrorversusSRF} (b) shows the support error in ESPRIT as a function of ${\rm SRF}$ in $\log_{10}$ scale. 
The slope for all curves is around $\lambda-1$, which is consistent with \eqref{expmusic1} and \eqref{expesprit1}.

\begin{figure}[h]
\centering
\subfigure[Sensitivity of MUSIC to $\sigma_S(\Phi)$]{
\includegraphics[width=0.225\textwidth]{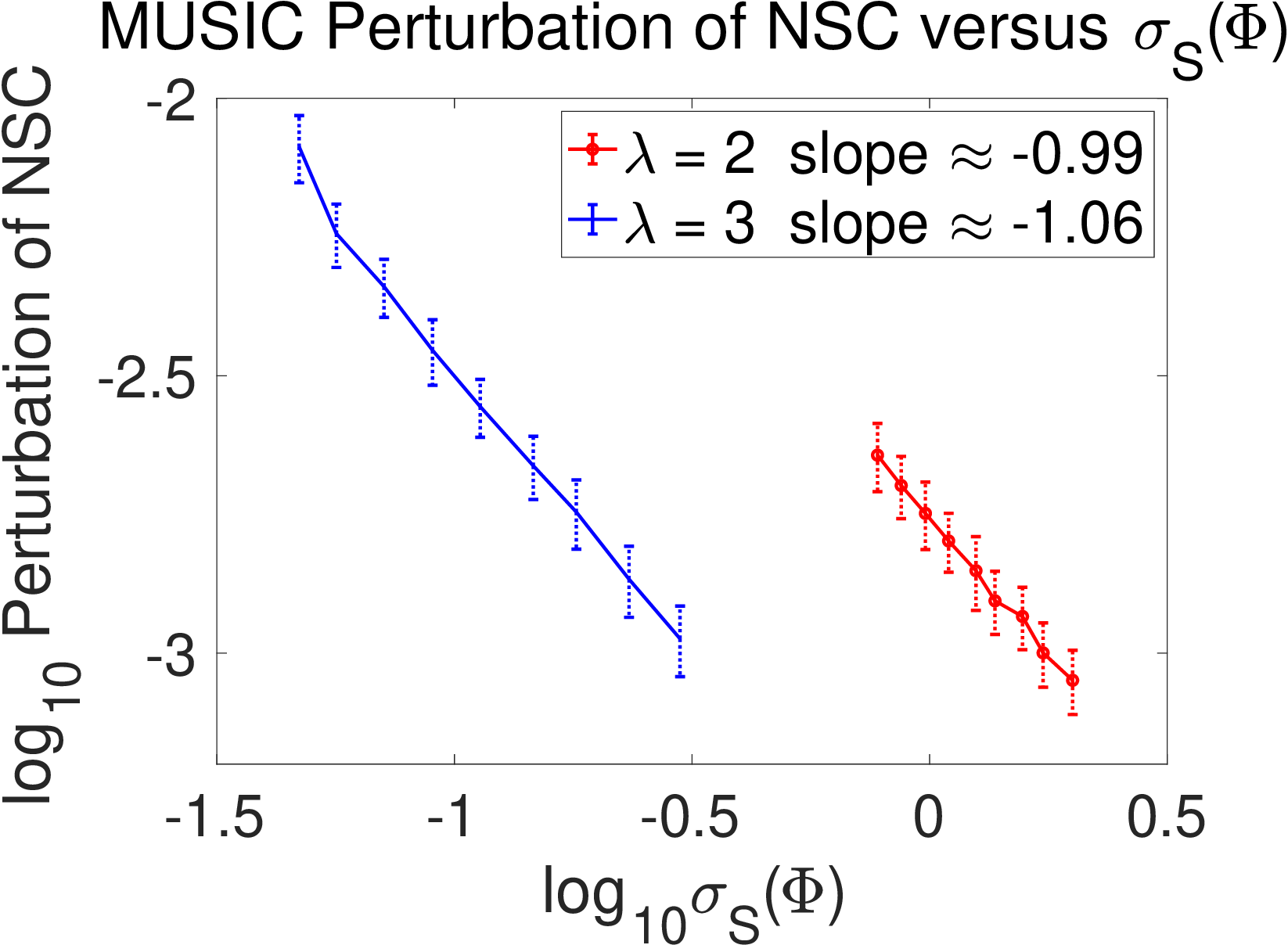}
}
\subfigure[Sensitivity of ESPRIT to $\sigma_S(\Phi)$]{
\includegraphics[width=0.225\textwidth]{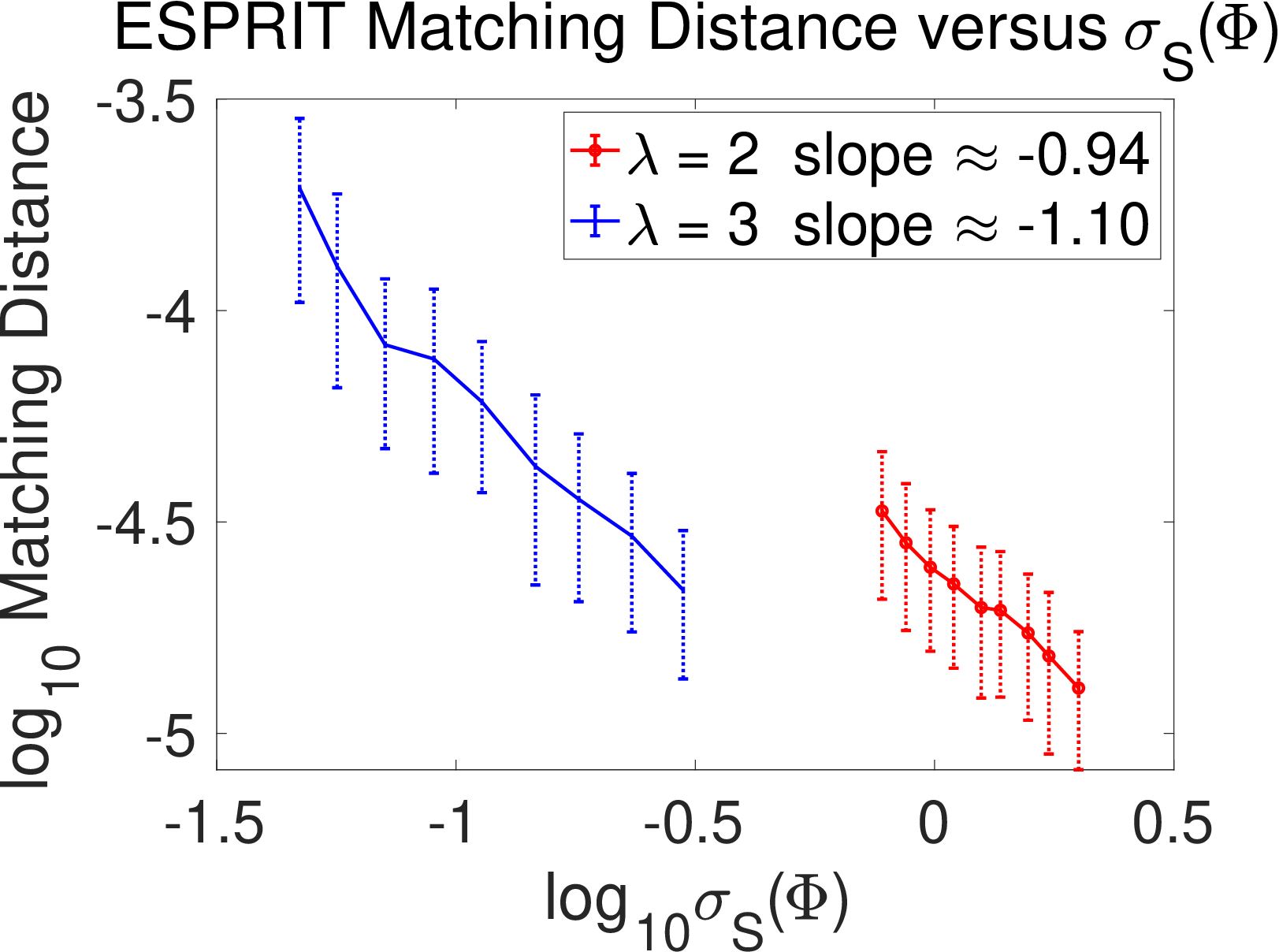}}
\caption{(a) NSC perturbation in MUSIC versus $\sigma_S(\Phi)$ in $\log_{10}$ scale for $\lambda = 2$ and $\lambda = 3$; (b) Support error in ESPRIT versus $\sigma_S(\Phi)$ in $\log_{10}$ scale for $\lambda = 2$ and $\lambda = 3$. 
}
\label{FigErrorversusSigmaSPhi}
\end{figure}

\begin{figure}[h]
\centering
\subfigure[Sensitivity of MUSIC to ${\rm SRF}$]{
\includegraphics[width=0.22\textwidth]{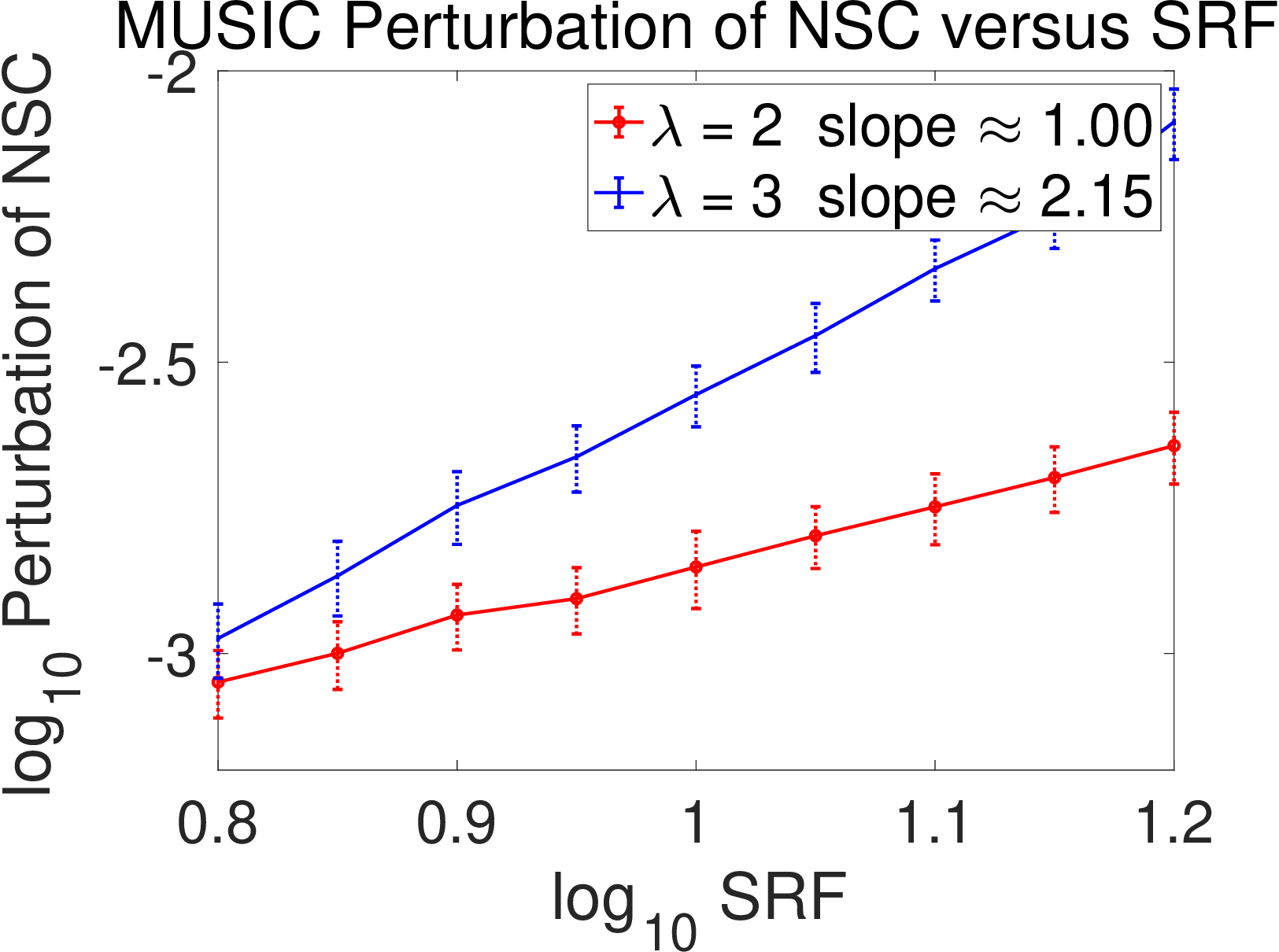}
}
\subfigure[Sensitivity of ESPRIT to ${\rm SRF}$]{
\includegraphics[width=0.22\textwidth]{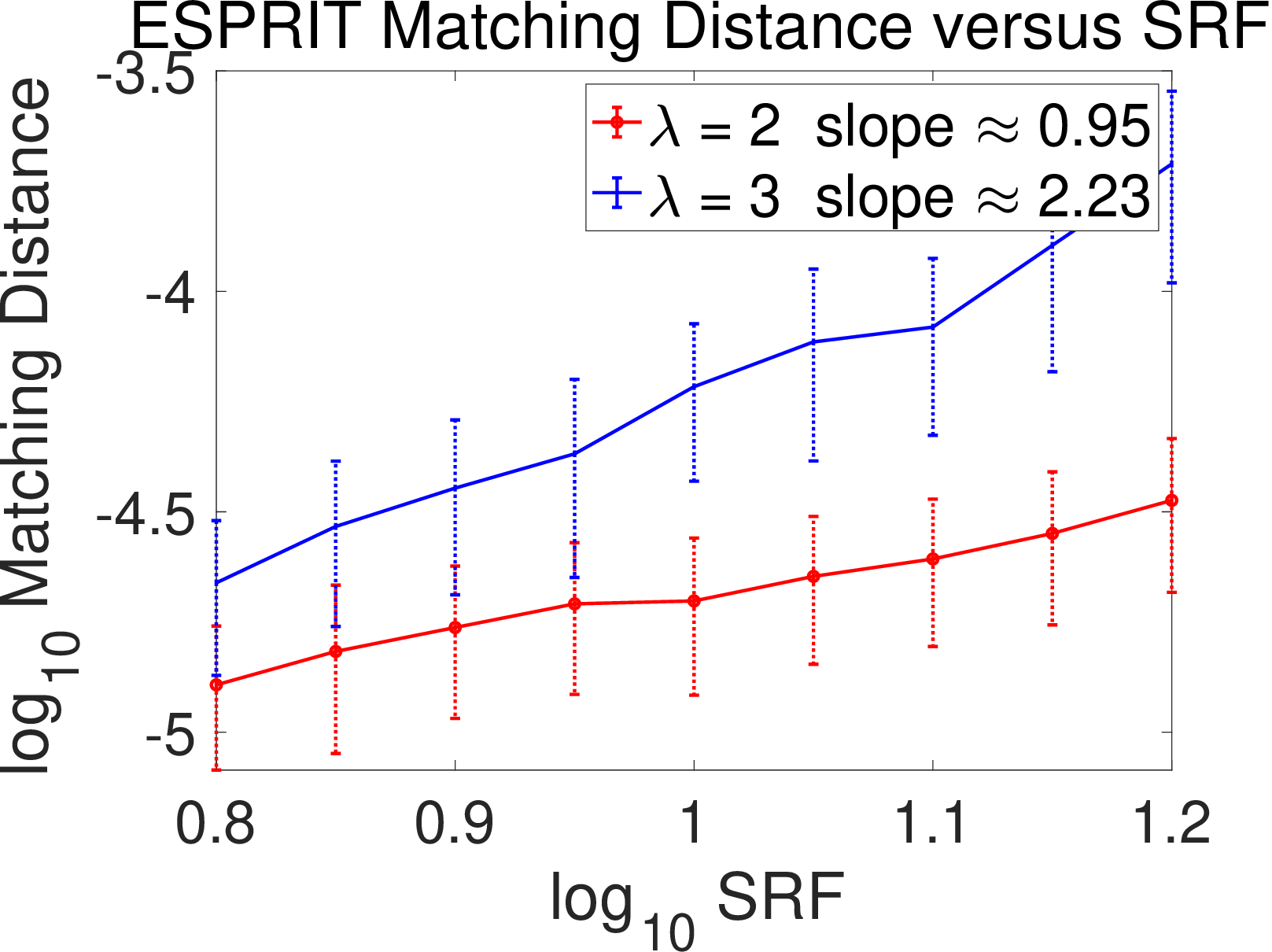}}
\caption{(a) NSC perturbation in MUSIC versus ${\rm SRF}$ in $\log_{10}$ scale for $\lambda = 2$ and $\lambda = 3$; (b) Support error in ESPRIT versus ${\rm SRF}$ in $\log_{10}$ scale for $\lambda = 2$ and $\lambda = 3$. 
}
\label{FigErrorversusSRF}
\end{figure}

{

\subsection{Phase transition figures for ESPRIT}

Finally we show the phase transition figures of ESPRIT in Figure \ref{FigPhasetransition}.  Figure \ref{FigPhasetransition} shows the average $\log_2[{\rm md}(\hat\Omega,\Omega)/\Delta]$ with respect to: (a,b) $\log_{10} L $ and $\log_{10} \nu $ when $\lambda = 2$ and $\lambda =3$; (c,d) $\log_{10} L $ and $\log_{10} \SRF $ when $\lambda = 2$ and $\lambda =3$; (e,f) $\log_{10} \SRF $ and $\log_{10} \nu $ when $\lambda = 2$ and $\lambda =3$.
We observe a clear phase transition phenomenon, and a blue phase transition curve is extracted in each figure. According to \eqref{expesprit1}, we expect the phase transition curve follows
\begin{equation}(\lambda-1)\log_{10} \SRF + \log \nu \sim 1/2 \log L.
\label{expphasetransitionscaling}
\end{equation}
After numerically fitting the phase transition curves, our empirical slope is consistent with our theoretical one in \eqref{expphasetransitionscaling}.
}

\begin{figure}[h]
\centering
\subfigure[Noise-$L$ Phase transition when $\lambda=2$, slope $\approx 0.50$]{
\includegraphics[width=0.22\textwidth]{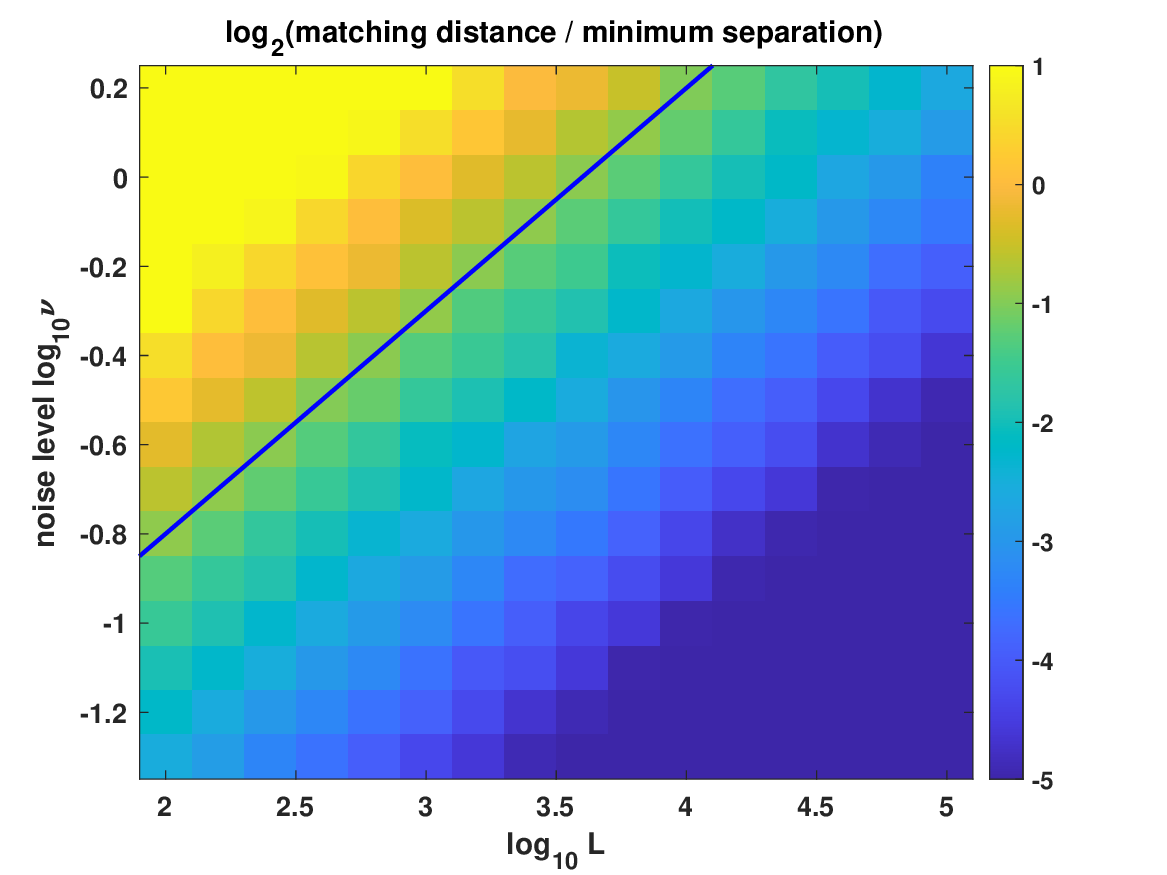}
}
\subfigure[Noise-$L$ Phase transition when $\lambda=3$, slope $\approx 0.49$]{
\includegraphics[width=0.22\textwidth]{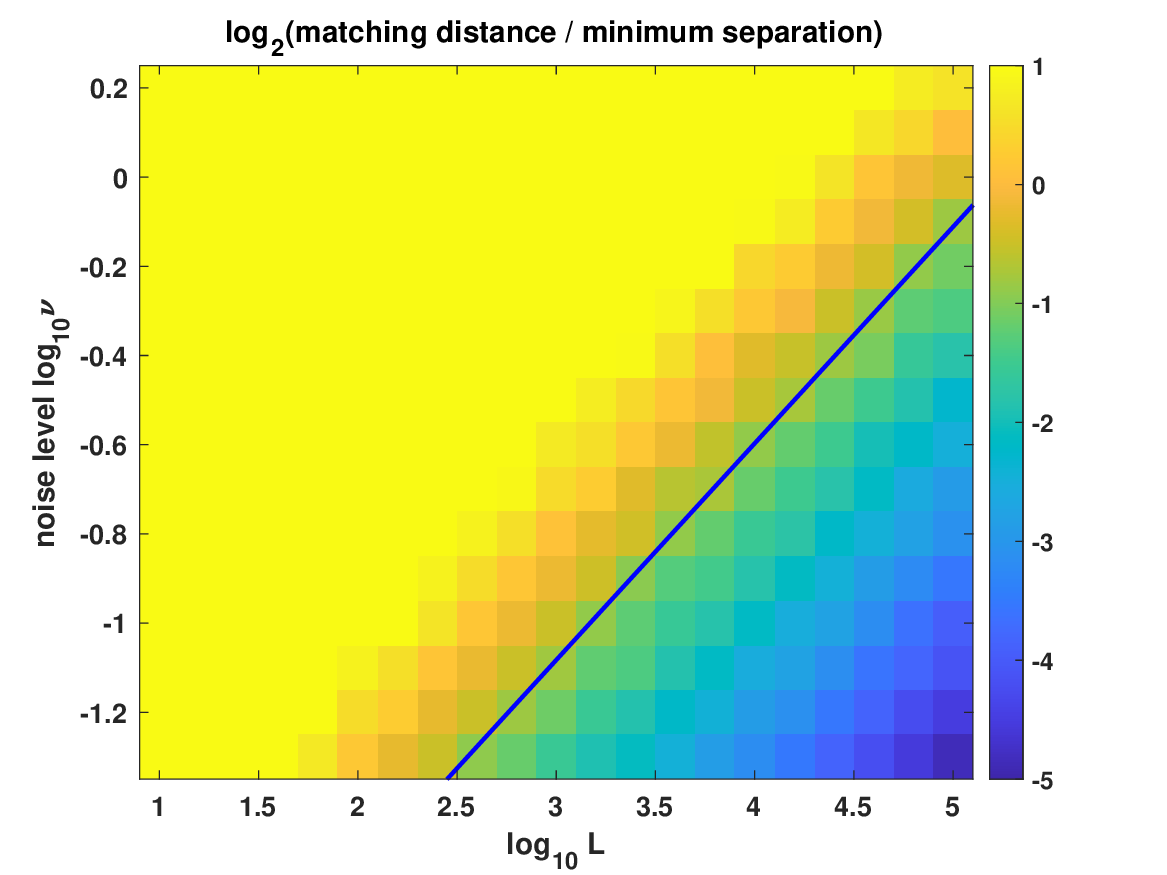}}
\subfigure[SRF-$L$ Phase transition when $\lambda=2$, slope $\approx 0.50$]{
\includegraphics[width=0.22\textwidth]{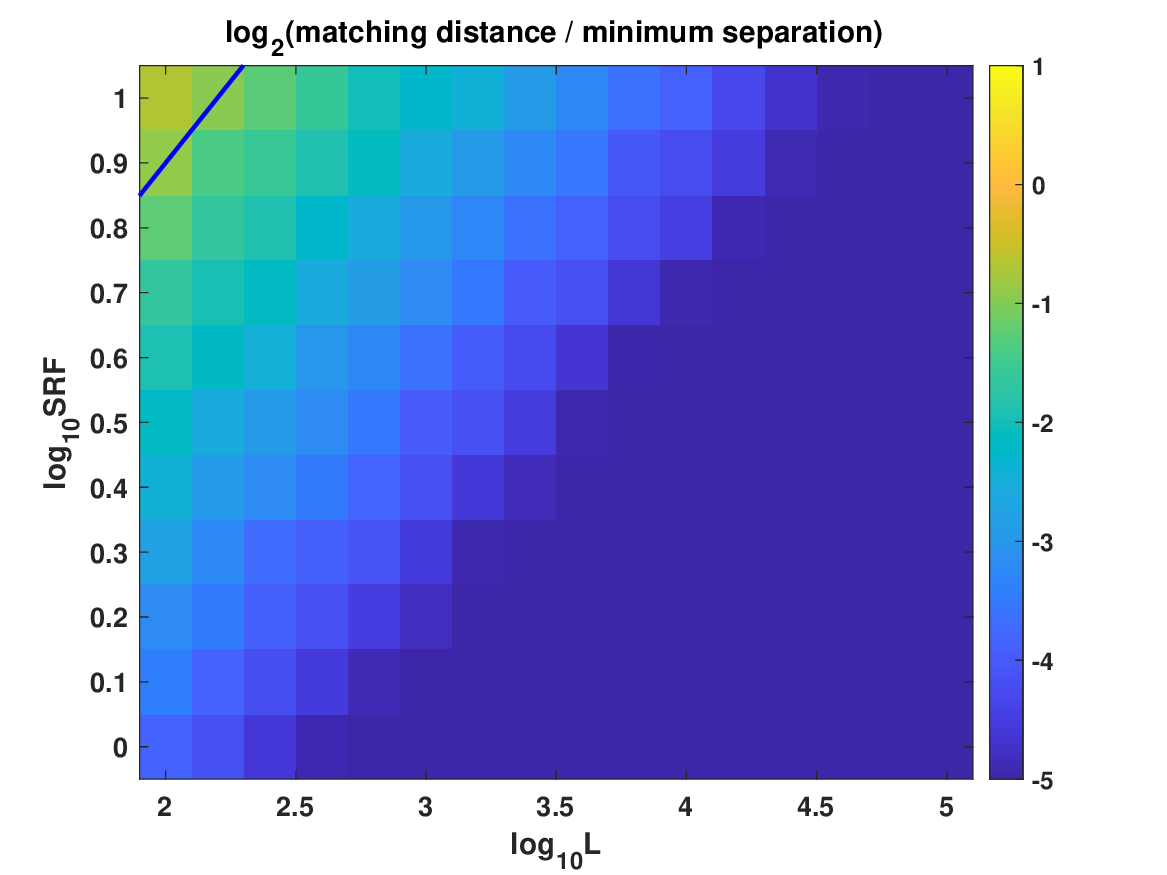}
}
\subfigure[SRF-$L$ Phase transition when $\lambda=3$, slope $\approx 0.25$]{
\includegraphics[width=0.22\textwidth]{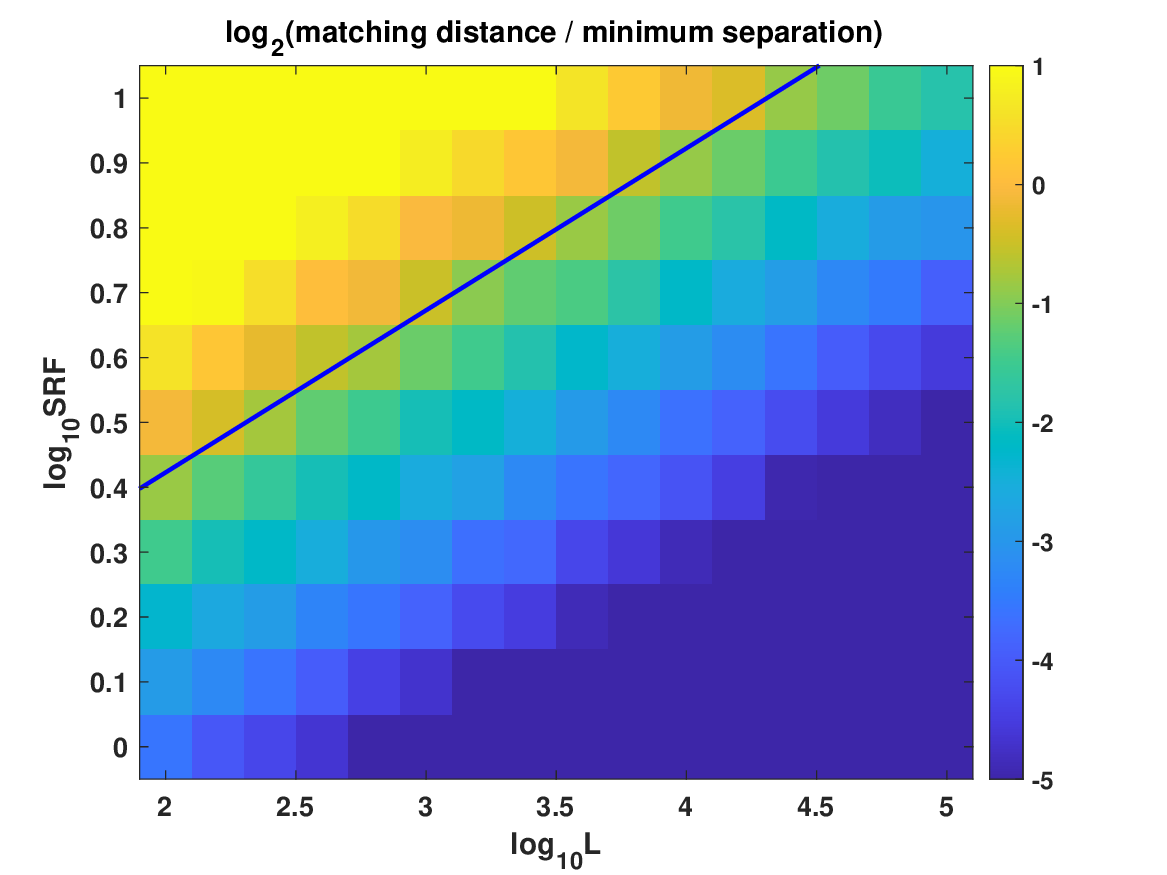}}
\subfigure[Noise-SRF Phase transition when $\lambda=2$, slope $\approx -0.98$]{
\includegraphics[width=0.22\textwidth]{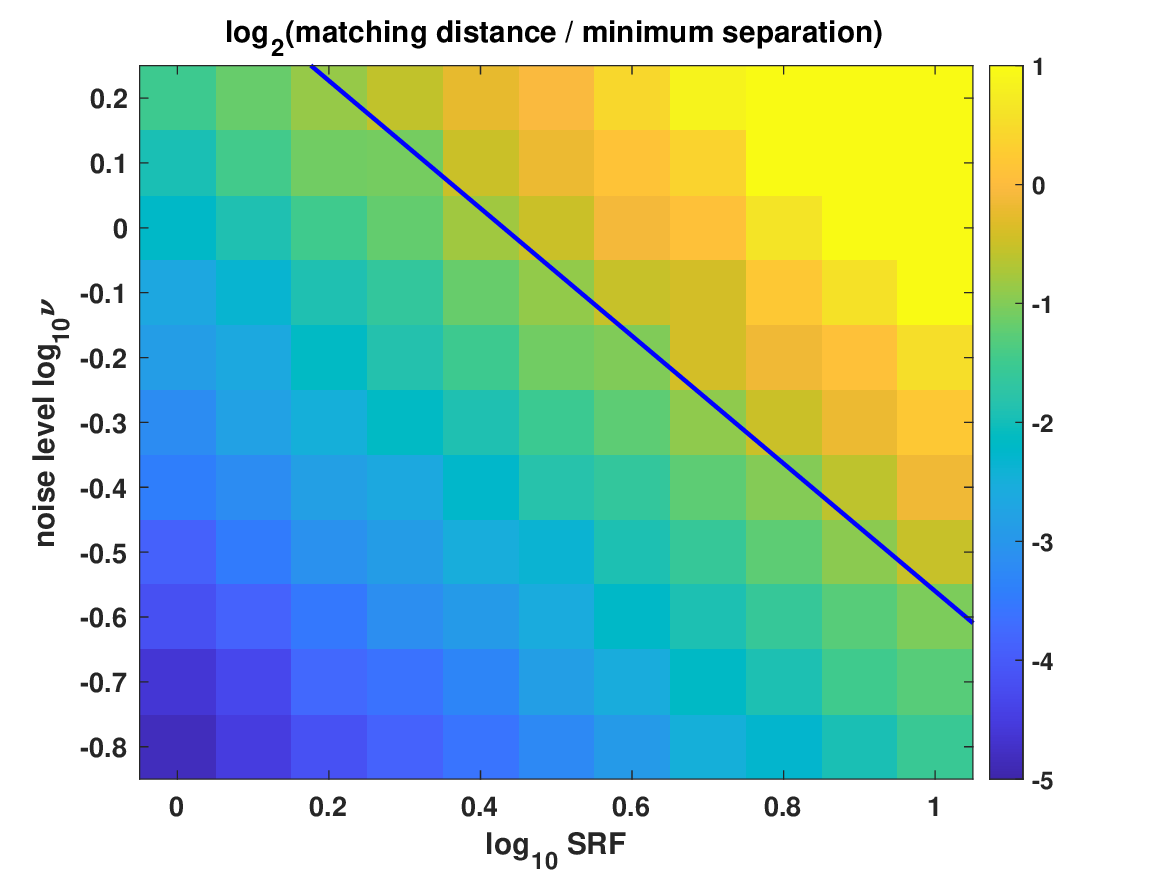}
}
\subfigure[Noise-SRF Phase transition when $\lambda=3$, slope $\approx -1.93$]{
\includegraphics[width=0.22\textwidth]{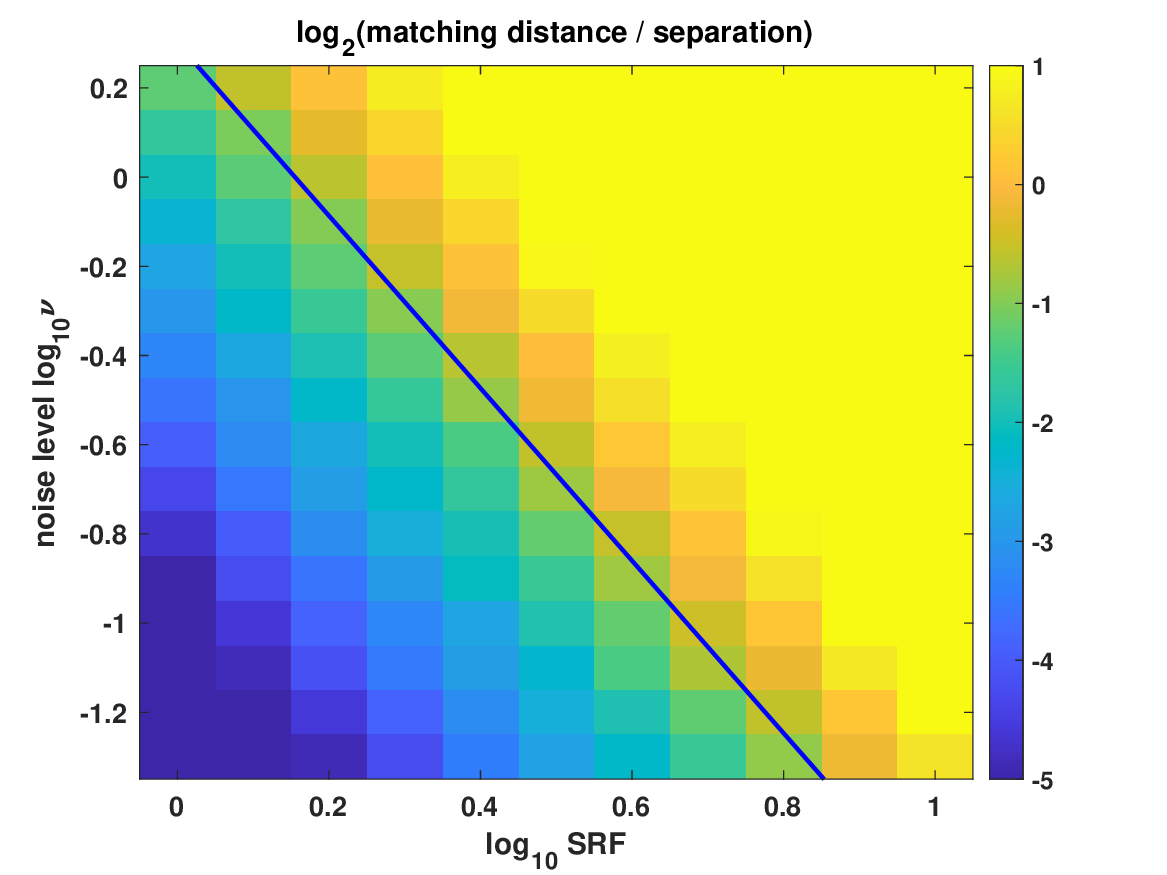}}
\caption{Color plot for the average $\log_2[{\rm md}(\hat\Omega,\Omega)/\Delta]$ for ESPRIT with respect to: (a)-(b)  $\nu$ vs $L$, (c)-(d) SRF vs $L$, (e)-(f) $\nu$ vs SRF when $\lambda =2$ (left column) and $\lambda=3$ (right column). The slope in (a,b) is about 0.5; the slope in (c,d) is about $0.5/(\lambda-1)$; the slope in (e,f) is about $-(\lambda-1)$, which is consistent with \eqref{expphasetransitionscaling}.}


\label{FigPhasetransition}
\end{figure}

\section{Multiple versus single snapshot}
\label{sec:singlevsmultiple}

This paper focuses on the multi-snapshot spectral estimation problem where Fourier measurements of $\x(t)$ and $\Omega$ are collected at $L$ different time instances. Since information is collected at different times, it is natural to impose statistical assumptions, which is what we and many other papers have done. 

MUSIC and ESPRIT are still applicable for the single-snapshot spectral estimation problem (the $L=1$ case), with proper modification. Instead of forming the empirical covariance matrix $\hat Y$ in the first step of Algorithm \ref{algmusicesprit}, one forms the Hankel matrix of a single-snapshot \cite{liao2016music,li2020super}. From a theoretical perspective, the single-snapshot problem is usually treated from a deterministic point of view. 

Single-snapshot MUSIC and ESPRIT enjoy theoretically provable robustness properties in the super-resolution regime, as shown in \cite{li2021stable,li2020super}. The main results there show that for MUSIC and ESPRIT,
\begin{equation}
\text{single-snapshot error}
\lesssim \frac{\nu}{\sigma_S^2(\Phi)}
\lesssim \SRF^{2(\lambda-1)}\nu, 
\end{equation}
where the implicit constants do not depend on the noise standard deviation $\nu$, the support set $\Omega$, and $\SRF$. This upper bound matches, up to implicit constants, the minimax lower bounds \cite{li2021stable,batenkov2020conditioning}.

A na\"ive guess is that the error incurred by multi-snapshot MUSIC and ESPRIT would be $1/\sqrt L$ times that of the single-snapshot setting. Perhaps surprisingly, our results show that this na\"ive guess is incorrect. Indeed, the main results of this paper for MUSIC (Theorems \ref{thmmusicstability} and \ref{thm:musicsuper1}) and ESPRIT (Theorems \ref{thmespritstability} and \ref{thm:espritsuper1}) show that on average,
\begin{equation}
\text{multi-snapshot error}
\lesssim \frac{\nu}{\sigma_S(\Phi)\sqrt L }
\lesssim \frac{\SRF^{\lambda-1}\nu}{\sqrt L}, 
\end{equation}
where again, the implicit constants do not depend on the noise standard deviation $\nu$, the support set $\Omega$, $\SRF$, and $L$. 

Comparing the upper bounds for single-snapshot and multi-snapshot errors,  we see that not only does collecting multiple snapshots provide us with the extra $1/\sqrt L$ cancellation, but it also enjoys an additional square root in the $1/\sigma_S(\Phi)$ and $\SRF$ terms. In other words, the condition numbers of multi-snapshot MUSIC and ESPRIT are significantly better than their single-snapshot counterparts. 

The improvement can be explained by the strictly positive-definite assumption on $X$ made in Assumption \ref{assump:main}. To see why, consider the deterministic setting where $\x(t)$ is fixed throughout time, so $\x(t_1)=\cdots=\x(t_L)$. Then the collected snapshots are of the form,
\[
\y(t_\ell)=\Phi(\Omega) \x(t_1) + \e(t_\ell). 
\]
Hence, collecting multiple snapshots does not provide us with any new information about the range of $\Phi$ since $\x(t)$ is constant, and the best that can be done is to average all $L$ such snapshots,
\[
\frac{1}{L} \sum_{\ell=1}^L \y(t_\ell)
=\Phi(\Omega) \x(t_1) + \frac{1}{L} \sum_{\ell=1}^L \e(t_\ell),
\]
in order to reduce the variance of the noise by a factor of $L$ and feed this averaged signal into single-snapshot MUSIC and ESPRIT. In this highly degenerate situation, the na\"ive guess for the stability of multi-snapshot MUSIC and ESPRIT is correct.

We also explain how the strictly positive-definite assumption on $X$ changes the inherent geometry of the problem. For the sake of this discussion, assume that $\e(t)=0$ for all $t>0$. Since $\Omega$ does not change in time, $\y(t_\ell)\in R(\Phi)$ for each $1\leq \ell\leq L$. If $X$ is strictly positive definite, then $\{\x(t_\ell)\}_{\ell=1}^L$ spans a $S$-dimensional subspace, so the span of $\{\y(t_\ell)\}_{\ell=1}^L$ is precisely $R(\Phi)$. On the other hand, if $X$ does not have full rank, then $\{\y(t_\ell)\}_{\ell=1}^L$ only contains partial information about the range of $\Phi$. In which case, multi-snapshot MUSIC and ESPRIT would fail and we would need to use their single-snapshot versions. 

\vspace{-.5em}

\section{Conclusion and discussion}

This paper analyzes the performance of MUSIC and ESPRIT, and identifies some key quantities that control their stability: number of snapshots, noise variance, amplitude covariance matrix, and the smallest singular value of Fourier matrices (which can then be further estimated in terms of SRF). Our results accurately quantify intuitive phenomenon. For instance, if many point sources are closely spaced then we expect recovery to be sensitive to noise -- this is precisely captured by our results since in this case, $\sigma_{S}(\Phi)$ would be small, and $\SRF^\lambda$ would be large. Likewise if the amplitudes of $\x(t)$ vary slowly over time, then we expect inversion to be challenging -- this is again captured by our results since in that case, $\lambda_S(X)$ would be small and lead to instability. If $\lambda_S(X)=0$, then multi-snapshot ESPRIT and MUSIC would fail, but we can instead use their single snapshot versions.

The upper bound for ESPRIT matches a Cram\'er-Rao lower bound in terms of SRF and max clump cardinality. This proves that no unbiased algorithm can be fundamentally better than ESPRIT for this imaging problem in the large SRF regime -- any improvements can only be made in terms of other model parameters such as $S$, $M$, and $\lambda$. These are valuable results for practitioners, as they provide an accurate and solid theoretical footing for discriminating between favorable and unfavorable imaging situations. While this paper only focuses on a one dimensional scenario with plain MUSIC and ESPRIT, we believe the techniques here are also relevant for the analysis of their multi-dimensional counterparts and variations.

\section{Acknowledgement}

All authors would like to thank Zai Yang for finding an error in the first version of this manuscript, and bringing \cite{cai2018rate} to our attention.

\section{Proofs}
\label{sec:proofs}

\subsection{Proof of Theorem \ref{thm:Uperturb}} 

\label{secproofthm:Uperturb}
    
	{ 
    \begin{proof}
	Recall that $Y_L = \Phi X_L + E_L$. To put this in the framework of \cite{cai2018rate}, we rescale the problem so that 
	$$
	\frac{\sqrt L}{\nu} Y_L = \frac{\sqrt L}{\nu} \Phi X_L + \frac{\sqrt L}{\nu} E_L, 
	$$
	so that now $E_L \sqrt L/\nu$ is a $M\times L$ matrix with i.i.d. $\mathcal{G}_{1,\tau}$ entries. This scaling does not change the singular spaces of $Y_L$ and $\Phi X_L$. Applying \cite[Theorem 3]{cai2018rate}, which extends to complex matrices without issue, there is a $C_\tau>0$ only depending on $\tau$ such that  
	\begin{align}
		\label{eq:Edist}
		\E \big( \dist(\hat U,U)^2 \big)
		\leq \frac{C_\tau M(\nu^2\sigma_S^2(\Phi X_L)+\nu^4)}{L\sigma_S^4(\Phi X_L)}. 
	\end{align}
	Since $X_L$ is full rank and assumption \eqref{eq:nu} holds, we have 
	$$
	\sigma_S(\Phi X_L)
	\geq \sigma_S(\Phi)\sigma_S(X_L)
	=\sigma_S(\Phi) \sqrt{\lambda_S(X)}
	\geq \nu. 
	$$
	Using this inequality in \eqref{eq:Edist} completes the proof. 
	\end{proof}
}

 \subsection{Proof of Theorem \ref{thmmusicstability}}
 \label{proofthmmusicstability}
    
    \begin{proof}
	We start with a deterministic inequality that links the perturbation of $\calR$ to the distance between $U$ and $\hat U$. For any $\omega\in[0,1)$, we have
	\begin{align*}
		|\calR(\omega)-\hat\calR(\omega)|
		&\leq \|P_{U} - P_{\hat U}\| 
		=\dist(U,\hat U), 
	\end{align*}
	where the last identity follows from \eqref{eqdistuhatuproj}. This shows that 
	$$
	\E \big(\|\calR -\hat\calR\|_\infty^2 \big)
	\leq \E \big( \dist(U,\hat U)^2\big) . 
	$$
	Using Theorem \ref{thm:Uperturb} completes the proof. 
	\end{proof}
	
\subsection{Preparation and proof of Theorem \ref{thmespritstability} part (a)}
	
	\label{proofthmespritstabilitya}
	
	The stability of ESPRIT relies on several deterministic perturbation results for the perturbation of $\widehat\Psi$ from $\Psi$, that can be estimated by matrix perturbation theory. 
	

	\begin{lemma}
	\label{lemapsiper1}
	Suppose $M\ge S+1$ and  $\dist(\hat U,U)\leq  2^{-(S+2)}$. Then
	\begin{equation}
	\label{lemapsiper1eq0}
	\|\widehat\Psi - \Psi\|_2 
	\le {2}^{2S+4}   \dist(\hat U,U).
	\end{equation}
\end{lemma}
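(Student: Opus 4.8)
The plan is to track how the perturbation $\hat U = U + (\hat U - U)$ (after choosing an optimal unitary alignment of the bases) propagates through the three operations defining $\hat\Psi$: taking the submatrices $\hat U_0, \hat U_1$ of the first and last $M-1$ rows, forming the pseudoinverse $\hat U_0^\dagger$, and multiplying by $\hat U_1$. First I would fix the orthonormal basis for the empirical signal space by replacing $\hat U$ with $\hat U Z$ for the unitary $Z$ that minimizes $\|\hat U Z - U\|_2$; this is legitimate because $\hat\Psi = \hat U_0^\dagger \hat U_1$ is invariant under right multiplication of $\hat U$ by an invertible $S\times S$ matrix (it only depends on $R(\hat U)$), and by the standard CS-decomposition fact $\|\hat U Z - U\|_2 \le \sqrt 2\,\dist(\hat U,U)$ (or one can work with $\|\hat U Z - U\|_2 \le \dist(\hat U,U)$-type bounds up to a constant absorbed into the final power of $2$). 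Write $E := \hat U Z - U$, so $\|E\|_2 \lesssim \dist(\hat U,U) =: \delta$, and note $E_0 := \hat U_0 Z - U_0$, $E_1 := \hat U_1 Z - U_1$ both have spectral norm $\le \|E\|_2$ since deleting a row cannot increase the norm.

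The core estimate is the perturbation of the pseudoinverse. I would use the standard bound: if $\|U_0^\dagger\|_2 \|E_0\|_2 < 1$, then
\[
\|\hat U_0^\dagger - U_0^\dagger\|_2 \le \frac{\sqrt 2\,\|U_0^\dagger\|_2^2\,\|E_0\|_2}{1 - \|U_0^\dagger\|_2\|E_0\|_2}.
\]
Here $\|U_0^\dagger\|_2 = 1/\sigma_S(U_0)$, so I need a lower bound on $\sigma_S(U_0)$, i.e., that deleting the last row of the $M\times S$ matrix $U$ with orthonormal columns does not destroy too much of the smallest singular value. Since $U = \Phi Q$ and the last row of $U$ is $\phi_{\text{last}}^* Q$ with $\|\phi_{\text{last}}\| = 1$ (it is a single unimodular-entry row scaled... actually the row is $(e^{-2\pi i(M-1)\omega_j})_j$, of norm $\sqrt S$), one gets $\sigma_S^2(U_0) \ge \sigma_S^2(U) - \|u_{\text{last}}\|^2 = 1 - \|u_{\text{last}}\|^2$ by the interlacing/rank-one argument $U_0^*U_0 = U^*U - u_{\text{last}} u_{\text{last}}^* = I - u_{\text{last}}u_{\text{last}}^*$. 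The quantity $\|u_{\text{last}}\|^2$ is the squared norm of one row of $U$; this need not be small in general, so I would instead argue more carefully — the cleanest route is to bound $\sigma_S(U_0)$ from below using $\sigma_S(U_0) = \sigma_S(\Phi_{M-1}Q) \ge \sigma_S(\Phi_{M-1})\sigma_S(Q)$ together with $\sigma_1(Q) = 1/\sigma_S(\Phi_{?})$... Actually the simplest self-contained bound: since $U_0^* U_0 = I - u_{\text{last}} u_{\text{last}}^*$ with $u_{\text{last}}$ a row of $U$, and $\|u_{\text{last}}\|\le 1$ always (rows of a matrix with orthonormal columns have norm at most... no). The honest accounting gives $\sigma_S(U_0)^2 = 1 - \|u_{\text{last}}\|^2$, and one needs $\|u_{\text{last}}\|$ bounded away from $1$; I would get this from $M \ge S+1$ via an averaging argument ($\sum_k \|u_k\|^2 = S$ over $M$ rows, but that only gives one small row, not that the last one is small) — so the robust approach is to use both $U_0$ \emph{and} the relation to $\Phi$, or alternatively to absorb a constant: note $\hat\Psi$ has eigenvalues the $\hat\omega_j$-frequencies, and the relevant quantity can be bounded by $2$ using $\|U_0^\dagger\|_2 \le \sqrt 2$ which follows because at most one of "first $M-1$ rows" and "last $M-1$ rows" can be rank-deficient and $M-1\ge S$. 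This is where I expect the main obstacle: pinning down the clean constant $2^{2S+4}$ requires the explicit bound $\|U_0^\dagger\|_2 \le 2^{?}$, which presumably comes from an earlier lemma or from the structure $U_0 = \Phi_{M-1}Q$ with a controlled $Q$; I would look for that dependence to produce the $16^{S}$-type factors via a telescoping/Neumann-series estimate.

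Finally I would assemble: writing $\hat\Psi - \Psi = (\hat U_0^\dagger - U_0^\dagger)\hat U_1 + U_0^\dagger(\hat U_1 - U_1)$ and bounding each term by the triangle inequality and submultiplicativity — $\|\hat U_1\|_2 \le \|U_1\|_2 + \|E_1\|_2 \le 1 + \delta \le 2$, $\|U_0^\dagger\|_2 \le C_S$, $\|\hat U_0^\dagger - U_0^\dagger\|_2 \le C_S' \delta$ under the smallness hypothesis $\delta \le 2^{-(S+2)}$ (which guarantees $\|U_0^\dagger\|_2\|E_0\|_2 < 1/2$ so the Neumann denominator is harmless) — gives $\|\hat\Psi - \Psi\|_2 \le (C_S' \cdot 2 + C_S)\,\delta$. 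Collecting the powers of $2$ from the pseudoinverse bound ($\|U_0^\dagger\|_2^2$ contributes $\sigma_S(U_0)^{-2}$, estimated via $\sigma_S(\Phi_{M-1})$ and the earlier clump/Vandermonde machinery) yields the stated $2^{2S+4}\,\dist(\hat U,U)$. I would double-check whether the paper's convention makes $\|U_0^\dagger\|_2 \le 2^{S+1}$ (hence the square gives $2^{2S+2}$ and the remaining factor $2^2$ comes from $\|\hat U_1\|_2\le 2$ and the sum of two terms), which would make the bookkeeping match the claimed exponent exactly; the smallness assumption $\dist(\hat U,U)\le 2^{-(S+2)}$ is precisely what is needed to keep the pseudoinverse-perturbation denominator $\ge 1/2$ given such a $\|U_0^\dagger\|_2$.
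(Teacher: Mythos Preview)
Your approach is essentially the paper's: align $\hat U$ to $U$ by a unitary (invoking ESPRIT's basis-invariance and the bound $\|\hat U Z - U\|_2 \le \sqrt 2\,\dist(\hat U,U)$), then propagate the perturbation through $\hat U_0^\dagger \hat U_1$. The ingredient you correctly flagged as the obstacle---a lower bound on $\sigma_S(U_0)$ that does \emph{not} follow from the generic rank-one/interlacing argument $U_0^*U_0 = I - u_{\rm last}u_{\rm last}^*$---is exactly what the paper imports from outside: \cite[Lemma~3]{li2020super} gives $\min(\sigma_S(U_0),\sigma_S(U_1)) \ge 2^{-S}$ using the Vandermonde structure $U=\Phi Q$, and the paper then cites the packaged pseudoinverse perturbation bound \cite[Lemma~6]{li2020super}, namely $\|\hat\Psi-\Psi\|_2 \le 7\|\hat U-U\|_2/\sigma_S^2(U_0)$ once $\|\hat U_0-U_0\|_2 \le \tfrac12\sigma_S(U_0)$, which combined with $7\sqrt 2\cdot 2^{2S} \le 2^{2S+4}$ gives the stated constant.
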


\begin{proof}
Before we proceed to the stability analysis, we need to point out an important feature of ESPRIT. ESPRIT is invariant to the specific choice of orthonormal basis for the column span of $\widehat U$. In other words, the eigenvalues of $\hat\Psi$ remain the same if one uses different orthonormal basis for the column space of $\widehat U$ \cite[Section III]{li2020super}. According to \cite[Lemma 2.1.3]{chen2020spectral}, we can properly choose an orthonormal basis for $\hat U$ such that
\begin{equation}
\|\hat U - U\|_2 \le \sqrt{2} \dist(\hat U,U).
\label{lemapsiper1eq1}
\end{equation}
According to \cite[Lemma 3]{li2020super}, $\sigma_S(U_0)$ is lower bounded such that
	\begin{equation}
		\min\big(\sigma_{S}(U_0),\sigma_{S}(U_1)\big)
		\geq 2^{-S} .
		\label{lemapsiper1eq4}
	\end{equation}
Combining \eqref{lemapsiper1eq1},  \eqref{lemapsiper1eq4} and the assumption of Lemma \ref{lemapsiper1} implies
	\[
	\begin{aligned}
	\|\hat U_0-U_0\|_2
	&\leq \|\hat U-U\|_2 
	\leq \sqrt{2} \dist(\hat U,U)
	\leq \frac{1}{2} \sigma_{S}(U_0).
	\end{aligned}	\]
	This enables us to apply \cite[Lemma 6]{li2020super}, and obtain
	\begin{align*}
	\|\hat\Psi - \Psi \|_2  
	&\leq \frac{7\|\hat U - U\|_2}{\sigma_{S}^2(U_0)}
	\le \frac{7\sqrt{2}}{\sigma_{S}^2(U_0)}  \dist(\hat U,U),
	\end{align*}
	which further implies \eqref{lemapsiper1eq0} due to \eqref{lemapsiper1eq4}.
\end{proof}
		
 We next provide a deterministic bound between $\|\hat \Psi -\Psi\|_2$ and $\md(\hat \Omega,\Omega)$, with \cite[Lemma 2]{li2020super}. 
	
	\begin{lemma}
		\label{lemmamatching1}
		On the condition that $M \ge S+1$, the following statements hold:
		\begin{enumerate}[(a)]
			
			\item 
			\underline{Moderate perturbation regime.} We have
			\begin{equation*}
				{\rm md}(\hat\Omega,\Omega) \leq \frac{S^{3/2}\sqrt{M}}{\sigma_{S}(\Phi)}\|\hat\Psi-\Psi\|_2. 
			\end{equation*}
			
			\item 
			\underline{Small perturbation regime.} If additionally,
			\begin{equation*}
				\|\hat\Psi-\Psi\|_2
				\leq \frac{\sigma_{S}^2(\Phi) \Delta}{S^2 M},
			\end{equation*}
			then $
				{\rm md}(\hat\Omega,\Omega) 
				\leq 
				\|\hat\Psi-\Psi\|_2. 
            $
		\end{enumerate}	
		
	\end{lemma}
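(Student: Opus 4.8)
The plan is to reduce both parts to a single eigenvalue perturbation statement about the non-normal matrix $\Psi=Q^{-1}D_\Omega Q$, which is exactly the content of \cite[Lemma 2]{li2020super}; the only work is to verify its hypotheses and to track constants through the similarity transform $Q$. Recall $U=\Phi Q$ with $U^*U=I$, so $Q^*\Phi^*\Phi Q=I$ and hence $\sigma_j(Q)=\sigma_{S+1-j}(\Phi)^{-1}$ for every $j$. In particular $\|Q\|_2=\sigma_S(\Phi)^{-1}$ and $\|Q^{-1}\|_2=\sigma_1(\Phi)\le\sqrt{SM}$ (each column of $\Phi$ has Euclidean norm $\sqrt M$), so the eigenvector condition number satisfies $\kappa(Q)=\|Q\|_2\|Q^{-1}\|_2\le\sqrt{SM}/\sigma_S(\Phi)$. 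Moreover the eigenvalues of $\Psi$ are exactly $\{e^{-2\pi i\omega_j}\}_{j=1}^S$, and for $j\ne k$,
\[
|e^{-2\pi i\omega_j}-e^{-2\pi i\omega_k}|=2\big|\sin(\pi(\omega_j-\omega_k))\big|\ge 4|\omega_j-\omega_k|_\T\ge 4\Delta,
\]
using $|\sin(\pi x)|\ge 2|x|$ for $|x|\le 1/2$; thus the spectrum of $\Psi$ is separated by a quantity controlled by $\Delta$.

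Feeding these two facts into \cite[Lemma 2]{li2020super} gives the claim; for transparency, the mechanism behind that lemma is as follows. Write $E:=\hat\Psi-\Psi$, so the eigenvalues of $\hat\Psi$ equal those of $D_\Omega+QEQ^{-1}$, a near-diagonal perturbation of $D_\Omega$. A Gershgorin estimate places each eigenvalue of $\hat\Psi$ in a disk around some $e^{-2\pi i\omega_j}$ of radius at most $S\,\kappa(Q)\|E\|_2\le S^{3/2}\sqrt M\,\|E\|_2/\sigma_S(\Phi)$; when this radius is below half the eigenvalue gap (controlled by $\Delta$), the disks are pairwise disjoint, and a homotopy along $t\mapsto\Psi+tE$ shows exactly one eigenvalue $\hat\lambda_{\psi(j)}$ of $\hat\Psi$ lies in each disk, defining the matching permutation $\psi$. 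Converting the chordal bound back to the torus via $|\hat\omega_{\psi(j)}-\omega_j|_\T\le\tfrac14|\hat\lambda_{\psi(j)}-e^{-2\pi i\omega_j}|$ (valid once the perturbation is small enough that $|\hat\omega_{\psi(j)}-\omega_j|_\T\le 1/2$) and taking the max over $j$ yields part (a) after absorbing absolute constants. For part (b), the crude Bauer--Fike/Gershgorin bound is replaced by a self-consistent (Newton-type) estimate exploiting the near-diagonal structure of $D_\Omega+QEQ^{-1}$; the hypothesis $\|\hat\Psi-\Psi\|_2\le\sigma_S^2(\Phi)\Delta/(S^2M)$ is precisely what makes the Gershgorin radius a small fraction of $\Delta$ and sharpens the constant down to $1$.

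The step I expect to be the main obstacle is establishing that the eigenvalue correspondence is a genuine bijection. Bauer--Fike alone only localizes each perturbed eigenvalue near some unperturbed one; in the non-normal regime one must rule out two perturbed eigenvalues collapsing onto the same $e^{-2\pi i\omega_j}$ while another true eigenvalue is left unmatched. This is handled by the disjointness of the Gershgorin disks together with a continuity/degree argument, whose essential inputs are the separation estimate $|e^{-2\pi i\omega_j}-e^{-2\pi i\omega_k}|\ge 4\Delta$ and the bound $\kappa(Q)\le\sqrt{SM}/\sigma_S(\Phi)$. The refined constant in part (b) is the second delicate point, but it is a standard fixed-point sharpening once the bijection is secured; the remaining ingredients — the $\kappa(Q)$ computation and the trigonometric conversions — are routine.
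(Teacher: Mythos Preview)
Your proposal is correct and takes exactly the same approach as the paper: the paper does not supply its own proof but simply states that this is \cite[Lemma 2]{li2020super}, and you likewise reduce everything to that lemma after verifying the two needed inputs ($\kappa(Q)\le\sqrt{SM}/\sigma_S(\Phi)$ from $U=\Phi Q$, and the eigenvalue gap $\ge 4\Delta$). Your additional exposition of the Bauer--Fike/Gershgorin mechanism and the bijection-via-homotopy argument is accurate and goes beyond what the paper itself provides.
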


	{
	\begin{proof}[Proof of Theorem \ref{thmespritstability} part (a)]
		Assume for now that $\dist(\hat U,U)\leq 2^{-(S+2)}$. Combining Lemmas \ref{lemapsiper1} and \ref{lemmamatching1} part (a), we obtain the deterministic inequality,
		\begin{equation}
			\label{eq:esprithelp1}
			\md(\hat\Omega,\Omega)
			\leq \frac{4^{S+2}  S^{3/ 2}\sqrt{M}}{\sigma_{S}(\Phi)} \dist(\hat U,U). 
		\end{equation}
		If $\dist(\hat U,U)> 2^{-(S+2)}$, then the same inequality still holds since 
		\begin{align*}
		    \md(\hat\Omega,\Omega)
		\leq 1 
		&\leq \frac{2^{S+2}  \|\Phi\|_F }{\sigma_{S}(\Phi)} \dist(\hat U,U) \\
		&\leq \frac{4^{S+2}  S^{3/ 2}\sqrt{M}}{\sigma_{S}(\Phi)} \dist(\hat U,U),
		\end{align*}
		where we used that the Frobenius norm of $\Phi$ is $\sqrt {MS}$. 
		
		Hence, \eqref{eq:esprithelp1} holds regardless of the value of $\dist(\hat U,U)$. Squaring both sides of \eqref{eq:esprithelp1}, taking the expectation, and applying Theorem \ref{thm:Uperturb} completes the proof of part (a) of the theorem.
	\end{proof}

\subsection{Preparation and proof of Theorem \ref{thmespritstability} part (b)}
	
	\label{proofthmespritstabilityb}
	
	We begin by recalling two lemmas, and are essentially immediate consequences of \cite{cai2018rate}. 
	
	\begin{lemma}
		\label{lem:anru1}
		Suppose Assumption \ref{assump:main} holds. Define the events
		\begin{align*}
			\calE_1&:=\Big\{ \sigma_S^2(Y_L^* U) \geq  \frac{2}{3}\sigma_S^2(\Phi)\lambda_S(X)+\nu^2\Big\}, \\
			\calE_2&:=\Big\{\sigma_{S+1}^2( Y_L)\leq \frac{1}{3} \sigma_S^2(\Phi)\lambda_S(X) +\nu^2 \Big\}.		
		\end{align*}
		There exists a sufficiently large $D_\tau\geq 1$ depending only on $\tau$ such that the following hold. Suppose  
		\begin{equation}
		    \label{eq:noise1}
		    \xi:=\frac{L \sigma_S^2(\Phi)\lambda_S(X)}{M\nu^2}
		    \geq D_\tau.
		\end{equation}
		\begin{enumerate}
		    \item 
		There is a $c>0$ depending only on $\tau$ such that
		\begin{equation}
		\P \big( (\calE_1\cap \calE_2)^c\big)\lesssim_\tau \exp( -c M\xi).
		\label{lem:anru1:eq1}
		\end{equation}
		\item 
		There are $C,c>0$ depending only on $\tau$ such that for all $u\geq 0$,
		\begin{equation}
		    \label{eq:proj}
			\begin{aligned}
				&\P\Big( \|P_{Y_L^* U} ({Y_L}^* U_\perp)\|_2 \geq \frac{u\nu}{\sqrt L} \Big)\\
		        &\lesssim_\tau \exp\Big(CM-c\min\big(u^2,u\sqrt{M\xi}\big)\Big) 
		         +\exp\big( -c M\xi\big). 
			\end{aligned}
		\end{equation}
		\end{enumerate}
	\end{lemma}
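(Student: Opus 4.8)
The plan is to import, via the rescaling already used in the proof of Theorem~\ref{thm:Uperturb}, a short list of standard concentration estimates for sub-Gaussian matrices, all of which are in, or immediate from, \cite{cai2018rate}. Set $\tilde Y := (\sqrt L/\nu)\,Y_L = A + Z$ with $A:=(\sqrt L/\nu)\,\Phi X_L$ of rank $S$ and $Z:=(\sqrt L/\nu)\,E_L\in\C^{M\times L}$ having i.i.d. $\mathcal{G}_{1,\tau}$ entries. Two algebraic facts do most of the work. First, the columns of $U_\perp$ are orthogonal to $R(\Phi)$, so $U_\perp^*\Phi=0$ and hence $Y_L^*U_\perp = E_L^*U_\perp$ is pure noise. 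Second, $R(A)=R(\Phi X_L)=R(\Phi)=R(U)$, so one may write $A = UW$ with $U$ our fixed basis and $W:=U^*A=(\sqrt L/\nu)(U^*\Phi)X_L$ a function of $X_L$ alone, hence independent of $Z$, with $\sigma_S(W)=\sigma_S(A)$. From \eqref{eq:nu} and full rank of $X_L$ one gets $\sigma_S^2(A)\ge (L/\nu^2)\sigma_S^2(\Phi)\lambda_S(X)=M\xi$ and, by the same computation, $L\le M\xi$; the latter will be used repeatedly. Since singular values and probabilities scale under the rescaling, $\calE_1$ and $\calE_2$ become $\{\sigma_S^2(\tilde Y^*U)\ge \tfrac23 M\xi+L\}$ and $\{\sigma_{S+1}^2(\tilde Y)\le\tfrac13 M\xi+L\}$, and the choice of constants leaves a gap $\tfrac13 M\xi$ in the squared singular values on $\calE_1\cap\calE_2$. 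The inputs I will use, each holding with probability $\ge 1-2e^{-c_\tau t^2}$ and each a routine bound on sub-Gaussian matrices, are: (i) $\|Z\|_2\le C_\tau(\sqrt M+\sqrt L)+t$; (ii) $\sigma_{\min}(Z^*P)\ge\sqrt L-C_\tau\sqrt S-t$ for fixed $P\in\C^{M\times S}$ with orthonormal columns; and (iii) $\|Q^*Z^*P\|_2\le C_\tau\sqrt S+t$ for fixed $P\in\C^{M\times S}$, $Q\in\C^{L\times S}$ with orthonormal columns.

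For part~1: since $\rank(A)=S$, $\sigma_{S+1}(\tilde Y)\le\|Z\|_2$, so inserting $t=\sqrt{M\xi}/C_0$ into (i), using $L\le M\xi$ and $\xi\ge D_\tau$, and taking $C_0,D_\tau$ large in terms of $\tau$ yields $\calE_2$ up to failure probability $\lesssim_\tau e^{-cM\xi}$. For $\calE_1$, write $\tilde Y^*U = W^* + Z^*U$, let $V\in\C^{L\times S}$ be an orthonormal basis of $R(W^*)$ — again $X_L$-measurable, hence independent of $Z$ — and split $Z^*U = V(V^*Z^*U)+(I-VV^*)Z^*U=:VB_1+B_2$. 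Since $V^*B_2=0$, the cross terms in $(\tilde Y^*U)^*(\tilde Y^*U)$ vanish and it equals $(\Sigma_W G^*+B_1)^*(\Sigma_W G^*+B_1)+B_2^*B_2$, where $W^*=V\Sigma_W G^*$ is an SVD. Using $\lambda_{\min}(P+Q)\ge\lambda_{\min}(P)+\lambda_{\min}(Q)$ for positive semidefinite $P,Q$, Weyl's inequality, and $B_2^*B_2=U^*ZZ^*U-B_1^*B_1$, one gets
\[
\sigma_S^2(\tilde Y^*U)\ \ge\ M\xi + L - 2\big(\sqrt{M\xi}+\sqrt L\big)\big(C_\tau\sqrt S+t\big)
\]
after invoking (ii) for $\sigma_{\min}(Z^*U)$ and (iii) for $\|B_1\|_2=\|V^*Z^*U\|_2$ (conditionally on $X_L$) with $t=\sqrt{M\xi}/C_0$; here one uses the elementary bound $(\sqrt x-a)^2\ge x-2\sqrt x\,a$, valid for all $a$, so no case distinction on the size of $L$ is needed. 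Since $L\le M\xi$, $S\le M$ and $\xi\ge D_\tau$, the correction is at most $\tfrac13 M\xi$ once $C_0,D_\tau$ are large enough in terms of $\tau$, which gives $\calE_1$; a union bound over the (at most three) events used produces the failure bound \eqref{lem:anru1:eq1}.

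For part~2: since $Y_L^*U_\perp=E_L^*U_\perp=(\nu/\sqrt L)Z^*U_\perp$ and $P_{Y_L^*U}=P_{\tilde Y^*U}$ is scale-invariant, the event is $\{\,\|P_{\mathcal C}\,Z^*U_\perp\|_2\ge u\,\}$ with $\mathcal C:=R(\tilde Y^*U)=R(W^*+Z^*U)$, a random subspace depending only on $X_L$ and $Z^*U$. In the Gaussian case $Z^*U$ and $Z^*U_\perp$ are independent (orthogonal projections of a Gaussian), so conditioning on $\mathcal C$ makes $G_{\mathcal C}^*Z^*U_\perp$ — for $G_{\mathcal C}$ an orthonormal basis of $\mathcal C$ — an at-most-$S\times(M-S)$ matrix with i.i.d. Gaussian entries, whose operator norm satisfies $\P(\,\cdot\ge u\,)\le\exp(CM-cu^2)$, which is the first term of \eqref{eq:proj}. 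For genuinely sub-Gaussian noise $Z^*U$ and $Z^*U_\perp$ are only uncorrelated, and one must instead run the conditioning argument of \cite{cai2018rate}: on an event of probability $\ge 1-\exp(-cM\xi)$, where the spectral gap from part~1 is in force, the reduction survives up to a sub-exponential correction, and this is precisely what produces the $\min(u^2,u\sqrt{M\xi})$ in the exponent together with the additive $\exp(-cM\xi)$.

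I expect the decoupling step in part~2 to be the main obstacle: for non-Gaussian sub-Gaussian entries $Z^*U$ and $Z^*U_\perp$ are not independent, and following \cite{cai2018rate} carefully is needed to obtain the two-regime tail and to locate the transition at $u\asymp\sqrt{M\xi}$. Everything else is routine once the rescaling and the identity $U_\perp^*\Phi=0$ are in place; the one remaining nuisance is matching the constants $\tfrac13,\tfrac23$ against the $\tau$-dependent constants in (i)--(iii), which is why $\xi$ must exceed a threshold $D_\tau$ depending only on $\tau$ and why the inequality $L\le M\xi$ from \eqref{eq:nu} is invoked.
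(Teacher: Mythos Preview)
Your proposal is correct and, at the high level, matches the paper's argument: rescale to $\tilde Y=A+Z$ with standard sub-Gaussian noise and import the concentration estimates of \cite{cai2018rate}, using \eqref{eq:nu} to relate $L$ and $\sigma_S^2(A)$ to $M\xi$. The main difference is one of packaging. The paper's proof simply quotes Lemma~4 in the supplement of \cite{cai2018rate}, which already furnishes the three tail bounds for $\sigma_S^2(Y_L^*U)$, $\sigma_{S+1}^2(Y_L)$, and $\|P_{Y_L^*U}(Y_L^*U_\perp)\|_2$ in ready-made form; the work in the paper is then only the choice of $u_1,u_2$ and the simplification via $\eta^2\sigma_S^2(\Phi X_L)+L\le 2\eta^2\sigma_S^2(\Phi)\lambda_S(X)$. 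You instead rederive the $\calE_1$ bound from more primitive ingredients (i)--(iii), via the orthogonal decomposition $Z^*U=VB_1+B_2$ with $V$ an $X_L$-measurable basis of $R(W^*)$; this is exactly the kind of argument that sits inside the proof of the cited lemma, and it has the advantage of making transparent why the answer is $\sigma_S^2(A)+L$ minus lower-order fluctuations. One small omission: your displayed lower bound for $\sigma_S^2(\tilde Y^*U)$ drops a $-(C_\tau\sqrt S+t)^2$ term coming from $-\|B_1\|_2^2$ in $\lambda_{\min}(B_2^*B_2)$; with $t=\sqrt{M\xi}/C_0$, $S\le M$, and $\xi\ge D_\tau$, this term is also $o(M\xi)$ and is absorbed in the same way, so the conclusion is unaffected. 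For part~2 you correctly identify that the Gaussian case decouples cleanly while the general sub-Gaussian case needs the conditioning argument of \cite{cai2018rate}; the paper does no more than this either, simply invoking the third bound of the same lemma and then replacing $\sqrt{\eta^2\sigma_S^2(\Phi X_L)+L}$ by $\sqrt{M\xi}$.
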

	
	\begin{proof}
	    The proof essentially follows from Lemma 4 in the supplement of \cite{cai2018rate}. For ease of notation, let $\eta:=\sqrt L/\nu$. To use the mentioned lemma, we first rescale by $\eta$ and take the conjugate transpose of $Y_L=\Phi X_L+E_L$ to get, 
    	\begin{equation}
    	    \label{eq:Yher}
    	\eta Y_L^*
    	=\eta (\Phi X_L)^* + \eta E_L^*,
    	\end{equation}
    	Notice that the $L\times M$ matrix $\eta E_L^*$ has entries that are i.d.d. $\mathcal{G}_{1,\tau}$ random variables. The the right singular vectors of $\eta Y_L^*$ are the left singular vectors of $Y_L$, and the conjugate transpose does not alter the singular values or their ordering. 
    	
    	In this proof, $C,c$ are constants that depend only on $\tau$, and their values may change from line to line. Now we apply Lemma 4 in the supplement of \cite{cai2018rate} to \eqref{eq:Yher} to get that, for all $u_1,u_2,u\geq 0$, 
    	\begin{equation}
    	    \label{eq:anru1help1}
    	    \begin{aligned}
    	        &\P\Big(\eta^2 \sigma_S^2(Y_L^* U)\leq (\eta^2\sigma_S^2(\Phi X_L)+L)(1-u_1)\Big) \\
    	        &\leq C \exp\Big( CS - c\big( \eta^2\sigma_S^2(\Phi X_L)+L\big)\min(u_1,u_1^2) \big) \Big), \\
    	        &\P\Big( \eta^2 \sigma_{S+1}^2(Y_L)\geq L(1+u_2)\Big) \\
    	        &\leq C\exp\Big(CM - cL \min(u_2,u_2^2) \Big). \\
				&\P\Big( \eta \|P_{Y_L^* U} ({Y_L}^* U_\perp)\|_2 \geq u \Big)\\
				&\lesssim_\tau \exp\Big(CM-c\min\Big(u^2,u\sqrt{\eta^2\sigma_S^2(\Phi X_L)+L}\Big)\Big) \\
				&\quad +\exp\big( -c (\eta^2 \sigma_S^2(\Phi X_L)+L)\big). 
			\end{aligned}
    	\end{equation}
        
        For the first bound in \eqref{eq:anru1help1}, we pick 
        $$
        u_1:=\frac{1}{3} \frac{\eta^2 \sigma_S^2(\Phi X_L) }{\eta^2\sigma_S^2(\Phi X_L)+L} \leq \frac{1}{3}. 
        $$
        Hence, $\min(u_1,u_1^2)=u_1^2$. The noise condition \eqref{eq:nu} implies that
        \begin{equation}
        \label{eq:nu2}
        \eta^2\sigma_S^2(\Phi X_L)+L
    	 \leq 2\eta^2\sigma_S^2(\Phi)\lambda_S(X), 
        \end{equation} 
        We also have $S\leq M$. So for sufficiently large $D_\tau$ depending only on $c,C$ (which only depend on $\tau$), condition \eqref{eq:noise1} and inequality \eqref{eq:nu2} allow us to say that  
    	  \begin{equation}
    	      \label{eq:anru1help2}
    	      \begin{aligned}
    	  &c \frac{\eta^4 \sigma_S^4(\Phi X_L) }{\eta^2\sigma_S^2(\Phi X_L)+L}-CS \\
    	  &\geq \frac{c}{2} \eta^2 \sigma_S^2(\Phi X_L)-CM 
    	  \geq \frac{c}{4} \eta^2 \sigma_S^2(\Phi)\lambda_S(X).
    	  \end{aligned}
    	  \end{equation}
    	Inserting this value of $u_1$ into \eqref{eq:anru1help1}, and using the above observations, we see that 
        \begin{equation*}
        \begin{aligned}
    	        \P(\calE_1^c)
    	        &\leq\P\Big(\eta^2 \sigma_S^2(Y_L^* U)\leq \frac{2}{3}\eta^2\sigma_S^2(\Phi X_L)+L\Big) \\
    	        &\leq C \exp\big( -c\eta^2 \sigma_S^2(\Phi)\lambda_S(X) \big). 
    	\end{aligned}
    	\end{equation*}
    	  
    	  For the second bound in \eqref{eq:anru1help1}, we pick 
    	  $$
    	  u_2=\frac{1}{3} \frac{\eta^2 \sigma_S^2(\Phi X_L)}{L}
    	  =\frac{1}{3} \frac{\sigma_S^2(\Phi X_L)}{\nu^2}.
    	  $$
    	  Notice that 
    	  \begin{align*}
    	  L\min(u_2,u_2^2) 
    	  &\geq \frac{1}{9} \min\Big( \eta^2 \sigma_S^2(\Phi X_L), \frac{\eta^4\sigma_S^4(\Phi X_L)}{L}\Big) \\
    	  &\geq \frac{1}{9} \frac{\eta^4\sigma_S^4(\Phi X_L)}{\eta^2 \sigma_S^2(\Phi X_L)+L}.
    	  \end{align*}
    	  Using the same argument as in \eqref{eq:anru1help2}, for $D_\tau$ sufficiently large, we see that 
    	  $$
    	  cL \min(u_2,u_2^2) -CM
    	  \geq \frac{c}{36} \eta^2 \sigma_S^2(\Phi)\lambda_S(X). 
    	  $$
    	  Combining these observations shows that 
    	  \begin{align*}
    	  \P(\calE_2^c)
    	  &\leq\P\Big( \eta^2 \sigma_{S+1}^2(Y_L)\geq L + \frac{1}{3} \eta^2\sigma_S^2(\Phi X_L) \Big) \\
    	  &\leq C\exp\big(-c\eta^2 \sigma_S^2(\Phi)\lambda_S(X) \big). 
    	  \end{align*}
        Combining the above with a union bound argument completes the proof of \eqref{lem:anru1:eq1}. 
        
    	  Finally, we proceed to simplify the probability bound for $\|P_{Y_L^* U} ({Y_L}^* U_\perp)\|_2$. Since
    	  $
    	  \sqrt{\eta^2\sigma_S^2(\Phi X_L)+L}
    	  \geq \sqrt{\eta^2\sigma_S^2(\Phi)\lambda_S(X)}
    	  $
    	  and
    	  $
    	  \eta^2\sigma_S^2(\Phi X_L)+L\geq \eta^2\sigma_S^2(\Phi)\lambda_S(X),
    	  $
    	  we can use \eqref{eq:anru1help1} to get \eqref{eq:proj}. 
\end{proof}

    The next lemma is a deterministic bound. It is Proposition 1 in \cite{cai2018rate} applied to equation \eqref{eq:Yher}. 
    
	\begin{lemma}
		\label{lem:anru2}
		Suppose Assumption \ref{assump:main} holds. If $\sigma_S(Y_L^* U) > \sigma_{S+1}(Y_L)$, then
		\begin{align*}
			d(\hat U, U)^2
			&\leq \frac{\sigma_S^2( Y_L^* U) \|P_{Y_L^* U} ({Y_L}^* U_\perp)\|_2^2	}{(\sigma_S^2( Y_L^* U)-\sigma_{S+1}^2( Y_L))^2}.
		\end{align*}
	\end{lemma}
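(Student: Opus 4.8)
The plan is to read Lemma \ref{lem:anru2} as a verbatim application of the deterministic singular-subspace perturbation bound of \cite{cai2018rate} (their Proposition~1) to the matrix appearing in \eqref{eq:Yher}, after matching up notation. The substantive content lives entirely in \cite{cai2018rate}; what remains for us is bookkeeping.

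First I would record the relevant identifications. By construction $\hat U$ spans the eigenspace of $\hat Y=Y_LY_L^*$ associated with its $S$ largest eigenvalues, which is the same as the span of the top-$S$ left singular vectors of $Y_L$, equivalently the span of the top-$S$ \emph{right} singular vectors of $Y_L^*$; none of these subspaces, nor the ordering of the singular values, changes if we rescale by $\eta:=\sqrt L/\nu$ and pass to $\eta Y_L^*$ as in \eqref{eq:Yher}. Meanwhile $U$ is a fixed orthonormal basis of $R(\Phi)$ (which, since $X_L$ has full column rank by Assumption \ref{assump:main}(b), is also the left singular subspace of $\Phi X_L$, hence the right singular subspace of $(\Phi X_L)^*$), and $U_\perp=W$ spans its orthogonal complement. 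So \eqref{eq:Yher} presents $\eta Y_L^*$ as a perturbation of a matrix whose right singular subspace is $R(U)$.

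Next I would invoke \cite[Proposition~1]{cai2018rate}: for a matrix $A$, target rank $r$, and an orthonormal reference frame $V$ with $r$ columns, provided $\sigma_r(AV)>\sigma_{r+1}(A)$, the $\sin\theta$ distance between the top-$r$ right singular subspace of $A$ and $R(V)$ is at most $\sigma_r(AV)\,\|P_{AV}(AV_\perp)\|_2/\big(\sigma_r^2(AV)-\sigma_{r+1}^2(A)\big)$. Applying this with $A=\eta Y_L^*$, $r=S$, $V=U$: the hypothesis $\sigma_S(\eta Y_L^*U)>\sigma_{S+1}(\eta Y_L)$ is exactly $\sigma_S(Y_L^*U)>\sigma_{S+1}(Y_L)$ after dividing by $\eta$, and in the resulting bound every power of $\eta$ cancels between numerator and denominator — the projection $P_{\eta Y_L^*U}=P_{Y_L^*U}$ is scale-invariant, $\|P_{\eta Y_L^*U}(\eta Y_L^*U_\perp)\|_2^2$ carries a factor $\eta^2$, $\sigma_S^2(\eta Y_L^*U)$ carries $\eta^2$, and $\sigma_{S+1}^2(\eta Y_L)$ carries $\eta^2$, so the ratio equals the claimed expression. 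The complex-to-real reduction is harmless, exactly as already used for \cite[Theorem~3]{cai2018rate} in the proof of Theorem \ref{thm:Uperturb} and for the probabilistic counterpart in Lemma \ref{lem:anru1}, so I would note this and move on.

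The only place requiring care, and hence the main (mild) obstacle, is making the dictionary with \cite{cai2018rate} airtight: confirming that their Proposition~1 is the deterministic statement for an \emph{arbitrary} reference frame (so that we may take $V=U$ directly, without invoking the unperturbed matrix), pinning down whether it is phrased for left or right singular subspaces and transposing if needed, and verifying that the top-$S$ eigenspace of $\hat Y$ indeed coincides with the top-$S$ left singular subspace of $Y_L$. All of these become routine once the reference's exact statement is written down.
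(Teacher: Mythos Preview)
Your proposal is correct and matches the paper's approach exactly: the paper simply states that the lemma is Proposition~1 of \cite{cai2018rate} applied to \eqref{eq:Yher}, and your write-up is precisely the bookkeeping that makes this citation explicit (identifying $\hat U$ with the top-$S$ right singular subspace of $\eta Y_L^*$, taking $V=U$, and observing that the $\eta$'s cancel).
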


	\begin{proof}[Proof of Theorem \ref{thmespritstability} part (b)]
		In this proof, we let $C>0, c>0$ be constants that only depend on $\tau$. Their values may change from one line to another. Let $\calE_1$ and $\calE_2$ be the events defined in Lemma \ref{lem:anru1} and set $\calE:=\calE_1\cap \calE_2$. 
		Define the event 
		$$
		\calH:= \Big\{ d(\hat U, U)
		\leq \frac{4^{S+2}\sigma_{S}^2(\Phi) \Delta}{S^2 M}  \Big\}
		=\Big\{ d(\hat U,U) \leq \sqrt 6 \rho\Big\}.
		$$
		In this proof, we consider the three events $\calE\cap \calH$, $\calE \cap \calH^c$, and $\calE^c$. The first event $\calE\cap \calH$ is a good event where we will be able to apply Lemma \ref{lemapsiper1}. The third event $\calE^c$ can be readily controlled using Lemma \ref{lem:anru1}. The majority of the proof pertains to the second event $\calE \cap \calH^c$, and to do this, we will estimate $\P(\calH^c|\calE)$. 
		
		Under event $\calE$, the inequalities in Lemmas \ref{lem:anru1} imply 
		\begin{align*}
		\sigma_S^2(Y_L^* U)-\sigma_{S+1}^2(Y_L)
		\geq \frac{1}{3} \sigma_S^2(\Phi) \lambda_S(X)
		>0. 
		\end{align*}
		This allows us to use Lemma \ref{lem:anru2} when $\calE$ occurs. Using the inequalities in Lemmas \ref{lem:anru1} again, $\sigma_S(Y_L^*U)\leq \sigma_S(Y_L)$ since $U$ has orthonormal columns, and $\nu^2\leq \sigma_S^2(\Phi)\lambda_S(X)$ from \eqref{eq:nu}, under event $\calE$,
	    \begin{align*}
	        d(\hat U, U)^2
			&\leq \frac{3(\sigma_S^2(\Phi)\lambda_S(X)+\nu^2) \|P_{Y_L^* U} ({Y_L}^* U_\perp)\|_2^2	}{\sigma_S^4(\Phi)\lambda_S^2(X)} \\
			&\leq \frac{6\|P_{Y_L^* U} ({Y_L}^* U_\perp)\|_2^2	}{\sigma_S^2(\Phi)\lambda_S(X)}. 
	    \end{align*}
		This inequality implies, 
		\begin{align*}
			\P(\calH^c|\calE)
			&\leq \P\Big( \|P_{ Y_L^* U} ({Y_L}^* U_\perp)\|_2 \geq \rho \sigma_S(\Phi) \sqrt{\lambda_S(X)} \ \big| \ \calE \Big) \\
			&= \P\Big( \|P_{ Y_L^* U} ({Y_L}^* U_\perp)\|_2 \geq \rho \sqrt{M \xi} \frac{\nu}{\sqrt L} \ \big| \ \calE \Big) \\
			&\leq \P\Big( \|P_{ Y_L^* U} ({Y_L}^* U_\perp)\|_2 \geq \rho \sqrt{M \xi} \frac{\nu}{\sqrt L} \Big) \frac{1}{\P(\calE)}.
		\end{align*} 
		The $1/\P(\calE)$ term can be safely ignored, since by Lemma \ref{lem:anru1} and \eqref{eq:SNRbound}, we have  
		$$
		\P \big( \calE^c\big)\lesssim_\tau \exp( -c M \xi)
		\leq \exp( -c D_\tau).
		$$
		For $D_\tau$ sufficiently large depending only on $\tau$ (since the implicit constants in this inequality only depend on $\tau$), we have $\P(\calE^c)\leq 1/2$ and $\P(\calE)\geq 1/2$.
		
		Let us continue with our goal of bounding $\P(\calH^c|\calE) $.  We use Lemma \ref{lem:anru2} to see that
		\begin{align*}
			\P(\calH^c|\calE)
			&\leq 2 \P\Big( \|P_{ Y_L^* U} ({Y_L}^* U_\perp)\|_2 \geq \rho \sqrt{M \xi} \frac{\nu}{\sqrt L} \Big) \\
			&\lesssim_\tau \exp\big(C  M-c \min\big(\rho^2 , \rho \big) M\xi\big) +\exp( -c M\xi ).
		\end{align*}
		Condition \eqref{eq:SNRbound} implies that $\xi \geq D_\tau/\min(1,\rho,\rho^2)$, so for sufficiently large $D_\tau$, we have
		$$
		\P(\calH^c|\calE) 
		\lesssim_\tau \exp\big(-c \min\big(1,\rho ,\rho ^2 \big) M\xi\big) 
		$$
		We use the inequality, $e^{-bu^2}\leq 1/u^2$ for all $u^2/\log (u^2) \geq 1/b$, with $u=\sqrt{\xi}$ and $b=c\min(1,a,a^2) M$, which is justified by \eqref{eq:SNRbound}, to finally see that
		\begin{equation}
			\label{eq:BGc}
			\P(\calH^c|\calE) 
			\lesssim_\tau \frac{M\nu^2}{\sigma_S^2(\Phi)\lambda_S(X)}.
		\end{equation}
		
		Now we are ready to finish the proof. Recall we have the trivial upper bound $\md(\hat\Omega,\Omega)\leq 1$. When event $\calE\cap \calH$ holds, we can employ Lemma \ref{lemapsiper1} and part (b) of Lemma \ref{lemmamatching1} to see that 
		$\md(\hat\Omega,\Omega)1_{\calE\cap \calH}\leq 4^{S+2}\dist(\hat U,U)1_{\calE\cap \calH}.$
		Then
		\begin{align*}
			&\E\big( \md(\hat\Omega,\Omega)^2\big) \\
			&= \E\big( \md(\hat\Omega,\Omega)^2 1_{\calE}\big)+\E\big( \md(\hat\Omega,\Omega)^2 1_{\calE^c}\big)\\
			&\leq \E\big( \md(\hat\Omega,\Omega)^2 1_{\calE\cap\calH}\big)+\E\big( \md(\hat\Omega,\Omega)^2 1_{\calE\cap \calH^c}\big) +\P(\calE^c) \\
			&\leq \E\big( 16^{S+2}\dist(\hat U,U)^2 1_{{\calE\cap\calH}}\big) + \P(\calE\cap \calH^c) +\P(\calE^c) \\
			&\leq 16^{S+2} \E\big(\dist(\hat U,U)^2\big) + \P(\calH^c|\calE) + \P(\calE\cap \calH^c).
		\end{align*}
		We use Theorem \ref{thm:Uperturb} to deal with the first term, and inequality \eqref{eq:BGc} for the second term. For the third term, by Lemma \ref{lem:anru1} and the inequality $e^{-bu^2}\leq 1/u^2$ with $u=\sqrt \xi$ and $b=cM$, we have
		$$
		\P(\calE^c)
		\lesssim_\tau \exp(-cM\xi)
		\lesssim_\tau  \frac{M\nu^2}{\sigma_S^2(\Phi)\lambda_S(X)}.
		$$
	\end{proof}
	}
	
	\subsection{Preparation and proof of Theorem \ref{thm:CR}}
\label{sec:CRproof}

The proof of the Cramer-Rao lower bound for clumps stated in Theorem \ref{thm:CR} requires several technical lemmas. While our approach is similar in spirit to that of \cite{lee1992cramer}, we need to extend the ideas in the referenced paper to several clumps. 

We introduce several functions of a single complex variable $z$. We typically use upper-case letters to denote functions of a single complex variable. The $m$-th Laurent coefficient of a $F\colon\C\to\C$ (expanded at $z=\infty$) is denoted by $b_m(F)$. Throughout, we fix a positive integer $n$ and $0<\epsilon<1$. We define the functions
\begin{equation}
	\begin{aligned}
		\label{eq:G}
		&G(z)
		:=\prod_{k=0}^{n-1} \frac{1}{z-k}
		=\sum_{m=n}^\infty \frac{b_m(G)}{z^m},\\ \quad\text{and}\quad
		&G_\epsilon(z)
		:=\prod_{k=0}^{n-1}   \frac{1}{z-\epsilon k}. 
	\end{aligned}
\end{equation}
For each $0\leq j < n$, let
\begin{equation}
	\begin{aligned}
		\label{eq:F}
		F_{\epsilon,j}(z)
		&:=\frac{G_\epsilon(z)}{z-j\epsilon}
		=\frac{1}{(z-j\epsilon)} \prod_{k=0}^{n-1} \frac{1}{z-\epsilon k}
		=\sum_{m=n+1}^\infty \frac{b_m(F_{\epsilon,j})}{z^m}.	  
	\end{aligned} 
\end{equation}
Notice that $F_{\epsilon,j}$ has simple poles at $k\epsilon$ for each $k\not=j$ and a pole of multiplicity 2 at $j\epsilon$. We have the partial fraction expansion of $F_{\epsilon,j}$,
\begin{equation}
	\label{eq:FCauchy}
	F_{\epsilon,j}(z) \prod_{k\not=j} (j\epsilon-k\epsilon) 
	=\frac{1}{(z-\epsilon j)^2} + \sum_{k=0}^{n-1} \frac{C_{\epsilon,j,k}}{z-k\epsilon}. 
\end{equation}

\begin{lemma}
	\label{lem:A}
	For each $\epsilon\in (0,1)$, positive integer $n$, and $0\leq j<n$, the coefficients $\{C_{\epsilon,j,k}\}_{k=0}^{n-1}$ defined in \eqref{eq:FCauchy} satisfy the system of $n$ equations:
	\begin{equation}
		\label{eq:system}
		-k(j\epsilon)^k = \sum_{\ell=0}^{n-1} (V_\epsilon)_{k,\ell} C_{\epsilon,j,\ell}, \quad 0\leq k< n, 
	\end{equation}
	where $V_\epsilon$ is a Vandermonde matrix with real nodes, 
	\[
	V_\epsilon
	:=
	\begin{bmatrix}
		1 & 1 &1 &\cdots & 1\\
		0 &\epsilon &2\epsilon & \cdots &(n-1)\epsilon \\
		\vdots &\vdots &\vdots  & &\vdots  \\
		0 &\epsilon^{n-1} &(2\epsilon)^{n-1} & \cdots &\big( (n-1)\epsilon \big)^{n-1}
	\end{bmatrix}. 
	\]
\end{lemma}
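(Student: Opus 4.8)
The plan is to read off the system \eqref{eq:system} directly from the partial fraction expansion \eqref{eq:FCauchy}, by comparing it against the explicit product formula for $F_{\epsilon,j}$ and matching the singular parts at each pole. First I would observe that $F_{\epsilon,j}$ has a double pole at $z=j\epsilon$ and simple poles at $z=k\epsilon$ for $k\neq j$, and that \eqref{eq:FCauchy} is nothing but the unique partial fraction decomposition of the rational function $F_{\epsilon,j}(z)\prod_{k\neq j}(j\epsilon - k\epsilon)$. So the content to be extracted is a set of linear relations that the residues $C_{\epsilon,j,k}$ must satisfy.

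The key device is to clear denominators: multiply \eqref{eq:FCauchy} through by $\prod_{k=0}^{n-1}(z-k\epsilon)$ and use the definition \eqref{eq:F} of $F_{\epsilon,j}$, namely $F_{\epsilon,j}(z)\prod_{k=0}^{n-1}(z-k\epsilon) = 1/(z-j\epsilon)$, wait — more precisely $F_{\epsilon,j}(z) = G_\epsilon(z)/(z-j\epsilon)$ and $G_\epsilon(z)\prod$ cancels, so after multiplying \eqref{eq:FCauchy} by $\prod_k (z-k\epsilon)$ one gets a polynomial identity. The left side becomes $\prod_{k\neq j}(j\epsilon-k\epsilon)\cdot\frac{1}{z-j\epsilon}\cdot\frac{\prod_k(z-k\epsilon)}{1}$, which simplifies; the right side becomes $\frac{\prod_k(z-k\epsilon)}{(z-j\epsilon)^2} + \sum_\ell C_{\epsilon,j,\ell}\prod_{k\neq\ell}(z-k\epsilon)$. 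The term $\frac{\prod_k(z-k\epsilon)}{(z-j\epsilon)^2}$ still has a pole, so instead I would work with the cleaner route: multiply \eqref{eq:FCauchy} only by $(z-j\epsilon)^2$ and evaluate/differentiate, or — cleanest of all — recognize that the residue-matching at the $n$ simple-pole structure is equivalent to a single polynomial identity once we subtract the double-pole principal part. The robust way: multiply \eqref{eq:FCauchy} by $\prod_{k=0}^{n-1}(z-k\epsilon)^2$... this is getting heavy; the efficient path is to multiply \eqref{eq:FCauchy} through by $(z - j\epsilon)$, yielding
\[
\frac{\prod_{k\neq j}(j\epsilon - k\epsilon)}{\prod_{k\neq j}(z-k\epsilon)} = \frac{1}{z-j\epsilon} + C_{\epsilon,j,j} + \sum_{k\neq j}\frac{C_{\epsilon,j,k}(z-j\epsilon)}{z-k\epsilon},
\]
then move the $1/(z-j\epsilon)$ to the left, getting a function regular at $j\epsilon$ on the right, clear the remaining denominators $\prod_{k\neq j}(z-k\epsilon)$, and arrive at a polynomial identity of degree $\le n-1$ in $z$. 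Matching coefficients of $z^0,\dots,z^{n-1}$ produces $n$ linear equations; after reorganizing and using the elementary symmetric function identities for the nodes $\{k\epsilon\}$, these are exactly \eqref{eq:system} with the Vandermonde matrix $V_\epsilon$ as the coefficient matrix and $-k(j\epsilon)^k$ on the right.

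A slicker alternative I would actually pursue: apply the linear functionals ``take the $m$-th Laurent coefficient at $z=\infty$,'' i.e. $b_m(\cdot)$ for $m=1,\dots,n$, to both sides of \eqref{eq:FCauchy}. Since $F_{\epsilon,j}$ has a Laurent expansion starting at $z^{-(n+1)}$ by \eqref{eq:F}, the left-hand side contributes $0$ to $b_m$ for $1\le m\le n$. On the right, $b_m\big(\frac{1}{(z-j\epsilon)^2}\big) = (m-1)(j\epsilon)^{m-2}$ and $b_m\big(\frac{1}{z-k\epsilon}\big) = (k\epsilon)^{m-1}$, using the geometric series $\frac{1}{z-a} = \sum_{m\ge 1} a^{m-1} z^{-m}$. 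Reindexing $m \mapsto k+1$ (so $m=1,\dots,n$ corresponds to the row index $k = 0,\dots,n-1$ of $V_\epsilon$) turns the relation $0 = (k)(j\epsilon)^{k-1} + \sum_\ell (\ell\epsilon)^{k} C_{\epsilon,j,\ell}$ into \eqref{eq:system} after multiplying through by $j\epsilon$ (to match the stated right-hand side $-k(j\epsilon)^k$) and noting the rows of $V_\epsilon$ are $\big((\ell\epsilon)^k\big)_{\ell=0}^{n-1}$. I expect the main obstacle to be purely bookkeeping: getting the index shift between Laurent-coefficient order $m$ and matrix row $k$ consistent, correctly handling the factor of $\prod_{k\neq j}(j\epsilon - k\epsilon)$ that has been pulled to the left side of \eqref{eq:FCauchy} (it must be tracked so the double-pole principal part is normalized as $\frac{1}{(z-j\epsilon)^2}$ with coefficient $1$), and confirming that the $j\epsilon$-normalization reconciles $(m-1)(j\epsilon)^{m-2}$ with the claimed $-k(j\epsilon)^k$ — the sign there presumably comes from moving that term to the other side of the equation. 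None of these steps is deep; the lemma is essentially an exercise in comparing two representations of the same rational function, and the Vandermonde structure of $V_\epsilon$ (which will be exploited in later lemmas to invert the system) falls out automatically from the monomial basis $\{z^k\}$.
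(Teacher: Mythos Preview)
Your ``slicker alternative'' --- applying the Laurent coefficient functionals $b_m(\cdot)$ for $1\le m\le n$ to both sides of \eqref{eq:FCauchy} --- is exactly the route the paper takes, and your identifications $b_m\big(\tfrac{1}{(z-j\epsilon)^2}\big)=(m-1)(j\epsilon)^{m-2}$ and $b_m\big(\tfrac{1}{z-k\epsilon}\big)=(k\epsilon)^{m-1}$ are correct. The scalar $\prod_{k\neq j}(j\epsilon-k\epsilon)$ on the left causes no trouble, since the left side contributes $0$ to $b_m$ for every $m\le n$ regardless. After reindexing $k=m-1$ you correctly arrive at
\[
-\,k\,(j\epsilon)^{\,k-1}\;=\;\sum_{\ell=0}^{n-1} (V_\epsilon)_{k,\ell}\, C_{\epsilon,j,\ell},\qquad 0\le k<n,
\]
with the $k=0$ row reading $0=\sum_\ell C_{\epsilon,j,\ell}$.

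The only genuine slip is your attempt to ``multiply through by $j\epsilon$'' to force the left-hand side to match the printed $-k(j\epsilon)^k$: multiplying both sides by $j\epsilon$ would scale the right-hand side as well, so this cannot recover \eqref{eq:system} as stated. In fact the exponent in the lemma's displayed left-hand side is a typo. The paper's own proof derives precisely the relation with exponent $k-1$ (it stops at ``$0=(m-1)(j\epsilon)^{m-2}+\sum_k C_{\epsilon,j,k}(k\epsilon)^{m-1}$'' and declares this to be the claimed system), and the subsequent use of the lemma --- matching \eqref{eq:system} against the system \eqref{eq:system2} in Lemma~\ref{lem:taylorpsi}, whose derivation forces $\ell(j\epsilon)^{\ell-1}=\sum_k B_{\epsilon,j,k}(k\epsilon)^\ell$ --- is only consistent up to sign if the exponent is $k-1$. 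So your derivation is right; do not patch it to fit the misprinted formula.
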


\begin{proof}
	Recall the following Laurent series expansions at $z=\infty$: for any $w\in\C$, we have
	\begin{align*}
		\frac{1}{z-w}
		&=\frac{1}{z(1-w/z)}
		=\sum_{m=1}^\infty \frac{w^{m-1}}{z^m}, \\
		\frac{1}{(z-w)^2}
		&=-\frac{d}{dz} \(\frac{1}{z-w}\)
		=\sum_{m=2}^\infty \frac{(m-1)w^{m-2}}{z^m}.
	\end{align*}
	Fix $0\leq j<n$. Using these expansions in \eqref{eq:F} and \eqref{eq:FCauchy}, we see that
	\begin{equation*}
		\begin{aligned}
			\label{eq:Flaurent}
			&\prod_{k\not=j} (j\epsilon-k\epsilon) \sum_{m=n+1}^\infty \frac{b_m(F_{\epsilon,j})}{z^m}\\
			&= \sum_{m=2}^\infty \frac{(m-1) (j\epsilon)^{m-2}}{z^m} + \sum_{k=0}^{n-1} C_{\epsilon,j,k} \sum_{m=1}^\infty \frac{(k\epsilon)^{m-1}}{z^{m}}. 
		\end{aligned}
	\end{equation*}
	Examining the $z^{-m}$ terms for $1\leq m\leq n$ in this equation, we see that
	\[
	0= \sum_{k=0}^{n-1} C_{\epsilon,j,k}, 
	\]
	and that for $2\leq m\leq n$, we have
	\begin{align*}
		0= (m-1) (j\epsilon)^{m-2} + \sum_{k=0}^{n-1} C_{\epsilon,j,k} (k\epsilon)^{m-1}.
	\end{align*}
	This is precisely the system of equations in \eqref{eq:system2}.
	
\end{proof}

The following lemma bounds the growth rate of the coefficients $\{C_{\epsilon,j,k}\}_{k=0}^{n-1}$ that were defined earlier.

\begin{lemma}
	\label{lem:C}
	For each $\epsilon\in (0,1)$, positive integer $n$, and $0\leq j<n$, the coefficients $\{C_{\epsilon,j,k}\}_{k=0}^{n-1}$ defined in \eqref{eq:FCauchy} satisfy
	\[
	\max_{0\leq k<n} |C_{\epsilon,j,k}|
	\leq 2^{n-1} (n-1)!.
	\]
\end{lemma}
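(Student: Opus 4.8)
The plan is to solve the linear system \eqref{eq:system} from Lemma~\ref{lem:A} explicitly via Cramer's rule, and then bound the resulting ratios of Vandermonde-type determinants. Writing $\mathbf{C}_{\epsilon,j} := (C_{\epsilon,j,0},\dots,C_{\epsilon,j,n-1})^\top$ and $\mathbf{r}_{\epsilon,j} := \big(-k(j\epsilon)^k\big)_{k=0}^{n-1}$, the system reads $V_\epsilon \mathbf{C}_{\epsilon,j} = \mathbf{r}_{\epsilon,j}$, so by Cramer's rule
\[
C_{\epsilon,j,k} = \frac{\det\big( V_\epsilon^{(k)} \big)}{\det(V_\epsilon)},
\]
where $V_\epsilon^{(k)}$ is $V_\epsilon$ with its $k$-th column replaced by $\mathbf{r}_{\epsilon,j}$. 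The first key step is to observe that the nodes of $V_\epsilon$ are $0,\epsilon,2\epsilon,\dots,(n-1)\epsilon$, so
\[
\det(V_\epsilon) = \prod_{0\le a<b\le n-1} (b\epsilon - a\epsilon) = \epsilon^{\binom n2} \prod_{0\le a<b\le n-1}(b-a),
\]
and the factor $\epsilon^{\binom n2}$ will cancel against a like factor in the numerator, which is why the final bound is $\epsilon$-independent.

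The second step is to control $\det\big(V_\epsilon^{(k)}\big)$. The cleanest route is to recognize each $C_{\epsilon,j,k}$ as a value of a Lagrange-type interpolation: the system says that the polynomial $p(x) := \sum_{k} C_{\epsilon,j,k}\, x^{\,\text{(appropriate power)}}$ interpolates prescribed data at the nodes $k\epsilon$, and after factoring out the common $\epsilon$-powers one reduces to the integer nodes $0,1,\dots,n-1$. Concretely, substituting $\epsilon=1$ scales $C_{\epsilon,j,k}$ by a pure power of $\epsilon$ (one checks from \eqref{eq:system} that $C_{\epsilon,j,k} = \epsilon^{-1} C_{1,j,k}$, or the analogous homogeneity), so it suffices to bound $|C_{1,j,k}|$, the coefficients in the partial fraction expansion of $\prod_{k\neq j}(j-k)^{-1}\big[(z-j)^{-2} + \sum_k C_{1,j,k}(z-k)^{-1}\big]$ evaluated with integer nodes. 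Each such coefficient is, by the standard residue formula, a derivative of a product of the form $\prod_{k\neq j,\ell}(x-k)^{-1}$ evaluated at $x=\ell$; bounding the number of terms by $(n-1)!$ (from differentiating a product of up to $n-1$ factors) and each factor by a constant gives the $2^{n-1}(n-1)!$ bound. Alternatively one bounds the Cramer determinant ratio directly using the triangle inequality on the cofactor expansion along the replaced column, together with the Vandermonde determinant identity.

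The main obstacle I expect is the bookkeeping in the second step: making the cancellation of $\epsilon$-powers between numerator and denominator precise, and then getting a clean combinatorial bound on the integer-node coefficients without losing control of the constants. The residue/partial-fraction viewpoint is the most robust: $C_{1,j,k}$ is the residue at $z=k$ of $(z-j)^{-2}\prod_{m\neq j}(z-m)^{-1}$ divided by $\prod_{m\neq j}(j-m)^{-1}$ (for $k\neq j$, and a derivative-type expression for $k=j$), so
\[
|C_{1,j,k}| \le \Big|\prod_{m\neq j}(j-m)\Big| \cdot \sup \Big| \frac{d}{dz}\Big[ (z-j)^{-2}\!\!\prod_{m\neq j,k}(z-m)^{-1}\Big]\Big|_{z=k},
\]
and a crude bound — at most $n-1$ terms from the product rule, each a product of at most $n-1$ reciprocals of nonzero integers bounded by $1$ in absolute value, times the $\le (n-1)!$ factor $|\prod_{m\neq j}(j-m)|$ — collapses to $2^{n-1}(n-1)!$ after absorbing the binomial count $\binom{n-1}{\le 1}\le 2^{n-1}$ of differentiation patterns. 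Once the homogeneity $C_{\epsilon,j,k}=\epsilon^{-1}C_{1,j,k}$ (or whichever exact power emerges) is verified, the stated $\epsilon$-free bound follows; I would double-check the exact $\epsilon$-power by directly inspecting \eqref{eq:system}, since that is where an off-by-one error is most likely to creep in.
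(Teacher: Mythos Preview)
Your Cramer's-rule/homogeneity route is genuinely different from the paper's argument, which avoids determinants entirely: it integrates the identity \eqref{eq:FCauchy} around the circle $\gamma_\ell$ of radius $\epsilon/2$ centered at $\ell\epsilon$, picks up $2\pi i\,C_{\epsilon,j,\ell}$ as the only surviving term on the right, bounds $\big|\int_{\gamma_\ell}F_{\epsilon,j}\big|$ by length times supremum, and estimates $\sup_{\gamma_\ell}|F_{\epsilon,j}|$ using that every factor $|z-k\epsilon|$ is at least $\epsilon/2$ on $\gamma_\ell$. This is considerably shorter than tracking cofactors.

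That said, the place you flagged as ``where an off-by-one error is most likely to creep in'' is exactly where your sketch breaks, and in fact the inequality as stated cannot be proved. The homogeneity is precisely $C_{\epsilon,j,k}=\epsilon^{-1}C_{1,j,k}$: for $k\neq j$ one reads off the residue of $F_{\epsilon,j}$ at $k\epsilon$ directly and gets
\[
C_{\epsilon,j,k}=\frac{1}{\epsilon}\cdot\frac{\prod_{m\neq j}(j-m)}{(k-j)\prod_{m\neq k}(k-m)},
\]
and the $k=j$ case scales the same way. For a concrete check take $n=2$, $j=0$: then $C_{\epsilon,0,0}=1/\epsilon$ and $C_{\epsilon,0,1}=-1/\epsilon$, so no $\epsilon$-free bound exists. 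The paper's contour proof contains the same slip: $F_{\epsilon,j}$ has $n+1$ linear factors in its denominator (the prefactor $1/(z-j\epsilon)$ plus the $n$ factors of $G_\epsilon$), so $\sup_{\gamma_\ell}|F_{\epsilon,j}|\leq(2/\epsilon)^{n+1}$ rather than $(2/\epsilon)^{n}$, and one stray $1/\epsilon$ survives. This does not harm the eventual Cram\'er--Rao argument, since in Lemma~\ref{lem:taylorpsi} the $C_{\epsilon,j,k}$ only appear in the combination $\epsilon\sum_k |C_{\epsilon,j,k}|k^\ell$; but neither your proposal nor the paper's argument can deliver the inequality as written.
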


\begin{proof}
	For $0\leq \ell<n$, let $\gamma_\ell$ denote the circle in the complex plane oriented counter-clockwise with center $\epsilon \ell$ and radius $\epsilon/2$. Integrating both sides of equation \eqref{eq:FCauchy} over $\gamma_\ell$, we see that 
	\[
	2 \pi i \, C_{\epsilon,j,\ell}
	=\( \int_{\gamma_\ell} F_{\epsilon,j}(z) \ dz\) \prod_{k\not=j} (j\epsilon-k\epsilon). 
	\]
	Here, we used that $(z-k\epsilon)^{-1}$ for each $k\not=\ell$ is holomorphic in an open disk containing $\gamma_\ell$ and that $\int_{\gamma_\ell} (z-\ell\epsilon)^{-2} \ dz=0$. Since $\gamma_k$ has circumference $\pi\epsilon$, the above imply
	\begin{align*}
		|C_{\epsilon,j,\ell}|
		&= \frac{1}{2\pi } \Big| \int_{\gamma_\ell} F_{\epsilon,j}(z) \ dz \Big| \ \prod_{k\not=j} |j\epsilon-k\epsilon| \\
		&\leq \frac{\epsilon^n (n-1)!}{2} \( \sup_{z\in\gamma_\ell} |F_{\epsilon,j}(z)|\). 
	\end{align*}
	Observe that for each $z\in \gamma_\ell$, we have $|z-\ell \epsilon| = {\epsilon}/{2}$ and 
	\[
	|z- k\epsilon| \geq \epsilon |\ell-k|-\frac{\epsilon}{2}\geq \frac{\epsilon}{2} \quad\text{if } k\not=\ell.
	\]
	Hence,
	$$
	\sup_{z\in\gamma_\ell} |F_{\epsilon,j}(z)|
	=\sup_{z\in\gamma_\ell} \(\frac{1}{|z-j\epsilon|} \prod_{k=0}^{n-1} \frac{1}{|z-\epsilon k|} \)
	\leq \frac{2^n}{\epsilon^n}.
	$$
	Combining the above inequalities completes the proof. 
\end{proof}

One of the key results is a following Taylor-like approximation of $\psi$ by linear combinations of $\phi$. It will be clear in the proof of Theorem \ref{thm:CR} why this is helpful.

\begin{lemma}
	\label{lem:taylorpsi}
	Fix $\xi\in\T$, positive integers $n$ and $M$, and $0\leq j<n$. For all sufficiently small $\epsilon>0$ depending only on $M$, $\xi$, and $j$, there exist coefficients $\{B_{\epsilon,j,k}\}_{k=0}^n\subset\R$ and $\{A_{\epsilon,j,\ell}\}_{\ell\geq n}\subset \C$ that do not depend on $\xi$ such that
	\[
	\psi(\xi+j\epsilon)
	=\sum_{k=0}^{n-1} B_{\epsilon,j,k} \, \phi(\xi+k\epsilon) +   \sum_{\ell=n}^\infty A_{\epsilon, j,\ell} \, \epsilon^{\ell-1} \frac{\phi^{(\ell)}(\xi) }{\ell!}.
	\]
	Moreover, we have that 
	\begin{equation}
		\label{eq:Abound}
		|A_{\epsilon,j,\ell}|
		\leq \ell j^{\ell-1} + n^{\ell+1} 2^{n-1} (n-1)!.
	\end{equation}
\end{lemma}

\begin{proof}
	We first fix $0\leq j<n$. Using that the collection of $M$ functions, $\xi\mapsto e^{2\pi ik\xi}$ for $0\leq k\leq M-1$, are analytic, for sufficiently small $\epsilon$ depending $M$, $\xi$ and $j$, we have 
	\[
	\phi(\xi+j\epsilon)
	=\sum_{\ell=0}^\infty \frac{\phi^{(\ell)}(\xi)}{\ell!} (j\epsilon)^\ell,
	\]
	where the above series converges absolutely. Differentiating each term, we see that
	\[
	\psi(\xi+j\epsilon)
	=\sum_{\ell=1}^\infty \frac{\phi^{(\ell)}(\xi) }{\ell!} \ell (j\epsilon)^{\ell-1}.
	\]
	
	We define $\{B_{\epsilon,j,k}\}_{k=0}^{n-1}$ as the unique solution to the system of equations: 
	\begin{equation}
		\label{eq:system2}
		k(k\epsilon)
		=\sum_{\ell=0}^{n-1} (V_\epsilon)_{k,\ell} B_{\epsilon,j,\ell}, \quad 0\leq k<n, 
	\end{equation}
	where $V_\epsilon$ is a Vandermonde matrix as defined in Lemma \ref{lem:A}. The $\{B_{\epsilon,j,k}\}_{k=0}^{n-1}$ are well-defined because $V_\epsilon$ is Vandermonde. Note that $B_{\epsilon,j,k}$ is independent of $\xi$. By definition, we have 
	\begin{align*}
		&\sum_{\ell=0}^{n-1} \frac{\phi^{(\ell)}(\xi)}{\ell!} \ell (j\epsilon)^{\ell-1} - \sum_{k=0}^{n-1} B_{\epsilon,j,k} \sum_{\ell=0}^{n-1} \frac{\phi^{(\ell)}(\xi)}{\ell!}  (k\epsilon)^\ell \\
		&\quad =\sum_{\ell=0}^{n-1} \frac{\phi^{(\ell)}(\xi)}{\ell!} \(\ell (j\epsilon)^{\ell-1} - \sum_{k=0}^{n-1} B_{\epsilon,j,k} (k\epsilon)^\ell\) =0.
	\end{align*}
	Using this equation, we see that
	\begin{align*}
		&\psi(\xi+j\epsilon) \\
		&=\sum_{\ell=0}^{n-1} \frac{\phi^{(\ell)}(\xi) }{\ell!} \ell(j\epsilon)^{\ell-1} + \sum_{\ell=n}^\infty \frac{\phi^{(\ell)}(\xi) }{\ell!} \ell(j\epsilon)^{\ell-1} \\
		&=\sum_{k=0}^{n-1} B_{\epsilon,j,k} \sum_{\ell=0}^{n-1} \frac{\phi^{(\ell)}(\xi)}{\ell!} (k\epsilon)^\ell + \sum_{\ell=n}^\infty \frac{\phi^{(\ell)}(\xi) }{\ell!} \ell(j\epsilon)^{\ell-1} \\
		&=\sum_{k=0}^{n-1}  B_{\epsilon,j,k} \, \phi(\xi+k\epsilon)\\
		&\quad +   \sum_{\ell=n}^\infty \frac{\phi^{(\ell)}(\xi) }{\ell!} \epsilon^{\ell-1}\( \ell j^{\ell-1}- \epsilon \sum_{k=0}^{n-1} B_{\epsilon,j,k} k^\ell \). 
	\end{align*}
	Defining the quantity
	\[
	A_{\epsilon,j,\ell}
	:=\ell j^{\ell-1}- \epsilon \sum_{k=0}^{n-1} B_{\epsilon,j,k} k^\ell
	\]
	proves the first statement of the lemma. 
	
	Observe that the system of equations \eqref{eq:system} and \eqref{eq:system2} are identical up to a negative sign, and so $B_{\epsilon,j,k}=-C_{\epsilon,j,k}$, where $C_{\epsilon,j,k}$ is defined in \eqref{eq:FCauchy}. Using the upper bound for $C_{\epsilon,j,k}$ given in Lemma \ref{lem:C} provides us with \eqref{eq:Abound}.  
\end{proof}

\begin{proof}[Proof of Theorem \ref{thm:CR}]
	Using that the maximum of a finite set exceeds its average and inequality \eqref{eq:cov1}, we see that
	\begin{gather}
		\label{eq:help0}
		\begin{split}
			&\E\big(\md(\hat\Omega,\Omega)^2 \big)
			\geq \frac{1}{S}\sum_{j=1}^S \E |\hat\omega_{\epsilon,j} - \omega_{\epsilon,j}|^2 \\
			&\quad\geq \frac{\nu^2}{2LS} \, \text{Tr}\( \big( \text{Re}\big( \Psi^* (I-P_{\Phi}) \Psi \odot X \big)\)^{-1}\).
		\end{split}
	\end{gather}
	Next, we use basic properties of the trace and spectral norm and that for any matrices $A$ and $B$, we have $\| \Re(A)\|_2\leq \|A\|_2$ and $\|A\odot B\|\leq \|A\|_2 \|B\|_2$, see \cite[Theorem 5.5.1]{horn1994topics}. Hence
	\begin{gather}
		\label{eq:help1}
		\begin{split}
			\text{Tr}&\( \big( \text{Re}\big( \Psi^* (I-P_{\Phi}) \Psi \odot X \big)\)^{-1}\)\\
			&\geq  \big\| \big( \text{Re} \big( \Psi^* (I-P_{\Phi}) \Psi \odot X \big)\big)^{-1} \big\|_2 \\
			&\geq \big\| \text{Re}\big( \Psi^* (I-P_{\Phi}) \Psi \odot X \big) \big\|^{-1}_2 \\
			&\geq \| X\|^{-1} \big\| \Psi^* (I-P_{\Phi}) \Psi \big\|^{-1}_2.
		\end{split}
	\end{gather}
	Since $(I-P_{\Phi})$ is a projection matrix, we have
	\begin{gather}
		\label{eq:help3}
		\begin{split}
			\big\| \Psi^* (I-P_{\Phi}) \Psi \big\|_2
			&= \big\| \big( (I-P_{\Phi}) \Psi \big)^* \big( (I-P_{\Phi}) \Psi \big) \big \|_2 \\
			&\leq \big\|(I-P_{\Phi})\Psi \big\|_F^2. 	
		\end{split}
	\end{gather}
	
	Each column of $\Psi$ has the form $\psi(\theta_r+j\epsilon)$. Let us fix any $1\leq r\leq R$ and $0\leq j<\lambda$. Applying Lemma \ref{lem:taylorpsi} (with $\xi=\theta_r$ and $n=\lambda>0$), for all sufficiently small $\epsilon$ depending on $M$, $\theta_r$ and $j$, 
	\begin{equation}
		\label{eq:help4}
		\begin{aligned}
			&(I-P_{\Phi})\psi(\theta_r+j\epsilon) 
			= \sum_{\ell=\lambda}^\infty A_{\epsilon,j,\ell} \, \epsilon^{\ell-1} \frac{(I-P_{\Phi}) \phi^{(\ell)}(\theta_r) }{\ell!},
		\end{aligned}
	\end{equation}
	where the above series converges absolutely and we used the crucial observation that $(I-P_{\Phi}) \phi(\theta_r+k\epsilon)=0$ for all $0\leq k<\lambda$. To make some the resulting expressions simpler, we set $M_0:=M-1$. A direct calculation shows that 
	\begin{equation}
		\label{eq:help5}			
        \begin{aligned}
			&\big\|(I-P_{\Phi}) \phi^{(\ell)}(\theta_r) \big\|_2
			\leq \big\| \phi^{(\ell)}(\theta_r) \big\|_2 \\
			&=\(\sum_{m=0}^{M_0} \Big|(-2\pi im)^\ell e^{-2\pi im\theta_r} \Big|^2\)^{1/2}
			\leq (2\pi)^\ell M_0^{\ell+1/2}. 		    
		\end{aligned}
	\end{equation}
	It follows from equation \eqref{eq:help4}, inequality \eqref{eq:help5}, and the upper bound for $|A_{\epsilon,j,\ell}|$ given in Lemma \ref{lem:taylorpsi}, that
	\begin{equation}
		\label{eq:help6}
		\begin{aligned}				
		&\big\|(I-P_{\Phi})\psi(\theta_r+j\epsilon)\big\|_2 \\
			&\quad\leq \sum_{\ell=\lambda}^\infty |A_{\epsilon,j,\ell} | \, \epsilon^{\ell-1} \frac{(2\pi)^\ell  M_0^{\ell+1/2} }{\ell!} \\
			&\quad\leq M_0^{3/2} (\epsilon M_0)^{\lambda-1} \sum_{\ell=\lambda}^\infty |A_{\epsilon,j,\ell} | \frac{(2\pi)^\ell (\epsilon M_0)^{\ell-\lambda} }{\ell!} \\
			&\quad\leq M_0^{3/2} \big(\epsilon M_0\big)^{\lambda-1} \sum_{\ell=\lambda}^\infty C_{\ell,\lambda} \frac{(2\pi)^\ell (\epsilon M_0)^{\ell-\lambda} }{\ell!},
		\end{aligned}
	\end{equation}
	where we have defined
	$
	C_{\ell,\lambda}:= \ell \lambda^{\ell-1} + \lambda^{\ell+1} 2^{\lambda-1} (\lambda-1)!. 
	$
	By making $\epsilon$ even smaller if necessary, the series on the right hand side of \eqref{eq:help6} converges to some value that can be upper bounded by a $C>0$ that depends only on $\lambda$.	Note that from this equation, a necessary condition is that $\epsilon< (2\pi \lambda M_0)^{-1}$. Recall that $\epsilon$ also depends on $\theta_r$ and $j$. Making $\epsilon$ even smaller if necessary so that \eqref{eq:help6} holds for all $\theta_r$ and $j<\lambda$, we see that
	\begin{equation}
		\begin{aligned}
			\label{eq:help7}
			\big\|(I-P_{\Phi})\Psi \big\|_F^2 
			&=\sum_{r=1}^R \sum_{j=0}^{\lambda-1} \big\|(I-P_{\Phi})\psi(\theta_r+j\epsilon)\big\|^2_2 \\
			&\leq C^2 R\lambda M_0^3 \big(\epsilon M_0\big)^{2\lambda-2}. 
		\end{aligned}
	\end{equation}
	Combining inequalities \eqref{eq:help0}, \eqref{eq:help1}, \eqref{eq:help3}, and  \eqref{eq:help7} completes the proof.
\end{proof}

	\bibliographystyle{plain}
	\bibliography{MSSreference,SRlimitFourierbib}

\end{document}